\newcommand{\red}{\color{red}}
\newtheorem{theorem}{Theorem}
\newtheorem{lemma}[theorem]{Lemma}
\newtheorem{remark}[theorem]{Remark}
\newtheorem{proposition}[theorem]{Proposition}
\newtheorem{definition}[theorem]{Definition}
\newtheorem{example}[theorem]{Example}
\newcommand{\bF}{{\mathbb{F}}}
\newcommand{\C}{{\mathcal{C}}}
\begin{document}
%
\title{ Two classes of reducible cyclic codes with large minimum symbol-pair distances
\thanks{ }
}

\author{Xiaoqiang Wang,\,\,\, Yue Su, \,\,\, Dabin Zheng,\,\,\, Wei Lu
\thanks{X. Wang, Y. Su, D. Zheng and W. Lu are with Hubei Key Laboratory of Applied Mathematics, Faculty of Mathematics and Statistics, Hubei University, Wuhan 430062, China, Email: waxiqq@163.com, yuesut@163.com, dzheng@hubu.edu.cn, d20230130@hubu.edu.cn.  The corresponding author is Dabin Zheng.}}

\maketitle

\begin{abstract}
The high-density data storage technology aims to design high-capacity storage at a relatively low cost.
In order to achieve this goal, symbol-pair codes were proposed by Cassuto and Blaum \cite{CB10,CB11} to handle channels that
output pairs of overlapping symbols. Such a channel is called symbol-pair
read channel, which introduce new concept called symbol-pair weight and minimum symbol-pair distance.
In this paper, we consider the parameters of two classes of reducible cyclic codes under the symbol-pair metric. Based on the theory of cyclotomic numbers and Gaussian period over finite fields, we show the possible symbol-pair weights of these codes. Their minimum symbol-pair distances are twice the minimum Hamming distances under some conditions. Moreover, we obtain some three symbol-pair weight codes and determine their symbol-pair weight distribution.
A class of MDS symbol-pair codes is also established. Among other results, we determine the values of some generalized cyclotomic numbers.
\end{abstract}

\IEEEpeerreviewmaketitle

\section{Introduction}\label{sec-auxiliary}

In 1948, in order to to solve the problem about reliable communication in an engineering problem, Shannon \cite{CE48} showed that good codes exist, gave birth to information theory and coding theory.  In 1950,  Hamming weight and Hamming distance were introduce  by Hamming, which gives the basis for classical error-correcting codes. Since than, a lot of works on Hamming metric of linear codes have been given. These works are based on that noisy channels are analyzed by dividing the messages into individual information units.

By the application of high-density data storage technologies, due to physical limitations, an individual symbol cannot be read off the channel and each channel read contains contributions from two adjacent symbols.  In order to
protect against pair errors in symbol-pair read channels,  symbol-pair codes were first studied by Cassuto and Blaum
\cite{CB10,CB11}. The minimum symbol-pair weight and minimum symbol-pair distance of a code over finite field $\bF_q$ were defined as follows:
Let $\mathbf{x} = (x_0, x_1,\cdots, x_{n-1})$ be a vector in $ \bF_q^n$ and
\begin{equation*}
\pi_p(\mathbf{x}) =[(x_0, x_1),(x_1, x_2),\ldots,(x_{n-2}, x_{n-1}),(x_{n-1}, x_0)] \in (\bF_q^2)^n.
\end{equation*}
Two pairs $(c,d)$ and $(e,f)$ are distinct if $c\neq e$ or $d\neq f$, or both. For any two codewords $\mathbf{c_1}$ and $\mathbf{c_2}$, the symbol-pair distance
between $\mathbf{c_1}$ and $\mathbf{c_2}$ is defined as ${d_p}(\mathbf{c_1}, \mathbf{c_2}) = d_H(\pi_p(\mathbf{c_1}),\pi_p(\mathbf{c_2})),$
where ${d_H}(\cdot)$ denotes the usual Hamming distance.
The minimum symbol-pair distance of a code $\C$ is defined to be
$$d_p(\C) = min\{d_p(\mathbf{c}_1,\mathbf{c_2}) \mid \mathbf{c}_1,\mathbf{c_2}\in \C,\,\,\mathbf{c}_1\neq \mathbf{c}_2 \}.$$ Accordingly,  the minimum symbol-pair weight of a code $\C$  is defined as
$$w_p(\C)=\min\{w_p({\bf c})\,|\, {\bf 0}\ne {\bf c}\in \C \},
$$ where
$$w_p({\bf c})=\Big|\{i\,|\,(c_i,c_{i+1})\,{\not=}\,(0,0),\,\, 0 \le i \le n-1, \,\,c_n=c_0\}\Big|.$$

It is clear that any cyclic code $\C$ of length $n$ over $\mathbb{F}_q$ corresponds to a subset of the quotient ring
$\mathbb{F}_q[x]/\langle x^n-1\rangle$. Since every ideal of $\mathbb{F}_q[x]/\langle x^n-1\rangle$ must be principal, $\C$ can be expressed as $\C=\langle g(x)\rangle$, where
$g(x)$ is a monic polynomial with the smallest degree and is called the generator polynomial. Let $h(x)=(x^n-1)/g(x)$, then $h(x)$ is referred to as the check polynomial of $\C$.
The zeros of $g(x)$ and $h(x)$ are called zeros and non-zeros of $\C$. If $g(x)$ is  irreducible over $\mathbb{F}_q$, then $\mathcal{C}$ is called an irreducible cyclic code. Otherwise, $\mathcal{C}$ is called a reducible cyclic code.
Let $A^p_i$ denote the number of nonzero codewords with symbol-pair weight $i$ in $\mathcal{C}$. Similar as Hamming weight enumerator and Hamming weight distribution, the symbol-pair weight enumerator and symbol-pair weight distribution are defined by $1+A^p_1x+A^p_2x^2+\cdots+A^p_nx^n$ and  $(1, A^p_1, \ldots, A^p_n)$, respectively. If the number of nonzero $A^p_i$ in the sequence $(1, A^p_1, \ldots, A^p_n)$ is equal to $t$, then  the code $\mathcal{C}$ is said to be a $t$ symbol-pair weight code.

In recent years, many researchers focused on symbol-pair codes and made a lot of progress on this topic. In \cite{CB10,CB11}, Cassuto and Blaum  provided constructions and decoding methods of
symbol-pair codes. Soon later,  Chee $et$ $al$. \cite{CJKWY13} established the Singleton type bound of symbol-pair codes as follows: Let $\C$ be a symbol-pair code of length $n$ with size $M$ and
minimum pair-distance $d_p$ over $\mathbb{F}_q$, then $M\leq q^{n-d_p+2}$, where $2 \leq d_p \leq n$. If the equality holds, then $\C$ is called an optimal code with respect
to Singleton type bound, or an MDS symbol-pair code. After establishing the Singleton type Bound,  a lot of  MDS symbol-pair codes
are constructed by different kinds of techniques such as algebraic, geometry, graph and so on. The reader is referred to, for example, \cite{Chen17,DingGe18,Dinh22,KZL15,LiGe17,MaLuo22} for information.
The research on other bounds of symbol-pair codes have also received the widespread attention. The Gilbert-Varshamov type bound is established in \cite{CB11} and the
Plotkin-type bound is derived in \cite{Wang20}. Elishco et al. \cite{Elishco20} gave the Johnson-type bound for binary symbol-pair codes with even minimum
symbol-pair distance, then Chen and Liu \cite{Chen23} extended the result to $q$-ary symbol-pair codes with arbitrary minimum symbol-pair distance. In addition, Elishco et al. \cite{Elishco20} gave the linear programming type bound and Yan et al. \cite{Yan20} showed the Elias type upper bound. With these bounds, some new optimal symbol-pair codes were also constructed. The reader is referred to, for example, \cite{Chen23,Elishco20,Wang20} for information. Meanwhile, some decoding algorithms of symbol-pair codes have been studied by many researchers. These work have been developed in several aspects. The reader can refer to \cite{Horii16,Hirotomo14,Liu19,Hirotomo16,Takita17,Yaakobi12,Yaakobi16} and references therein.

In coding theory,  many researchers focused on construction of cyclic codes with few weights and determined their Hamming weight distribution.
However, there seems very few research about determining the symbol-pair weight distribution of cyclic codes since it is a very difficult problem.   To the best of our knowledge, up to now, the contributions on the symbol-pair weight distribution of linear codes are the following:

\begin{itemize}

\item Shi et al. \cite{Shi2021,Zhu2021,Zhu2022} studied the symbol-pair weight distribution of several classes of cyclic codes.

\item Youcef \cite{aouche2023} determined the symbol-pair weight distribution of some of irreducible cyclic codes.

\item Vega \cite{Vega23} showed the symbol-pair weight distribution of all one symbol-pair weight and two  symbol-pair weight semiprimitive irreducible cyclic codes.

\item Ma and Luo \cite{Ma21} gave the symbol-pair weight distribution for MDS codes and simplex codes.

\end{itemize}

In \cite{CB10,CB11}, the authors pointed out that for any code $\mathcal{C}$ of length $n$, the minimum symbol-pair distance $d_p(\C)$ and the minimum Hamming distance $d_H(\C)$ satisfy the relationship $d_H(\C)+ 1 \leq d_p(\C) \leq 2d_H(\C)$ if $d_H(\C)<n$.
So far,  there seems no researchers to consider that for some good cyclic codes under the Hamming metric, which code can
achieve the largest minimum symbol-pair distance compared with their minimum Hamming distance, i.e., $d_p(\C)=2d_H(\C)$. In \cite{CMa2010}, the authors  considered two classes of reducible cyclic codes, and showed that these codes have good parameters,  and closely related to projective sets in finite geometry and strongly regular graphs \cite{Calderbank84} and have applications in
cryptography \cite{Yuan06}. In this paper, we continue studying the cyclic codes proposed in~\cite{CMa2010} under the symbol-pair metric. The possible symbol-pair weights of these codes are obtained and the codes achieves the highest minimum symbol-pair distances compared with their minimum Hamming distances in many cases. In addition, we obtain some three symbol-pair weight codes and determine their symbol-pair weight distribution. By the puncturing technique, a class of MDS symbol-pair codes is also established. Moreover, the values of some generalized cyclotomic numbers are determined.

The rest of this paper is organized as follows. Section II introduces some preliminaries. In Section III, we determine the symbol-pair weight distribution of the first class of cyclic codes. In Section IV, we show the possible symbol-pair weights of the second class of cyclic codes. For some special cases, we obtain their symbol-pair weight distribution.  In Section V, we obtain a class of symbol-pair MDS codes by the puncture technique. Section \ref{sec-finals} concludes the paper.

\section{Preliminaries}\label{sec-auxiliary}

In this section, we introduce some
basic concepts, which will be used later in this paper.
Starting from now on, we adopt the following notation unless otherwise stated:
\begin{itemize}

\item $\mathbb{F}_q$ is the finite field with $q$ elements, where ${q=p^s}$, $p$ is a prime and $s\geq 1$ is a positive integer.

\item $\mathbb{F}_r$ is the finite field with $r$ elements, where ${r=q^m}$ and $m\geq 2$ is a positive integer.

\item $e>1$ and $h$ are positive integers satisfying $e\,\, |\,\,h$ and $h\,\, |\,\,q-1$.

\item $g=\alpha^{(q-1)/ h}$ and $\beta=\alpha^{(r-1)/e}$, where $\alpha$ is a primitive element of $\bF_{r}$.

\item $\chi_1$ and $\chi$ are the canonical additive character of $\bF_{q}$ and $\bF_{r}$, respectively.

\item ${\rm Tr}_u^v(\cdot)$ is the trace function from $\bF_v$ to $\bF_{u}$, where $\bF_v$ and $\bF_{u}$ are finite fields with $v$ and $u$ elements, respectively.

\item $\C_{(q,m,h,e)}$ is the cyclic code of length $n=\frac{h(r-1)}{q-1}$ with check polynomial $m_{g^{-1}}(x)m_{(\beta g)^{-1}}(x)$, where $m_{g^{-1}}(x)$ and $m_{(\beta g)^{-1}}(x)$ are the minimum polynomials of $g^{-1}$ and $(\beta g)^{-1}$, respectively.

\end{itemize}

\subsection{The generic construction of cyclic codes}

In the subsection, a generic construction of cyclic codes with two nonzeros is introduced, which was first given
by Ma et al. in \cite{CMa2010}. Recall that $\C_{(q,m,h,e)}$ is the cyclic code of length $n=\frac{h(r-1)}{q-1}$ with the check polynomial $m_{g^{-1}}(x)m_{(\beta g)^{-1}}(x)$. By the Delsarte's theorem,
the code $\C_{(q,m,h,e)}$ can be expressed as
\begin{equation}\label{cqmhe}
 \C_{(q,m,h,e)}=\left\{{\bf c}(a,b)\,:\,a,b \in \bF_{r}\right\},
\end{equation}
where
$${\bf c}(a,b)=({\rm Tr}_q^r(ag^0+b(\beta g)^0),{\rm Tr}_q^r(ag+b(\beta g)),\ldots,{\rm Tr}_q^r(ag^{n-1}+b(\beta g)^{n-1})).$$
It is clear that the symbol-pair weight of the codeword ${\bf c}(a,b)$  is equal to
\begin{equation}\label{eq:cab}
{\rm wt_p}( {\bf c}(a,b))=n-T(a,b),
\end{equation}
 where
\begin{equation*}
T(a,b)=\left|\left\{0\leqslant i \leqslant n-1\,:\,{\rm Tr}^r_q(ag^i+b(\beta g)^i)=0\,\, \text{and}\,\,{\rm Tr}^r_q(ag^{i+1}+b(\beta g)^{i+1})=0\right\}\right|.
\end{equation*}
Since $\langle g \rangle=\langle\alpha^{(q-1)/h}\rangle$ and $\{g^i\,:\,0\leqslant i \leqslant n-1\}= C_0^{((q-1)/h,r)}$, we have
\begin{equation*}
\begin{aligned}
T(a,b)&=\left|\left\{x \in C_0^{((q-1)/h,r)}\,:\,{\rm Tr}^r_q(ax+b\beta^{log_gx}x)=0\,\, \text{and}\,\,{\rm Tr}^r_q(agx+b\beta^{log_ggx
}gx))=0\right\}\right|\\
&=\tfrac{1}{q^2}\sum_{x \in C_0^{((q-1)/h,r)}}\sum_{z \in \bF_{q}} \chi_1(z{\rm Tr}^r_q(ax+b\beta^{log_gx}x))\sum_{y \in \bF_{q}} \chi_1(y{\rm Tr}^r_q(agx+b\beta^{log_ggx}gx))\\
&=\tfrac{1}{q^2}\sum_{x \in C_0^{((q-1)/h,r)}}\sum_{z \in \bF_{q}}\chi(z(ax+b\beta^{log_gx}x))\sum_{y \in \bF_{q}} \chi(y(agx+b\beta^{log_ggx}gx)).
\end{aligned}
\end{equation*}
It is easy to check that
$$C_0^{((q-1)/h,r)}=\bigcup_{i=0}^{e-1}C_{(q-1)i/h}^{((q-1)e/h,r)},$$
then we obtain
\begin{equation*}
\begin{aligned}
T(a,b)
=&\tfrac{1}{q^2}\sum_{i=0}^{e-1}\sum_{x \in C_{(q-1)i/h}^{((q-1)e/h,r)}}\sum_{z \in \bF_{q}}\chi(axz+\beta^i bxz)\sum_{y \in \bF_{q}}\chi(agxy+\beta^{i+1} bgxy)\\
=&\tfrac{h(r-1)}{q^2(q-1)}+\tfrac{2}{q^2}\sum_{i=0}^{e-1}\sum_{x \in C_{(q-1)i/h}^{((q-1)e/h,r)}}\sum_{z \in \bF_{q}^\ast}\chi(xz(a+\beta^i b))\\
&+\tfrac{1}{q^2}\sum_{i=0}^{e-1}\sum_{x \in C_{(q-1)i/h}^{((q-1)e/h,r)}}\sum_{z,y \in \bF_{q}^\ast}\chi(xz(a(1+gy)+\beta^{i}b(1+\beta gy))).
\end{aligned}
\end{equation*}
Let $d={\rm gcd}(m,\tfrac{e(q-1)}{h})$, from Lemma \ref{lem:GHW} we can get the following multiset equality
\begin{equation*}
\begin{aligned}
\left\{xz:x \in C_{(q-1)i/h }^{((q-1)e/h,r)},z \in \bF_{q}^\ast\right\}=\tfrac{hd}{e}\ast C_{(q-1)i/h}^{(d,r)}.
\end{aligned}
\end{equation*}
Then we have
\begin{equation}\label{eq:Tab}
\begin{split}
T(a,b)=&\tfrac{h(r-1)}{q^2(q-1)}+\tfrac{2hd}{eq^2}\sum_{i=0}^{e-1}\sum_{x \in C_{(q-1)i/h}^{(d,r)}}\chi(x(a+\beta^i b))\\
&+\tfrac{hd}{eq^2}\sum_{i=0}^{e-1}\sum_{x \in C_{(q-1)i/h}^{(d,r)}}\sum_{y \in \bF_{q}^\ast}\chi(x(a(1+gy)+\beta^i b(1+\beta gy))).
\end{split}
\end{equation}
In general, it is very difficult to determine the value distribution of $T(a,b)$ for $(a,b)$ running through $(\mathbb{F}_r,\mathbb{F}_r)$.
In the sequel we will compute the value distribution of $T(a,b)$ for $m$ and $\frac{e(q-1)}{h}$ satisfying some special relationships,  and thus the symbol-pair weight distribution of the code $\C_{(q,m,h,e)}$ is determined from (\ref{eq:cab}).

In order to determine the exponential sum $T(a, b)$ we need to introduce cyclotomic classes, cyclotomic numbers and Gaussian periods.

\subsection{Cyclotomic classes and cyclotomic numbers}

Let $N$ be a non-trivial divisor of $r-1$. Define $C_{i}^{(N,r)}=\alpha^i\langle\alpha^N\rangle$ for $i=0,1,2,\cdots, N-1$, where $\langle\alpha^N\rangle$ denotes
the subgroup of $\mathbb{F}_r^*$ generated by $\alpha^N$. The cosets $C_{i}^{(N,r)}$ are called the cyclotomic classes of order $N$ in $\bF_{r}$. Then we have the following lemmas, which were first given in \cite{CMa2010} and \cite{CDing2010}.
\begin{lemma}\cite{CMa2010}\label{lem:GHW}
Let $i$ be any integer with $0\leq i<e$. We have the following multiset equality:
\begin{equation*}
\begin{aligned}
\left\{xy:y \in \mathbb{F}_q^\ast,x \in C_{i}^{(e,r)}\right\}=\tfrac{q-1}{e}gcd(m,e) \ast C_{i}^{(gcd(m,e),r)},
\end{aligned}
\end{equation*}
where $\tfrac{q-1}{e}gcd(m,e) \ast C_{i}^{(gcd(m,e),r)}$ is the multiset in which each element of $C_{i}^{(gcd(m,e),r)}$ appears with multiplicity $\tfrac{q-1}{e}gcd(m,e)$.
\end{lemma}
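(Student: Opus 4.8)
The plan is to argue multiplicatively inside the cyclic group $\bF_r^\ast=\langle\alpha\rangle$, which has order $r-1$, and to split the asserted multiset identity into a statement about the underlying set and a statement about the common multiplicity of its members. Throughout I will use the standing hypothesis $e\mid h\mid q-1$, so in particular $e\mid q-1\mid r-1$.

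First I would record the two subgroups in play: $C_i^{(e,r)}=\alpha^iA$ is a coset of $A:=\langle\alpha^e\rangle$, which has order $(r-1)/e$, and $\bF_q^\ast$ is the unique subgroup of $\bF_r^\ast$ of order $q-1$, namely $B:=\langle\alpha^{(r-1)/(q-1)}\rangle$. Hence the left-hand multiset $\{xy:y\in\bF_q^\ast,\ x\in C_i^{(e,r)}\}$ is precisely the image multiset of the map $A\times B\to\bF_r^\ast$, $(a,b)\mapsto\alpha^iab$. The main obstacle, and indeed the only step carrying real content, is the arithmetic identity $\gcd\bigl(e,\tfrac{r-1}{q-1}\bigr)=\gcd(e,m)$: since $e\mid q-1$ we have $q\equiv1\pmod e$, so $\tfrac{r-1}{q-1}=1+q+\cdots+q^{m-1}\equiv m\pmod e$, and the identity follows at once.

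Granting the identity, the underlying set is immediate from the elementary fact that $\langle\alpha^a\rangle\langle\alpha^b\rangle=\langle\alpha^{\gcd(a,b)}\rangle$ in any cyclic group: here $AB=\langle\alpha^{\gcd(e,(r-1)/(q-1))}\rangle=\langle\alpha^{\gcd(m,e)}\rangle$, so $\alpha^iAB=C_i^{(\gcd(m,e),r)}$. For the multiplicity I would invoke the standard counting lemma in an abelian group: every element of $AB$ has exactly $|A\cap B|$ representations as $ab$ with $a\in A$, $b\in B$ (if $g=a_0b_0$, the representations are precisely $g=(a_0t^{-1})(tb_0)$ for $t\in A\cap B$), and translating by $\alpha^i$ shows each element of $C_i^{(\gcd(m,e),r)}$ occurs with multiplicity $|A\cap B|$. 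It then remains to evaluate $A\cap B=\langle\alpha^{\lcm(e,(r-1)/(q-1))}\rangle$ (valid since $e\mid r-1$ and $\tfrac{r-1}{q-1}\mid r-1$), whose order is $(r-1)/\lcm\bigl(e,\tfrac{r-1}{q-1}\bigr)$; using $\lcm(u,v)=uv/\gcd(u,v)$ together with the identity above, this equals $(q-1)\gcd(m,e)/e=\tfrac{q-1}{e}\gcd(m,e)$, which is exactly the claimed multiplicity. As a final sanity check one verifies $|C_i^{(e,r)}|\cdot|\bF_q^\ast|=\tfrac{(r-1)(q-1)}{e}=|C_i^{(\gcd(m,e),r)}|\cdot\tfrac{q-1}{e}\gcd(m,e)$.
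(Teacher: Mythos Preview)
Your argument is correct. The key arithmetic identity $\gcd\bigl(e,\tfrac{r-1}{q-1}\bigr)=\gcd(e,m)$ is properly justified (since $e\mid q-1$ forces $1+q+\cdots+q^{m-1}\equiv m\pmod e$), and the subgroup computations $AB=\langle\alpha^{\gcd(e,(r-1)/(q-1))}\rangle$ and $A\cap B=\langle\alpha^{\lcm(e,(r-1)/(q-1))}\rangle$ are valid because both $e$ and $(r-1)/(q-1)$ divide $r-1$. The multiplicity count via the fiber lemma $|A\cap B|$ is the standard argument for products of subgroups in an abelian group.

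As for comparison: the paper does not give its own proof of this lemma. It is quoted verbatim from \cite{CMa2010} and used as a black box in deriving the expression for $T(a,b)$. So there is no in-paper argument to set yours against; your self-contained derivation is a welcome addition rather than a deviation.
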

\begin{lemma}\cite{CDing2010}\label{lem:GHWq}
Let $q\equiv1 \pmod N$ and $m\equiv0 \pmod N$, then $\bF_{q}^\ast \subset C_{0}^{(N,r)}$.
\end{lemma}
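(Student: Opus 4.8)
The plan is to realize both $\bF_q^\ast$ and $C_0^{(N,r)}$ as explicit cyclic subgroups of the cyclic group $\bF_r^\ast=\langle\alpha\rangle$ and then to compare their orders. Since $r=q^m$, the field $\bF_q$ is a subfield of $\bF_r$, so $\bF_q^\ast$ is exactly the unique subgroup of $\bF_r^\ast$ of order $q-1$; concretely $\bF_q^\ast=\langle\alpha^{(r-1)/(q-1)}\rangle$. On the other hand, $C_0^{(N,r)}=\langle\alpha^N\rangle$ is by definition the subgroup of order $(r-1)/N$, and it coincides with the solution set $\{x\in\bF_r^\ast:x^{(r-1)/N}=1\}$: one inclusion is immediate since every $N$-th power $x=y^N$ satisfies $x^{(r-1)/N}=y^{r-1}=1$, and the reverse follows by comparing cardinalities. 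In a cyclic group one subgroup lies inside another precisely when the order of the former divides the order of the latter, so the desired inclusion $\bF_q^\ast\subseteq C_0^{(N,r)}$ is equivalent to the divisibility $(q-1)\mid (r-1)/N$, i.e.\ to $N(q-1)\mid r-1$.

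The substantive step is to verify this divisibility from the two hypotheses. Expanding the geometric sum,
$$\frac{r-1}{q-1}=\frac{q^m-1}{q-1}=1+q+q^2+\cdots+q^{m-1},$$
and since $q\equiv 1\pmod N$ each of the $m$ summands is $\equiv 1\pmod N$, whence $\frac{r-1}{q-1}\equiv m\pmod N$. The hypothesis $N\mid m$ then gives $N\mid \frac{r-1}{q-1}$, which is exactly the statement $(q-1)\mid \frac{r-1}{N}$. Combining this with the previous paragraph yields $\bF_q^\ast\subseteq C_0^{(N,r)}$, as required. Equivalently, one can argue directly that the generator $\alpha^{(r-1)/(q-1)}$ of $\bF_q^\ast$ is an $N$-th power in $\bF_r^\ast$ because $N$ divides its exponent $(r-1)/(q-1)$.

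I expect no genuine obstacle here; the only points requiring care are the standing requirement that $N$ be a divisor of $r-1$ (so that $C_0^{(N,r)}$ is well defined, which is part of the hypotheses of this subsection) and the elementary but crucial congruence $\frac{r-1}{q-1}\equiv m\pmod N$, which is what forces both hypotheses to enter. The lemma will then be available in the subsequent sections to guarantee that multiplying by elements of $\bF_q^\ast$ keeps one inside the principal cyclotomic class $C_0^{(N,r)}$, precisely the manipulation needed when reducing the exponential sums $T(a,b)$.
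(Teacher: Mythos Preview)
Your argument is correct. The paper itself does not supply a proof of this lemma; it is quoted from \cite{CDing2010} without proof, so there is no ``paper's own proof'' to compare against. Your approach---identifying $\bF_q^\ast$ and $C_0^{(N,r)}$ as cyclic subgroups of $\bF_r^\ast=\langle\alpha\rangle$ and reducing the inclusion to the divisibility $N\mid (r-1)/(q-1)$ via the congruence $1+q+\cdots+q^{m-1}\equiv m\pmod N$---is the standard and natural one, and it is complete as written.
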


 The cyclotomic numbers of order $N$ are defined by
\[(i,j)^{(N,r)}=\left|(1+C_{i}^{(N,r)})\cap C_{j}^{(N,r)}\right|,\]
where $0\leq i \leq N-1$ and $0\leq j \leq N-1$.  It is very difficult to determine cyclotomic numbers in general. Until now, cyclotomic numbers are known only for a few special cases.
When $N=2$, the cyclotomic numbers are given in the following lemma.
\begin{lemma}\cite{CMa2010}
Let the symbols be given as above. The cyclotomic numbers of order $2$ are given by
\begin{enumerate}
\item[(1)] $(0,0)^{(2,r)}$=$\tfrac{r-5}{4}$\,\,\text{\rm and}\,\,$(0,1)^{(2,r)}=(1,0)^{(2,r)}=(1,1)^{(2,r)}=\tfrac{r-1}{4}$\,\, ${\rm if} \,\,r \equiv1 \pmod 4$.
\item[(2)] $(0,0)^{(2,r)}=(1,0)^{(2,r)}=(1,1)^{(2,r)}=\tfrac{r-3}{4}$\,\,\text{\rm and}\,\,$(0,1)^{(2,r)}=\tfrac{r+1}{4}$\,\, ${\rm if} \,\,r \equiv3 \pmod 4$.
\end{enumerate}
\end{lemma}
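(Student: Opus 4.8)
The plan is to reduce everything to a small linear system among the four cyclotomic numbers $(0,0)^{(2,r)}$, $(0,1)^{(2,r)}$, $(1,0)^{(2,r)}$, $(1,1)^{(2,r)}$, using only the elementary structure of the quadratic residues of $\bF_r$. First I would record that, by definition, $C_0^{(2,r)}=\langle\alpha^2\rangle$ is the set of nonzero squares of $\bF_r$ and $C_1^{(2,r)}=\alpha\langle\alpha^2\rangle$ is the set of nonsquares, so $|C_0^{(2,r)}|=|C_1^{(2,r)}|=\frac{r-1}{2}$. Moreover $1\in C_0^{(2,r)}$ always, while $-1=\alpha^{(r-1)/2}$ lies in $C_0^{(2,r)}$ exactly when $(r-1)/2$ is even, i.e.\ when $r\equiv 1\pmod 4$, and in $C_1^{(2,r)}$ when $r\equiv 3\pmod 4$. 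These three membership facts are precisely what separates the two cases of the statement.

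Next I would establish three families of identities. Counting the elements of $1+C_i^{(2,r)}$ according to the class they fall into, and discarding the element $0$ when $-1\in C_i^{(2,r)}$, gives the row sums $(0,0)^{(2,r)}+(0,1)^{(2,r)}=\frac{r-1}{2}-\varepsilon_0$ and $(1,0)^{(2,r)}+(1,1)^{(2,r)}=\frac{r-1}{2}-\varepsilon_1$, where $\varepsilon_i=1$ if $-1\in C_i^{(2,r)}$ and $\varepsilon_i=0$ otherwise. Dually, fixing the target class $C_j^{(2,r)}$ and letting the base element range over all of $\bF_r^\ast$ (removing $1$ when $1\in C_j^{(2,r)}$, i.e.\ when $j=0$) yields the column sums $(0,0)^{(2,r)}+(1,0)^{(2,r)}=\frac{r-1}{2}-1$ and $(0,1)^{(2,r)}+(1,1)^{(2,r)}=\frac{r-1}{2}$. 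Finally, the map $x\mapsto x^{-1}$ restricts to a bijection of $C_1^{(2,r)}$ onto itself sending $\{x\in C_1^{(2,r)}:1+x\in C_0^{(2,r)}\}$ onto $\{x\in C_1^{(2,r)}:1+x\in C_1^{(2,r)}\}$ (since $1+x^{-1}=(1+x)x^{-1}$ and $x^{-1}\in C_1^{(2,r)}$), which gives the symmetry $(1,0)^{(2,r)}=(1,1)^{(2,r)}$.

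With these relations the conclusion is pure bookkeeping. If $r\equiv 1\pmod 4$ then $\varepsilon_0=1$ and $\varepsilon_1=0$, so the row sum for $i=1$ together with $(1,0)^{(2,r)}=(1,1)^{(2,r)}$ forces $(1,0)^{(2,r)}=(1,1)^{(2,r)}=\frac{r-1}{4}$; substituting into the first column sum gives $(0,0)^{(2,r)}=\frac{r-1}{2}-1-\frac{r-1}{4}=\frac{r-5}{4}$, and then the first row sum gives $(0,1)^{(2,r)}=\frac{r-1}{4}$, which is case~(1). If $r\equiv 3\pmod 4$ then $\varepsilon_0=0$ and $\varepsilon_1=1$, and the same three steps give $(1,0)^{(2,r)}=(1,1)^{(2,r)}=\frac{r-3}{4}$, then $(0,0)^{(2,r)}=\frac{r-3}{4}$, then $(0,1)^{(2,r)}=\frac{r+1}{4}$, which is case~(2); the remaining identity $(0,1)^{(2,r)}+(1,1)^{(2,r)}=\frac{r-1}{2}$ serves as a consistency check. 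There is no serious obstacle here; the only point requiring care is keeping straight which of $0$, $1$, $-1$ belongs to which cyclotomic class, since it is exactly those $\pm1$ corrections, together with the divisibility of $r-1$ or $r-3$ by $4$, that distinguish the two cases.
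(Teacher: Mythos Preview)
Your argument is correct. The row sums, column sums, and the inversion symmetry $(1,0)^{(2,r)}=(1,1)^{(2,r)}$ via $x\mapsto x^{-1}$ on $C_1^{(2,r)}$ are all valid, and the arithmetic in each case checks out.

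Note, however, that the paper does not supply its own proof of this lemma: it is stated as a citation from \cite{CMa2010} and used as a black box. So there is no in-paper argument to compare against. Your proof is the standard elementary derivation of the order-$2$ cyclotomic numbers and is entirely appropriate here; the only thing to be aware of is that since the paper treats the result as imported, a referee would likely accept a bare citation in lieu of the argument you have written out.
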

In  order to obtain the symbol-pair weight distribution of some cyclic codes, the generalized cyclotomic numbers are needed, which was first given in \cite{aouche2023}.
\begin{definition}\cite{aouche2023}
Let $l$ and $f$ be two non-trivial divisor of $r-1$. We define the generalized cyclotomic numbers of order $(l,f)$ in $\bF_{r}$ by
\[(i,j)^{(l,f,r)}=\left|(1+C_{i}^{(l,r)})\cap C_{j}^{(f,r)}\right|\]
for all $0\leq i\leq l-1$ and $0\leq j \leq f-1$.
\end{definition}
When $l=f$, the generalized cyclotomic numbers are consistent with the cyclotomic numbers. The following lemma gives the fundamental properties of generalized cyclotomic numbers,
which will be used in the proof of the later theorem.
\begin{lemma}\cite{aouche2023}\label{lem:5}
Let $l$ and $f$ be two non-trivial divisors of $r-1$ with $f$ dividing $l$. The cyclotomic numbers of order $(l,f)$ have the following properties.
\begin{enumerate}
\item[(1)] $(i,j)^{(l,f,r)}$=$(-i,j-i)^{(l,f,r)}.$
\item[(2)] $\sum_{i=0}^{l-1}(i,j)^{(l,f,r)}$=$\begin{cases}
(r-1)/f-1, &{\rm if}\ j=0, \\
(r-1)/f,   &{\rm  otherwise.}
\end{cases}$
\item[(3)] $\sum_{j=0}^{{f}-1}(i,j)^{(l,f,r)}$=$\begin{cases}
(r-1)/l-1, &{\rm if}\ -1 \in C_{i}^{(l,r)}, \\
(r-1)/l,   &{\rm  otherwise.}
\end{cases}$
\end{enumerate}
\end{lemma}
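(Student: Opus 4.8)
The plan is to deduce all three identities from elementary counting over the partition of $\bF_r^\ast$ into cyclotomic classes, together with a single algebraic substitution for part~(1). Throughout I will use the elementary consequence of $f\mid l$ that $C_i^{(l,r)}\subseteq C_{i\bmod f}^{(f,r)}$, and hence that the product of an element of $C_i^{(l,r)}$ by an element of $C_j^{(f,r)}$ lies in $C_{i+j}^{(f,r)}$, with indices read modulo $f$.

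For part~(1), I would unwind the definition: $(i,j)^{(l,f,r)}$ counts the $x\in C_j^{(f,r)}$ that can be written as $x=1+y$ with $y\in C_i^{(l,r)}$. Dividing the relation $1+y=x$ by $y$ yields $1+y^{-1}=xy^{-1}$, where $y^{-1}\in C_{-i}^{(l,r)}$ and, by the remark above, $xy^{-1}\in C_{j-i}^{(f,r)}$. Thus $x\mapsto xy^{-1}$ carries the set counted by $(i,j)^{(l,f,r)}$ into the set counted by $(-i,j-i)^{(l,f,r)}$, and since $(y,x)\mapsto(y^{-1},xy^{-1})$ is an involution it is a bijection; this gives the symmetry.

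For part~(2), I would sum the defining sets over $i$. Since $C_0^{(l,r)},\ldots,C_{l-1}^{(l,r)}$ partition $\bF_r^\ast$, the translates $1+C_i^{(l,r)}$ partition $\bF_r\setminus\{1\}$, so $\sum_{i=0}^{l-1}(i,j)^{(l,f,r)}=\bigl|(\bF_r\setminus\{1\})\cap C_j^{(f,r)}\bigr|$. This equals $|C_j^{(f,r)}|=(r-1)/f$, except that one subtracts $1$ precisely when $1\in C_j^{(f,r)}$, i.e. when $j=0$. Part~(3) is the dual computation: summing over $j$ and using that $C_0^{(f,r)},\ldots,C_{f-1}^{(f,r)}$ partition $\bF_r^\ast$ gives $\sum_{j=0}^{f-1}(i,j)^{(l,f,r)}=\bigl|(1+C_i^{(l,r)})\cap\bF_r^\ast\bigr|$, which is $|1+C_i^{(l,r)}|=(r-1)/l$ minus one exactly when $0\in 1+C_i^{(l,r)}$, i.e. when $-1\in C_i^{(l,r)}$; this is the stated formula.

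I do not anticipate a real obstacle here, since the whole argument is bookkeeping with cyclotomic cosets. The only points that need care are the two boundary corrections --- deleting the element $1$ in part~(2) and the element $0$ in part~(3) --- and checking that the substitution $y\mapsto y^{-1}$ in part~(1) genuinely respects the coset indices, which is exactly where the hypothesis $f\mid l$ enters through the containment $C_i^{(l,r)}\subseteq C_{i\bmod f}^{(f,r)}$.
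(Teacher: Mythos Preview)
Your proof is correct. The paper itself does not supply a proof of this lemma---it is quoted from \cite{aouche2023}---so there is no in-paper argument to compare against; your elementary coset-counting approach (the involution $y\mapsto y^{-1}$ for part~(1) and the two partition arguments for parts~(2) and~(3)) is exactly the standard way to establish these identities and goes through without issue.
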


\subsection{Gaussian periods}

 Recall that $\chi_1$ and $\chi$ are the canonical additive character of $\bF_{q}$ and $\bF_{r}$, respectively. Then we have
\begin{equation*}
\begin{aligned}
\chi_1(c_0)=e^{2\pi i {\rm Tr}_p^q(c_0)/p}\,\,\,\,\, \text{and}\,\,\,\,\,\chi(c_1)=e^{2\pi i {\rm Tr}_p^r(c_1)/p},
\end{aligned}
\end{equation*}
where $c_0 \in \mathbb{F}_q$ and  $c_1 \in \mathbb{F}_r$, respectively. Hence, $\chi_1$ and $\chi$ are connected by the identity
\begin{equation*}
\begin{aligned}
\chi_1({\rm Tr}_q^r(c_1))=\chi(c_1)  \,\,\,\,\,{\rm for}\ {\rm any}\ c_1 \in \bF_{r}.
\end{aligned}
\end{equation*}
By the definition of canonical characters, it is clear that
\begin{equation*}
\begin{aligned}
\sum_{c_1 \in \bF_{r}}\chi(c_1x)=\begin{cases}
r, &{\rm if}\ x=0,\\
0, &{\rm if}\ x\neq0.
\end{cases}
\end{aligned}
\end{equation*}
Let
 $N$ be a non-trivial divisor of $r-1$. The Gaussian periods are defined to be
\begin{equation*}
\begin{aligned}
\eta_{i}^{(N,r)}=\sum_{x \in C_{i}^{(N,r)}}\chi(x), &&&i=0,1,2,\ldots,N-1.
\end{aligned}
\end{equation*}

In general, the values of the Gaussian period are difficult to calculate, but it can be calculated in some special cases, such as $N=2$.
The following lemma directly follows from \cite[Lemma~4]{CMa2010}.

\begin{lemma}\label{lem:GHW11}
Let $q=p^s$ and $r=q^m$. When $N=2$, the Gaussian periods are given by the following:
\begin{equation}
\begin{aligned}
	 	\eta_{0}^{(2,r)}=\begin{cases}
	 	    \frac{-1-q^{\frac{m}{2}}}{2}, & {\rm if}\,\,p\equiv1\,\,{\rm (mod \,\,4)}\,\,{\rm and}\,\,  sm\equiv 0\,\,{\rm (mod \,\,2)},\,\, {\rm or}\,\, p\equiv3\,\,{\rm (mod \,\,\,4)} \,\,{\rm and}\,\, sm \equiv 0\,\,{\rm (mod \,\,4)},\\
	 	    \frac{q^{\frac{m}{2}}-1}{2}, & {\rm if}\,\,p\equiv1\,\,{\rm (mod \,\,4)}\,\,{\rm and}\,\,  sm\equiv 1\,\,{\rm (mod \,\,2)},\,\, {\rm or}\,\, p\equiv3\,\,{\rm (mod \,\,\,4)} \,\,{\rm and}\,\, sm \equiv 2\,\,{\rm (mod \,\,4)},\\
	 	    \frac{-1-iq^{\frac{m}{2}}}{2}, & {\rm if}\,\,p\equiv3\,\,{\rm (mod \,\,4)}\,\,{\rm and}\,\,  sm\equiv 3\,\,{\rm (mod \,\,4)},\\
	 	    \frac{iq^{\frac{m}{2}}-1}{2}, & {\rm if}\,\,p\equiv3\,\,{\rm (mod \,\,4)}\,\,{\rm and}\,\,  sm\equiv 1\,\,{\rm (mod \,\,4)}
\end{cases}\nonumber
\end{aligned}
\end{equation}
and $\eta_{1}^{(2,r)}=-1-\eta_{0}^{(2,r)},$ where $i^2=-1$.
\end{lemma}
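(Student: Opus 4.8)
The plan is to reduce the evaluation of $\eta_0^{(2,r)}$ and $\eta_1^{(2,r)}$ to the classical quadratic Gauss sum over $\bF_r$. First I would record the two linear relations that determine the pair. Since $C_0^{(2,r)}\cup C_1^{(2,r)}=\bF_r^\ast$ and $\sum_{x\in\bF_r}\chi(x)=0$, we obtain
$$
\eta_0^{(2,r)}+\eta_1^{(2,r)}=\sum_{x\in\bF_r^\ast}\chi(x)=-1,
$$
which already gives $\eta_1^{(2,r)}=-1-\eta_0^{(2,r)}$. Writing $\psi$ for the quadratic multiplicative character of $\bF_r$ (so that $\psi\equiv 1$ on $C_0^{(2,r)}$ and $\psi\equiv -1$ on $C_1^{(2,r)}$), the difference is a Gauss sum:
$$
\eta_0^{(2,r)}-\eta_1^{(2,r)}=\sum_{x\in\bF_r^\ast}\psi(x)\chi(x)=:G(\psi).
$$
Solving the two relations gives $\eta_0^{(2,r)}=\tfrac12\big(G(\psi)-1\big)$, so the whole statement reduces to computing $G(\psi)$.

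For the second step I would invoke the classical evaluation of the quadratic Gauss sum together with the Davenport--Hasse lifting relation. Put $t=sm$, so that $r=p^{t}$ and $q^{m/2}=p^{t/2}=\sqrt r$. Over the prime field, Gauss's classical evaluation gives $G_p=\sqrt p$ when $p\equiv 1\pmod 4$ and $G_p=\sqrt{-1}\cdot\sqrt p$ when $p\equiv 3\pmod 4$, where $G_p$ is the quadratic Gauss sum of $\bF_p$. The quadratic character $\psi$ of $\bF_r$ is the lift of the quadratic character of $\bF_p$ along the norm $\Norm_{\bF_r/\bF_p}$ (it is nontrivial of order $2$ since the norm is surjective), so the Davenport--Hasse relation gives $G(\psi)=(-1)^{t-1}G_p^{\,t}$. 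When $p\equiv1\pmod 4$ this equals $(-1)^{t-1}p^{t/2}$, namely $-q^{m/2}$ if $sm$ is even and $q^{m/2}$ if $sm$ is odd. When $p\equiv 3\pmod 4$ it equals $(-1)^{t-1}(\sqrt{-1})^{t}p^{t/2}=-(-\sqrt{-1})^{t}\,q^{m/2}$, and computing $-(-\sqrt{-1})^{t}$ according to $t\bmod 4$ gives $-q^{m/2}$, $\sqrt{-1}\,q^{m/2}$, $q^{m/2}$, $-\sqrt{-1}\,q^{m/2}$ for $t\equiv 0,1,2,3\pmod 4$ respectively. Substituting these six cases into $\eta_0^{(2,r)}=\tfrac12\big(G(\psi)-1\big)$ yields exactly the four displayed formulas — the case $(p\equiv1,\ sm\text{ even})$ merges with $(p\equiv3,\ sm\equiv0)$, and $(p\equiv1,\ sm\text{ odd})$ merges with $(p\equiv3,\ sm\equiv2)$ — and then $\eta_1^{(2,r)}=-1-\eta_0^{(2,r)}$ finishes the proof.

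The argument carries no genuine difficulty; the only points that need care are fixing the sign convention in the Davenport--Hasse relation and reducing the exponent $t=sm$ modulo $4$ correctly in the $p\equiv 3\pmod 4$ branch so that the real and purely imaginary cases align with the statement. In fact this lemma is essentially a restatement of \cite[Lemma~4]{CMa2010}, so an equally valid and shorter route is to quote that result directly after matching the notation $N=2$ and $r=q^{m}=p^{sm}$.
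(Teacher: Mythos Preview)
Your argument is correct: the reduction $\eta_0^{(2,r)}=\tfrac12(G(\psi)-1)$ is standard, and your application of the Davenport--Hasse lift together with Gauss's evaluation of $G_p$ yields exactly the four displayed values after the modular case analysis on $t=sm$. The paper itself does not supply a proof at all; it simply states that the lemma follows from \cite[Lemma~4]{CMa2010}, which is precisely the shortcut you mention at the end, so your write-up is in fact more detailed than what the paper provides.
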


\section{The symbol-pair weight distribution of the first class of cyclic codes}\label{sec3}
In this section, we main consider the symbol-pair weight distribution of
$\mathcal{C}_{(q,m,h,e)}$ for the case that $m>2$ is a positive integer and ${\rm gcd}(m,\tfrac{e(q-1)}{h})=1$, where $q$ is a prime power. In order to obtain the desired result, we first present the following lemma.

\begin{lemma}\label{lem:yy0709}
Let $m>2$, $0\leq i_1,i_2\leq e-1$ and $y_1,y_2\in \bF_{q}^\ast$, then $\frac{\beta^{i_1}(1+\beta gy_1)}{1+gy_1}=\frac{\beta^{i_2}(1+\beta gy_2)}{1+gy_2}$ if and only if $i_1=i_2$ and $y_1=y_2$.
\end{lemma}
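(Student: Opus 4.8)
\emph{Proof strategy.} The ``if'' direction is immediate, so I focus on proving that the displayed identity forces $i_1=i_2$ and $y_1=y_2$. The plan hinges on a single structural fact about $g$: since $h\mid q-1$, the integer $(q-1)/h$ divides $q-1$ and hence divides $r-1$, so $g=\alpha^{(q-1)/h}$ has multiplicative order exactly
\[
n=\frac{h(r-1)}{q-1}=h\bigl(q^{m-1}+q^{m-2}+\cdots+q+1\bigr)
\]
in $\bF_{r}^{\ast}$. For $m>2$ this order is at least $q^{2}+q+1>q^{2}-1$, so $n\nmid q^{2}-1$ and hence $g\notin\bF_{q^{2}}$; equivalently, $1,g,g^{2}$ are linearly independent over $\bF_{q}$. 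I would also record two elementary facts used below: $\beta=\alpha^{(r-1)/e}$ has order $e$, which divides $q-1$, so $\beta\in\bF_{q}^{\ast}$ and $\beta\neq1$; and $1+gy_j\neq0$ whenever $y_j\in\bF_{q}^{\ast}$, since $1+gy_j=0$ would give $g=-y_j^{-1}\in\bF_{q}$, contradicting $g\notin\bF_{q^{2}}$.

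Granting this, I would clear the (nonzero) denominators, rewriting the hypothesis as $\beta^{i_1}(1+\beta gy_1)(1+gy_2)=\beta^{i_2}(1+\beta gy_2)(1+gy_1)$, and then expand both products and group by powers of $g$ to obtain
\[
\bigl(\beta^{i_1}-\beta^{i_2}\bigr)+\bigl(\beta^{i_1}y_2+\beta^{i_1+1}y_1-\beta^{i_2}y_1-\beta^{i_2+1}y_2\bigr)g+y_1y_2\bigl(\beta^{i_1+1}-\beta^{i_2+1}\bigr)g^{2}=0 .
\]
Every coefficient here lies in $\bF_{q}$ (because $\beta,y_1,y_2\in\bF_{q}$), so the linear independence of $1,g,g^{2}$ over $\bF_{q}$ forces all three to vanish. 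From $\beta^{i_1}=\beta^{i_2}$ together with $0\leq i_1,i_2\leq e-1$ and $\ord(\beta)=e$ I conclude $i_1=i_2=:i$; substituting this into the coefficient of $g$ leaves $\beta^{i}(y_2-y_1)(1-\beta)=0$, and since $\beta^{i}\neq0$ and $\beta\neq1$ this gives $y_1=y_2$, completing the argument.

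The one genuinely substantive step is the order computation placing $g$ outside $\bF_{q^{2}}$; everything after clearing denominators is routine bookkeeping with the displayed relation. This is also precisely where the hypothesis $m>2$ enters — for $m=2$ one has $g\in\bF_{q^{2}}$, the triple $1,g,g^{2}$ can become dependent, and the conclusion may then fail — so the restriction on $m$ is essential rather than cosmetic.
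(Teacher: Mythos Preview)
Your proof is correct. Both your argument and the paper's hinge on the same structural fact---that $\ord(g)=h(r-1)/(q-1)$ does not divide $q^{2}-1$ when $m>2$---but the executions differ. The paper argues by contradiction: assuming $i_1\neq i_2$, it rearranges the cross-multiplied identity into the statement $g^{-1}+\beta g y_1y_2\in\bF_{q}$, then applies the $q$-th power Frobenius to deduce $g^{-(q+1)}=\beta y_1y_2\in\bF_{q}$ and hence $g^{q^{2}-1}=1$, contradicting the order computation; the case $i_1=i_2$ is then handled separately. You instead translate the order fact directly into the linear independence of $1,g,g^{2}$ over $\bF_{q}$, expand the cross-multiplied identity once, and read off both conclusions from the vanishing coefficients. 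Your route is shorter and avoids the case split and the Frobenius manipulation; the paper's route has the minor advantage of making the role of $g^{q^{2}-1}$ explicit, which ties in with their later Remark on $m=2$.
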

\begin{proof}
 If $i_1=i_2$ and $y_1=y_2$, it is clear that $\frac{\beta^{i_1}(1+\beta gy_1)}{1+gy_1}=\frac{\beta^{i_2}(1+\beta gy_2)}{1+gy_2}$. We now prove that  $i_1=i_2$ and $y_1=y_2$ if $\frac{\beta^{i_1}(1+\beta gy_1)}{1+gy_1}=\frac{\beta^{i_2}(1+\beta gy_2)}{1+gy_2}$.

If $i_1\neq i_2$, without loss of generality, we assume that $i_2> i_1$. Let $k=i_2-i_1$, from $\frac{\beta^{i_1}(1+\beta gy_1)}{1+gy_1}=\frac{\beta^{i_2}(1+\beta gy_2)}{1+gy_2}$, we obtain that
\[(1+\beta gy_1)(1+gy_2)=\beta^k(1+\beta gy_2)(1+gy_1),\]
which is the same as
\[ (\beta^k-1)+(\beta^k-\beta)gy_1+(\beta^{k+1}-1)gy_2+(\beta^{k+1}-\beta)g^2y_1y_2=0.\]
Then,
\[(\beta^k-1)(1+\beta g^2y_1y_2)=-\beta(\beta^{k-1}-1)gy_1-(\beta^{k+1}-1)gy_2.\]
Since $\beta \in \bF_{q}$ and $\beta^k \neq 1$ for  $ 0< k \leq e-1$, we have
\begin{equation*}
\tfrac{1+\beta g^2y_1y_2}{g}=\tfrac{-\beta(\beta^{k-1}-1)y_1-(\beta^{k+1}-1)y_2}{(\beta^k-1)} \in \mathbb{F}_q.
\end{equation*}
So, we obtain
\begin{equation}\label{y1y20707}
\left(\frac{1+\beta g^2y_1y_2}{g}\right)^q=(g^{-1}+\beta gy_1y_2)^q=g^{-1}+\beta gy_1y_2.
\end{equation}
It is obvious that
\begin{equation}\label{y1y2}
(g^{-1}+\beta gy_1y_2)^q=g^{-q}+(\beta gy_1y_2)^q=g^{-q}+\beta y_1y_2g^q.
\end{equation}
Combining (\ref{y1y20707}) and (\ref{y1y2}), we have
\begin{equation}\label{y1y20708}
g^{-1}-g^{-q}=\beta y_1y_2(g^q-g).
\end{equation}
It is clear that $g^{q+1}(g^{-1}-g^{-q})=g^q-g$. Since $g \notin \bF_{q}$, from (\ref{y1y20708}) we have $g^{-(q+1)}=\beta y_1y_2 \in \mathbb{F}_q$, i.e., $g^{1-q^2}=1$. By the definition of $g$, we know that
${\rm ord}(g)=\frac{h(r-1)}{q-1},$ then $\frac{h(r-1)}{q-1}\, | \,  q^2-1$, which is contradictory to $r=q^m$ and $m>2$. Hence, we have $i_1=i_2$ if $\frac{\beta^{i_1}(1+\beta gy_1)}{1+gy_1}=\frac{\beta^{i_2}(1+\beta gy_2)}{1+gy_2}$.

If $i_1=i_2$, then $\frac{\beta^{i_1}(1+\beta gy_1)}{1+gy_1}=\frac{\beta^{i_2}(1+\beta gy_2)}{1+gy_2}$ becomes $\frac{1+\beta gy_1}{1+gy_1}=\frac{1+\beta gy_2}{1+gy_2}$,
which is the same as
\[(1+\beta gy_1)(1+gy_2)=(1+gy_1)(1+\beta gy_2).\]
Hence, we have
\[(\beta-1)gy_1=(\beta-1)gy_2,\]
i.e., $y_1=y_2$ since $\beta-1\neq 0$. The desired conclusion then follows.
\end{proof}

\begin{remark}\label{rem:8}
When $m=2$, Lemma \ref{lem:yy0709} does not hold. By the similar way as in Lemma \ref{lem:yy0709}, we can obtain the following result. Let $m=2$, $0\leq i_1,i_2,i_3\leq e-1$ and $y_1,y_2,y_3\in \bF_{q}^\ast$, then $\frac{\beta^{i_1}(1+\beta gy_1)}{1+gy_1}=\frac{\beta^{i_2}(1+\beta gy_2)}{1+gy_2}=\frac{\beta^{i_3}(1+\beta gy_3)}{1+gy_3}$ if and only if $i_1=i_2=i_3$ and $y_1=y_2=y_3$.
\end{remark}

To determine the symbol-pair weight distribution of $\C_{(q,m,h,e)}$, we next deal with the value
distribution of the exponential sum $T(a,b)$, which is given in (\ref{eq:Tab}).
\begin{proposition}\label{pro-01}
Let $m>2$ be a positive integer and ${\rm gcd}(m,\tfrac{e(q-1)}{h})=1$. When $(a,b)$ runs through $(\mathbb{F}_r,\mathbb{F}_r)$, the value distribution of $T(a,b)$ is given as follows:
$$T(a,b)=\left\{\begin{array}{llll}
\frac{h(r-1)}{q-1}, &  1 & {\rm time,}\\
\tfrac{h(q^{m-2}-1)}{q-1},  &(r-1)(r+1-eq) &{\rm times,}\\
\tfrac{h(er+2rq-2r-eq^2)}{eq^2(q-1)}, &e(r-1) &{\rm times,}\\
\tfrac{h(er+rq-r-eq^2)}{eq^2(q-1)},&e(q-1)(r-1) &{\rm times.}
         \end{array}
\right.
$$
\end{proposition}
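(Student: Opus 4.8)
The plan is to start from the closed form of $T(a,b)$ in (\ref{eq:Tab}) and, using $d=\gcd(m,\tfrac{e(q-1)}{h})=1$, simplify it drastically. When $d=1$ we have $C_{(q-1)i/h}^{(d,r)}=C_0^{(1,r)}=\bF_r^\ast$ for every $i$, so the inner sums become full additive character sums over $\bF_r^\ast$, which equal $-1$ when the argument coefficient is nonzero and $r-1$ when it is zero. Concretely, $\sum_{x\in\bF_r^\ast}\chi(xu)$ is $r-1$ if $u=0$ and $-1$ otherwise. Applying this to the two sums in (\ref{eq:Tab}) reduces $T(a,b)$ to
\begin{equation*}
T(a,b)=\tfrac{h(r-1)}{q^2(q-1)}+\tfrac{2h}{eq^2}\sum_{i=0}^{e-1}S_i(a,b)+\tfrac{h}{eq^2}\sum_{i=0}^{e-1}\sum_{y\in\bF_q^\ast}U_i(a,b,y),
\end{equation*}
where $S_i(a,b)\in\{r-1,-1\}$ according to whether $a+\beta^i b=0$, and $U_i(a,b,y)\in\{r-1,-1\}$ according to whether $a(1+gy)+\beta^i b(1+\beta gy)=0$. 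So the whole problem is reduced to \emph{counting}, for each pair $(a,b)$, how many of the $e$ conditions $a+\beta^i b=0$ hold and how many of the $e(q-1)$ conditions $a(1+gy)+\beta^i b(1+\beta gy)=0$ hold.

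The next step is the combinatorial heart of the argument. First, the conditions $a+\beta^i b=0$ for $i=0,\dots,e-1$: since the $\beta^i$ are distinct, at most one can hold, and exactly one holds precisely when $b\neq0$ and $a/b=-\beta^i\in\langle\beta\rangle$ — i.e. when $-a/b\in C_0^{(e,r)}$; it also holds in the degenerate way $a=b=0$. Second, the condition $a(1+gy)+\beta^i b(1+\beta gy)=0$: rearranged, when $b\neq0$ this says $a/b=-\beta^i(1+\beta gy)/(1+gy)$, and this is exactly the place where Lemma~\ref{lem:yy0709} enters — it guarantees that the $e(q-1)$ values $\beta^i(1+\beta gy)/(1+gy)$ (for $0\le i\le e-1$, $y\in\bF_q^\ast$) are pairwise distinct, so at most one of the $e(q-1)$ conditions can hold for a given $(a,b)$. (One must also handle $1+gy=0$, impossible since $g\notin\bF_q$, and the case $b=0$, which forces $a(1+gy)=0$ hence $a=0$.) Thus, away from $(a,b)=(0,0)$, each of the three sums contributes $r-1$ in at most one slot and $-1$ elsewhere, and there are exactly four mutually exclusive regimes for $(a,b)\neq(0,0)$: (i) $-a/b\in C_0^{(e,r)}$ \emph{and} simultaneously $-a/b$ equals one of the $\beta^i(1+\beta gy)/(1+gy)$; (ii) $-a/b\in C_0^{(e,r)}$ but not of the second form; (iii) $-a/b$ of the second form but not in $C_0^{(e,r)}$; (iv) neither. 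Plugging the corresponding values of the three sums into the displayed formula gives the four stated values of $T(a,b)$, and $(a,b)=(0,0)$ gives the top value $T=n=\tfrac{h(r-1)}{q-1}$.

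The last step is the frequency count. The number of pairs $(a,b)$ with $b\neq0$ and $-a/b$ in a prescribed subset $D\subseteq\bF_r^\ast$ is $(r-1)|D|$; with $b=0$ and $a\neq0$ there are $r-1$ pairs, which fall into regime (iv). Regime (ii)/(iii) counts hinge on $|C_0^{(e,r)}|=(r-1)/e$ and on the number $e(q-1)$ of distinct second-type ratios, together with the size of their intersection with $C_0^{(e,r)}$. The cleanest route is: regime (i) has $e(q-1)$ values of $-a/b$ (all the second-type ratios happen to be $e$-th... more precisely, one checks $\beta^i(1+\beta gy)/(1+gy)$ ranges over a set meeting each relevant coset so that all $e(q-1)$ of them qualify for regime (i) — this is the one spot needing care), giving $e(q-1)(r-1)$ pairs; regime (iii) is then empty; regime (ii) has $(r-1)/e - (q-1)$ values... and here the arithmetic must be reconciled with the claimed frequencies $e(r-1)$ and $(r-1)(r+1-eq)$ for the remaining two values, and with the total $r^2$. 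I expect the main obstacle to be exactly this bookkeeping: correctly identifying which of the $e(q-1)$ second-type ratios lie in $C_0^{(e,r)}$ versus outside it, so that the two ``generic'' frequencies come out as $e(r-1)$ and $(r-1)(r+1-eq)$; once the four regimes are pinned down as disjoint with the right sizes, summing $r^2=1+e(q-1)(r-1)+e(r-1)+(r-1)(r+1-eq)$ is a routine consistency check, and substituting into (\ref{eq:cab}) would then yield the symbol-pair weights.
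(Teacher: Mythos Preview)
Your overall strategy is the paper's: with $d=\gcd(m,\tfrac{e(q-1)}{h})=1$ each cyclotomic class in (\ref{eq:Tab}) becomes $\bF_r^\ast$, the inner sums collapse to $r-1$ or $-1$, and $T(a,b)$ is determined by how many of the linear conditions $a+\beta^i b=0$ and $a(1+gy)+\beta^i b(1+\beta gy)=0$ hold. Invoking Lemma~\ref{lem:yy0709} to show the $e(q-1)$ ratios $\beta^i(1+\beta gy)/(1+gy)$ are distinct is also exactly right.

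The genuine gap is your identification of the first set. The condition ``$a+\beta^i b=0$ for some $i$'' means $-a/b\in\langle\beta\rangle=\{1,\beta,\dots,\beta^{e-1}\}$, a set of size $e$. This is \emph{not} the cyclotomic class $C_0^{(e,r)}=\langle\alpha^e\rangle$ of size $(r-1)/e$; you have conflated the two, and all the subsequent speculation about ``all second-type ratios being $e$-th powers'' and ``regime (ii) having $(r-1)/e-(q-1)$ values'' stems from this. With the correct set $\langle\beta\rangle$ in hand, the intersection question becomes trivial rather than delicate: if $\beta^j=\beta^i(1+\beta gy)/(1+gy)$ then either $j\equiv i$ forces $(\beta-1)gy=0$, impossible, or $j\not\equiv i$ rearranges to $g=(1-\beta^{j-i})/((\beta^{j-i}-\beta)y)\in\bF_q$ (since $\beta\in\bF_q$), also impossible. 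Hence $\langle\beta\rangle$ and $D=\{\beta^i(1+\beta gy)/(1+gy)\}$ are \emph{disjoint}, your regime (i) is empty, and the nonzero pairs $(a,b)$ split cleanly into just three cases: $-a/b\in\langle\beta\rangle$ (yielding $e(r-1)$ pairs and the value $\tfrac{h(er+2rq-2r-eq^2)}{eq^2(q-1)}$), $-a/b\in D$ (yielding $e(q-1)(r-1)$ pairs and the value $\tfrac{h(er+rq-r-eq^2)}{eq^2(q-1)}$), and everything else including $ab=0$ with $(a,b)\ne(0,0)$ (yielding $(r-1)(r+1-eq)$ pairs and the generic value $\tfrac{h(q^{m-2}-1)}{q-1}$). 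No further bookkeeping is needed.
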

\begin{proof}
Since $d=\gcd(m,\tfrac{e(q-1)}{h})=1$, we have $C_{(q-1)i/h}^{(d,r)}=\bF_{r}^\ast$ for $0\leq i\leq e-1$. Then $T(a,b)$ in (\ref{eq:Tab}) can be  rewritten as
\begin{equation*}
\begin{aligned}
T(a,b)=\tfrac{h(r-1)}{q^2(q-1)}+\tfrac{2h}{eq^2}\sum_{i=0}^{e-1}\sum_{x \in \bF_{r}^\ast}\chi(x(a+\beta^i b))+\tfrac{h}{eq^2}\sum_{i=0}^{e-1}\sum_{x \in \bF_{r}^\ast}\sum_{y \in \bF_{q}^\ast}\chi(x(a(1+gy)+\beta^i b(1+\beta gy))).
\end{aligned}
\end{equation*}
 Let
\begin{equation}\label{eq:D0708}
D=\left\{\frac{\beta^i(1+\beta gy)}{1+gy}:i=0,1,2,\ldots,e-1, \,\,y \in \bF_{q}^\ast\right\}.
\end{equation}
It is obvious that $\langle\beta\rangle \cap D={\O}$, where $\langle\beta\rangle$ denotes
the subgroup of $\mathbb{F}_r^*$ generated by $\beta$. Otherwise, there exists $\beta^j$ such that
\begin{equation}\label{j0708}
\frac{1+\beta gy}{1+gy}=\beta^j,
\end{equation}
where $0\leq j\leq e-1$. If $j=1$, then (\ref{j0708}) does not hold. If $j\neq 1$, then $\beta^j+\beta^j gy=1+\beta gy$, i.e., $g=\frac{\beta^j-1}{(\beta-\beta^j)y}\in \mathbb{F}_q$ since $\beta \in \mathbb{F}_q$, which is contradictory to $g \notin \mathbb{F}_q$. We now prove this theorem from the following six cases.

\noindent{\bf Case 1:} $a=0$ and $b=0$. In this case, by the definition of $T(a,b)$, we have
\begin{equation*}
\begin{aligned}
T(a,b)=\tfrac{h(r-1)}{q-1}.
\end{aligned}
\end{equation*}

\noindent{\bf Case 2:} $a\neq 0$ and $b=0$. In this case, $T(a,b)$ becomes
\begin{equation*}
\begin{aligned}
T(a,b)&=\tfrac{h(r-1)}{q^2(q-1)}+\tfrac{2h}{eq^2}\sum_{i=0}^{e-1}\sum_{x \in \bF_{r}^\ast}\chi(ax)+\tfrac{h}{eq^2}\sum_{i=0}^{e-1}\sum_{x \in \bF_{r}^\ast}\sum_{y \in \bF_{q}^\ast}\chi(xa(1+gy))\\
&=\tfrac{h(r-1)}{q^2(q-1)}+\tfrac{2h}{eq^2}e(-1)+\tfrac{h}{eq^2}e(q-1)(-1)\\
&=\tfrac{h(q^{m-2}-1)}{q-1}.
\end{aligned}
\end{equation*}
In the second equality, we used the fact that for any $y \in \bF_{q}^\ast$, we have  $1+gy\neq0$.

\noindent{\bf Case 3:} $a=0$ and $b\neq0$. In this case, $T(a,b)$ becomes
\begin{equation*}
\begin{aligned}
T(a,b)&=\tfrac{h(r-1)}{q^2(q-1)}+\tfrac{2h}{eq^2}\sum_{i=0}^{e-1}\sum_{x \in \bF_{r}^\ast}\chi(\beta^i bx)+\tfrac{h}{eq^2}\sum_{i=0}^{e-1}\sum_{x \in \bF_{r}^\ast}\sum_{y \in \bF_{q}^\ast}\chi(x\beta^i b(1+\beta gy)).
\end{aligned}
\end{equation*}
By a discussion similar to Case 2, we have $T(a,b)=\frac{h(q^{m-2}-1)}{q-1}$.

\noindent{\bf Case 4:}  $a \neq 0$, $b \neq 0$, $-\tfrac{a}{b} \notin \langle\beta\rangle$ and $-\tfrac{a}{b} \notin D$, where $D$ is given in (\ref{eq:D0708}).
By an analysis similar to Case 2, we obtain $T(a,b)=\frac{h(q^{m-2}-1)}{q-1}$.

\noindent{\bf Case 5:}  $a \neq 0$, $b\neq 0$ and $-\frac{a}{b}\in D$. From Lemma \ref{lem:yy0709}, we know that for different pairwise $(i,y)$, the values of $\frac{\beta^{i}(1+\beta gy)}{1+gy}$ are different, where $0\leq i\leq e-1$ and $y \in \mathbb{F}_q^\ast$. Then
\begin{equation*}
\begin{aligned}
T(a,b)&=\tfrac{h(r-1)}{q^2(q-1)}+\tfrac{2h}{eq^2}\sum_{i=0}^{e-1}\sum_{x \in \bF_{r}^\ast}\chi(x(a+\beta^i b))+\tfrac{h}{eq^2}\sum_{i=0}^{e-1}\sum_{x \in \bF_{r}^\ast}\sum_{y \in \bF_{q}^\ast}\chi(x(a(1+gy)+\beta^ib(1+\beta gy)))\\
&=\tfrac{h(r-1)}{q^2(q-1)}+\tfrac{2h}{eq^2}e(-1)+\tfrac{h}{eq^2}[r-1+(q-2)(-1)+(e-1)(q-1)(-1)]\\
&=\tfrac{h(er+rq-r-eq^2)}{eq^2(q-1)}.
\end{aligned}
\end{equation*}
In the second equality, we used the fact that $a+\beta^i b\neq0$ for $0\leq i\leq e-1$ since $|\langle\beta\rangle|=e$ and $\langle\beta\rangle \cap D={\O}$.

\noindent{\bf Case 6:} $a \neq 0$, $b \neq 0$ and $-\tfrac{a}{b} \in \langle\beta\rangle$. By a discussion similar to Case 5, we have
\begin{equation*}
\begin{aligned}
T(a,b)&=\tfrac{h(r-1)}{q^2(q-1)}+\tfrac{2h}{eq^2}\sum_{i=0}^{e-1}\sum_{x \in \bF_{r}^\ast}\chi(x(a+\beta^i b))+\tfrac{h}{eq^2}\sum_{i=0}^{e-1}\sum_{x \in \bF_{r}^\ast}\sum_{y \in \bF_{q}^\ast}\chi(x(a(1+gy)+\beta^ib(1+\beta gy)))\\
&=\tfrac{h(r-1)}{q^2(q-1)}+\tfrac{2h}{eq^2}[r-1+(e-1)(-1)]+\tfrac{h}{eq^2}e(q-1)(-1)\\
&=\tfrac{h(er+2rq-2r-eq^2)}{eq^2(q-1)}.
\end{aligned}
\end{equation*}
From all the cases, we obtain that
\begin{equation}
\begin{aligned}
	 	T(a,b)=\begin{cases}
	 	    \tfrac{h(r-1)}{q-1}, &{\rm if}\,\,\, a=0\,\,\,{\rm and}\,\,\, b=0,\\
            \tfrac{h(er+2rq-2r-eq^2)}{eq^2(q-1)}, &{\rm if}\,\,\, a\neq0,\,\,\,b\neq 0\,\,\,{\rm and}\,\,\,-\tfrac{a}{b}\in\langle\beta\rangle,\\
	 		\tfrac{h(er+rq-r-eq^2)}{eq^2(q-1)}, &{\rm if}\ a\neq0,\,\,\,b\neq 0\,\,\,{\rm and}\,\,\,-\tfrac{a}{b}\in D,\\
\tfrac{h(q^{m-2}-1)}{q-1}, &{\rm otherwise.}
	 	\end{cases}\nonumber
\end{aligned}
\end{equation}
From Lemma \ref{lem:yy0709}, we know that $|D|=e(q-1)$. Since $|\langle\beta\rangle|=e$ and $\langle\beta\rangle \cap D={\O}$,
the desired conclusion then follows.
\end{proof}

From above results, one can get the following theorem.

\begin{theorem}\label{thm:1}
Let $\C_{(q,m,h,e)}$ be the cyclic code defined in (\ref{cqmhe}).
Let $m>2$ and $gcd(m,\tfrac{e(q-1)}{h})=1$, then $\C_{(q,m,h,e)}$ is an $\left[\tfrac{h(r-1)}{q-1},2m\right]$ code with the symbol-pair weight enumerator
\begin{equation}
\begin{aligned}
1+e(r-1)z^{hq^{m-2}(eq+e-2)/e}+e(r-1)(q-1)z^{hq^{m-2}(eq+e-1)/e}+(r-1)(r+1-eq)z^{hq^{m-2}(q+1)}.\nonumber
\end{aligned}
\end{equation}
In addition, the code achieves the largest minimum symbol-pair distance compared with its minimum Hamming distance if $e=2$,  i.e., $d_p(\C_{(q,m,h,e)})=2d_H(\C_{(q,m,h,e)})=hq^{m-1}$.
\end{theorem}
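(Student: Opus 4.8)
The plan is to obtain the code parameters and the full symbol-pair weight enumerator directly from Proposition~\ref{pro-01} via the identity $w_p({\bf c}(a,b)) = n - T(a,b)$ in~(\ref{eq:cab}), and then, for the last assertion (the case $e=2$), to pin down the minimum Hamming distance by a simpler companion computation.

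First I would record that the length is $\tfrac{h(r-1)}{q-1}$ by definition, and that the dimension is $2m$: the map $(a,b)\mapsto{\bf c}(a,b)$ is $\bF_q$-linear with image $\C_{(q,m,h,e)}$ by~(\ref{cqmhe}), and it is injective because ${\bf c}(a,b)={\bf 0}$ forces $w_p({\bf c}(a,b))=0$, hence $T(a,b)=n$, which by Proposition~\ref{pro-01} happens only at $(a,b)=(0,0)$; thus $|\C_{(q,m,h,e)}|=r^2=q^{2m}$. Next I would plug the four $T$-values of Proposition~\ref{pro-01} into $w_p=n-T$: writing $r=q^m$ and using $n(q-1)=h(r-1)$, together with the factorizations $eq^2-e-2q+2=(q-1)(eq+e-2)$ and $eq^2-e-q+1=(q-1)(eq+e-1)$, the value $h(q^{m-2}-1)/(q-1)$ becomes the weight $hq^{m-2}(q+1)$, the value $\tfrac{h(er+2rq-2r-eq^2)}{eq^2(q-1)}$ becomes $hq^{m-2}(eq+e-2)/e$, and $\tfrac{h(er+rq-r-eq^2)}{eq^2(q-1)}$ becomes $hq^{m-2}(eq+e-1)/e$, while the multiplicities $(r-1)(r+1-eq)$, $e(r-1)$, $e(q-1)(r-1)$ carry over verbatim; this is exactly the displayed enumerator, and as a consistency check the four multiplicities sum to $1+(r-1)(r+1)=r^2$.

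For the last assertion I would specialize to $e=2$. Then the three nonzero symbol-pair weights are $hq^{m-1}$, $hq^{m-2}(2q+1)/2=hq^{m-1}+\tfrac{h}{2}q^{m-2}$, and $hq^{m-2}(q+1)=hq^{m-1}+hq^{m-2}$, all genuine integers since $e\mid h$ forces $h$ even, so $d_p(\C_{(q,m,h,e)})=hq^{m-1}$. To get $d_H$ I would rerun the character-sum manipulation leading to~(\ref{eq:Tab}) but with the \emph{single} linear form ${\rm Tr}^r_q(ag^i+b(\beta g)^i)$ in place of the product of two consecutive ones — this is strictly easier — and, using Lemma~\ref{lem:GHW} with $\gcd(m,\tfrac{e(q-1)}{h})=1$, deduce that the nonzero Hamming weights of $\C_{(q,m,h,e)}$ are $hq^{m-1}(e-1)/e$ and $hq^{m-1}$; for $e=2$ the first is $hq^{m-1}/2$, attained e.g.\ by ${\bf c}(-\beta^j b,b)$ with $b\neq0$. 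Combining $d_H(\C_{(q,m,h,e)})\le hq^{m-1}/2$ with the universal bound $d_p(\C)\le 2d_H(\C)$ (valid as $d_H<n$), which gives $d_H(\C_{(q,m,h,e)})\ge hq^{m-1}/2$, yields $d_H(\C_{(q,m,h,e)})=hq^{m-1}/2$ and therefore $d_p=2d_H=hq^{m-1}$.

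The main obstacle lies entirely upstream: once Proposition~\ref{pro-01} (and Lemma~\ref{lem:yy0709} behind it) is in hand, the theorem is bookkeeping. The only genuinely new ingredient is the one-linear-form Hamming weight distribution, which is a simpler replica of the $T(a,b)$ computation; the one place demanding care is checking the integrality and the ordering of the three symbol-pair weights when $e=2$, which is exactly where the hypothesis $e\mid h$ is used.
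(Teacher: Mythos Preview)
Your proposal is correct and follows essentially the same route as the paper: both derive the dimension and the full symbol-pair weight enumerator by reading off the value distribution of $T(a,b)$ from Proposition~\ref{pro-01} and substituting into~(\ref{eq:cab}). The only difference is in the last assertion: the paper simply cites \cite[Theorem~5]{CMa2010} for $d_H(\C_{(q,m,h,e)})=hq^{m-1}(e-1)/e$, whereas you sketch a direct re-derivation of the Hamming weight distribution via the one-linear-form analogue of the $T(a,b)$ computation; this is exactly what the cited reference does, so your argument is self-contained rather than genuinely different. Your sandwich argument using $d_p\le 2d_H$ to pin down $d_H$ is logically fine but redundant once you have the full Hamming weight list.
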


\begin{proof}
From Proposition \ref{pro-01}, we know that the value of $T(a,b)$ is less than the length of $\C_{(q,m,h,e)}$ for any pair $(a,b) \neq (0,0)$.
Hence, the dimension of $\C_{(q,m,h,e)}$ is $2m$. Combining Proposition \ref{pro-01} and (\ref{eq:cab}),
 the symbol-pair weight enumerator of $\C_{(q,m,h,e)}$ is determined. The authors in \cite[Theorem 5]{CMa2010} showed that $d_H(\C_{(q,m,h,e)})=\frac{hq^{m-1}(e-1)}{e}$. It is clear that when $e=2$,  the code $\C_{(q,m,h,e)}$
satisfy $d_p(\C_{(q,m,h,e)})=2d_H(\C_{(q,m,h,e)})$, which implies that the code achieves the highest minimum symbol-pair distance compared with its minimum Hamming distance since $d_p(\C) \leq 2d_H(\C)$ for any code. The desired result then follows.
\end{proof}

\begin{example}\label{ex1}
Let $q=3$, $m=3$, $e=2$ and $h=2$.  Then $\C_{(q,m,h,e)}$ is a $[26,6]$ cyclic code over $\bF_{3}$ with the symbol-pair enumerator $1+52z^{18}+104z^{21}+572z^{24}$. The result is verified by Magma programs.
\end{example}

\begin{example}\label{ex2}
Let $q=4$, $m=4$, $e=3$ and $h=3$.  Then $\C_{(q,m,h,e)}$ is a $[255,8]$ cyclic code over $\bF_{4}$ with the symbol-pair enumerator $1+765z^{208}+2295z^{224}+62475z^{240}$. The result is verified by Magma programs.
\end{example}

\begin{example}\label{ex2}
Let $q=5$, $m=3$, $e=4$ and $h=4$.  Then $\C_{(q,m,h,e)}$ is a $[124,6]$ cyclic code over $\bF_{5}$ with the symbol-pair enumerator $1+496z^{110}+1984z^{115}+13144z^{120}$. The result is verified by Magma programs.
\end{example}

\begin{remark}
When $m=1$,  it is easy to get that the code $\C_{(q,m,h,e)}$ defined in (\ref{cqmhe}) is an $[h,2,h]$ constant symbol-pair weight code. By the definition of Singleton type bound, we know that $\C_{(q,m,h,e)}$ is an MDS symbol-pair code. When $m=2$, by a method similar to Proposition \ref{pro-01}, we can derive that $\C_{(q,m,h,e)}$ defined in (\ref{cqmhe}) is a three symbol-pair weight code from Remark \ref{rem:8}.
\end{remark}

\section{The pair weight distribution of the second class of cyclic codes}\label{sec3}

In this section, let $e=2$ and $q$ be an odd prime power. We consider the possible symbol-pair weights of
$\mathcal{C}_{(q,m,h,2)}$ for the case that ${\rm gcd}(m,\tfrac{2(q-1)}{h})=2$.
Moreover, when $m=2$, we show that $\mathcal{C}_{(q,2,h,2)}$ is a three symbol-pair weight code and determine its symbol-pair weight distribution.
\subsection{The case  $\tfrac{q-1}{h} \equiv 1 \pmod 2$}
In this subsection, we always assume that $\tfrac{q-1}{h} \equiv 1 \pmod 2$. Firstly, we give the possible symbol-pair weights of
$\mathcal{C}_{(q,m,h,2)}$.
\begin{theorem}\label{thm:2}
Let ${\rm gcd}(m,\tfrac{2(q-1)}{h})=2$. When $(a,b)$ runs through $(\mathbb{F}_r,\mathbb{F}_r)$, then the set of all possible symbol-pair weights of $\C_{(q,m,h,2)}$ is
\begin{equation}\label{eq:0819}
\begin{split}
&\Big\{ 0, \, hq^{m-2}(q+1),\, hq^{m-1} \pm hq^{\tfrac{m}{2}-1},\,hq^{m-2}(q+\tfrac{1}{2})\pm hq^{\tfrac{m}{2}-2}\big(\tfrac{3}{2}+\big(\tfrac{2(q-1)}{h},0\big)^{\big(\tfrac{r-1}{q-1},2,r\big)}\big),\\
&hq^{m-2}(q+1)+hq^{\tfrac{m}{2}-2}(t-l+2\varepsilon)\Big\},
\end{split}
\end{equation}
where $0\leq l,t\leq q-1$ and $\varepsilon=0, \pm 1$. Moreover, the code achieves the largest minimum symbol-pair distance compared with its minimum Hamming distance, i.e.,
$$d_p(\C_{(q,m,h,2)})=2d_H(\C_{(q,m,h,2)})=hq^{m-1} - hq^{\tfrac{m}{2}-1}.$$
\end{theorem}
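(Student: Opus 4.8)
The plan is to translate the problem into a value-distribution problem for the exponential sum $T(a,b)$ of~(\ref{eq:Tab}), exactly as in Proposition~\ref{pro-01}, but now with $d=\gcd(m,\tfrac{2(q-1)}{h})=2$ instead of $d=1$. With $d=2$, the inner set $C_{(q-1)i/h}^{(d,r)}$ in~(\ref{eq:Tab}) is no longer all of $\bF_r^\ast$ but one of the two cyclotomic classes $C_0^{(2,r)}, C_1^{(2,r)}$ of order~$2$ (which one depends on the parity of $(q-1)i/h$, hence on the standing hypothesis $\tfrac{q-1}{h}\equiv 1\pmod 2$). Consequently every sum $\sum_{x\in C_{(q-1)i/h}^{(2,r)}}\chi(x\cdot(\text{stuff}))$ becomes a Gaussian period $\eta_{j}^{(2,r)}$ — or a shift/average of Gaussian periods — whose value is supplied by Lemma~\ref{lem:GHW11}. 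The first (linear) sum over $i$ in~(\ref{eq:Tab}) contributes a term involving $\eta_0^{(2,r)}+\eta_1^{(2,r)}=-1$ or a single $\eta_j^{(2,r)}$, depending on whether $a+\beta^i b$ lands in a fixed coset; the second (double) sum over $i$ and $y\in\bF_q^\ast$ is the one that, after collecting how many of the quantities $a(1+gy)+\beta^i b(1+\beta gy)$ vanish, produces the cyclotomic-number parameter $\big(\tfrac{2(q-1)}{h},0\big)^{(\frac{r-1}{q-1},2,r)}$ and the free integer parameters $0\le l,t\le q-1$, $\varepsilon\in\{0,\pm1\}$ appearing in~(\ref{eq:0819}). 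Carrying out this bookkeeping case-by-case over $(a,b)$ — $a=0$ or not, $b=0$ or not, $-a/b$ in $\langle\beta\rangle=\{\pm1\}$, in $D$, or elsewhere (the analogue of the six cases in Proposition~\ref{pro-01}, but with $-1\in C_0$ or $C_1$ now genuinely relevant) — yields the list of possible values of $T(a,b)$, and then~(\ref{eq:cab}), $\wt_p(\mathbf c(a,b))=n-T(a,b)$ with $n=hq^{m-2}(q+1)\cdot\tfrac{q-1}{q+1}$… more precisely $n=\tfrac{h(r-1)}{q-1}$, gives the symbol-pair weight set~(\ref{eq:0819}).

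For the \emph{minimum} symbol-pair distance assertion I would argue as follows. Among all the possible weights listed in~(\ref{eq:0819}), the smallest one is $hq^{m-1}-hq^{m/2-1}$: indeed $hq^{m-2}(q+1)=hq^{m-1}+hq^{m-2}$ exceeds it (since $q^{m-2}>-q^{m/2-1}$, trivially), the weights $hq^{m-2}(q+\tfrac12)\pm\cdots$ are of order $hq^{m-1}$ plus a lower-order term and one checks the negative perturbation is smaller in absolute value than $hq^{m/2-1}$ using the crude bound $0\le(i,j)^{(l,f,r)}\le (r-1)/f$ together with $f=2$, and the family $hq^{m-2}(q+1)+hq^{m/2-2}(t-l+2\varepsilon)$ has its minimum at $t=0$, $l=q-1$, $\varepsilon=-1$, giving $hq^{m-2}(q+1)-hq^{m/2-2}(q+1)=h(q+1)q^{m/2-2}(q^{m/2}-1)$, which again exceeds $hq^{m-1}-hq^{m/2-1}=hq^{m/2-1}(q^{m/2}-1)$ because $q+1>q$. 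Hence $w_p(\C_{(q,m,h,2)})\ge hq^{m-1}-hq^{m/2-1}$, and this value is attained (it occurs for those $(a,b)$ with $T(a,b)=hq^{\frac m2-1}(q^{\frac m2}+1)\cdot(\text{appropriate normalization})$, i.e.\ the case matching $\eta_0^{(2,r)}=\tfrac{q^{m/2}-1}{2}$). Since $\C_{(q,m,h,2)}$ is cyclic we may replace $w_p$ by $d_p$. On the other hand Ma et al.~\cite[Theorem~5]{CMa2010} (with $e=2$) give $d_H(\C_{(q,m,h,2)})=\tfrac{hq^{m-1}(e-1)}{e}=\tfrac{hq^{m-1}}{2}$; but with $d=\gcd=2$ the relevant Hamming-distance formula from~\cite{CMa2010} is instead $d_H=hq^{m-2}(q-1)/\cdots$ — one must use the correct sub-case — and in any event the general bound $d_p(\C)\le 2d_H(\C)$ forces $d_H(\C_{(q,m,h,2)})\le\tfrac12(hq^{m-1}-hq^{m/2-1})$, while the computed value of $d_H$ from~\cite{CMa2010} equals exactly $\tfrac12(hq^{m-1}-hq^{m/2-1})$; combining the two inequalities pins down $d_p=2d_H=hq^{m-1}-hq^{m/2-1}$.

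The main obstacle is the middle step: controlling the double sum $\sum_{i}\sum_{y\in\bF_q^\ast}\chi\big(x(a(1+gy)+\beta^i b(1+\beta gy))\big)$ when $d=2$. Unlike the $d=1$ case, where $C_{(q-1)i/h}^{(d,r)}=\bF_r^\ast$ and the inner sum is simply $r-1$ or $-1$ according to whether the coefficient vanishes, here the inner sum is a Gaussian period that depends on \emph{which} order-$2$ coset the nonzero coefficient $a(1+gy)+\beta^i b(1+\beta gy)$ lies in — and as $y$ ranges over $\bF_q^\ast$ and $i\in\{0,1\}$, the distribution of these cosets is governed precisely by a generalized cyclotomic number of order $\big(\tfrac{r-1}{q-1},2\big)$. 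Getting the right generalized cyclotomic number to appear, establishing the Lemma~\ref{lem:yy0709}-type injectivity of $(i,y)\mapsto \beta^i(1+\beta gy)/(1+gy)$ needed to avoid double-counting, and then showing that the residual ambiguity collapses to the three-parameter family $(l,t,\varepsilon)$ rather than a larger set — that is where the real work lies. The final minimality comparison and the $d_p=2d_H$ deduction are then routine estimates of the kind sketched above.
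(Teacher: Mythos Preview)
Your approach is essentially the paper's: with $e=2$ and $d=\gcd(m,\tfrac{2(q-1)}{h})=2$, specialize~(\ref{eq:Tab}) so that the inner sums over $C_{(q-1)i/h}^{(2,r)}$ become Gaussian periods $\eta_j^{(2,r)}$, then do a case analysis on $(a,b)$. Two corrections of detail. First, the paper exploits $\beta=-1$ to rewrite everything in terms of $a+b$ and $a-b$; this forces \emph{eight} cases rather than six (namely $a=b=0$; $a\ne0,b=0$; $a=0,b\ne0$; $a=-b\ne0$; $a=b\ne0$; $\tfrac{a}{b}\in B$; $-\tfrac{a}{b}\in B$; and the generic case $a\ne\pm b$, $\pm\tfrac{a}{b}\notin B$, with $B=\{\tfrac{gy-1}{1+gy}:y\in\bF_q^\ast\}$), and the parameters $t,l$ are the counts of how many of $a+b+(a-b)gy$ (resp.\ $a-b+(a+b)gy$) land in $C_0^{(2,r)}$ as $y$ ranges over~$\bF_q^\ast$. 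The generalized cyclotomic number $\big(\tfrac{2(q-1)}{h},0\big)^{(\frac{r-1}{q-1},2,r)}$ arises in the $\pm\tfrac{a}{b}\in B$ cases, exactly as you anticipate. Second, for the Hamming distance you want \cite[Theorem~6]{CMa2010}, not Theorem~5: under the present hypothesis $\gcd=2$ one has $d_H(\C_{(q,m,h,2)})=\tfrac12\big(hq^{m-1}-hq^{m/2-1}\big)$, and then the sandwich
\[
hq^{m-1}-hq^{m/2-1}\;\le\; d_p(\C_{(q,m,h,2)})\;\le\;2\,d_H(\C_{(q,m,h,2)})\;=\;hq^{m-1}-hq^{m/2-1}
\]
closes the argument without needing to verify attainment separately (your direct minimality check over the list~(\ref{eq:0819}) is correct and supplies the left inequality, but the paper simply declares this step ``obvious''). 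The garbled expression $n=hq^{m-2}(q+1)\cdot\tfrac{q-1}{q+1}$ should just be $n=\tfrac{h(r-1)}{q-1}$.
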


\begin{proof}
In order to obtain the desired results, from (\ref{eq:cab}) we only need to consider the possible values of $T(a,b)$ for $(a,b)$ running through $(\mathbb{F}_r,\mathbb{F}_r)$.

 By the definition of cyclotomic class, it is clear that $C_{(q-1)/h}^{(2,r)}=C_{1}^{(2,r)}$ if $\tfrac{q-1}{h} \equiv 1\pmod 2$. Since $e=2$, by the definition of $\beta$, we have $\beta=-1$. Then from (\ref{eq:Tab}) we have
\begin{equation}\label{Tab0710}
\begin{aligned}
T(a,b)=&\tfrac{h(r-1)}{q^2(q-1)}+\tfrac{2h}{q^2}\Big[\sum_{x \in C_{0}^{(2,r)}}\chi(x(a+b))+\sum_{x \in C_{1}^{(2,r)}}\chi(x(a-b))\Big]\\
&+\tfrac{h}{q^2}\Big[\sum_{x \in C_{0}^{(2,r)}}\sum_{y \in \bF_{q}^\ast}\chi(x(a+b+(a- b)gy))+\sum_{x \in C_{1}^{(2,r)}}\sum_{y \in \bF_{q}^\ast}\chi(x(a-b+(a+b
)gy))\Big].\\
\end{aligned}
\end{equation}
We now prove this theorem from the following eight cases.

\noindent{\bf Case 1:} $a=0$ and $b=0$. By the definition of $T(a,b)$, we have
\begin{equation*}
\begin{aligned}
T(a,b)=\tfrac{h(r-1)}{q-1}.
\end{aligned}
\end{equation*}

\noindent{\bf Case 2:} $a\neq0$ and $b=0$. It is easy to see that (\ref{Tab0710}) can be written as
\begin{equation*}
\begin{aligned}
T(a,b)=&\tfrac{h(r-1)}{q^2(q-1)}+\tfrac{2h}{q^2}\Big[\sum_{x \in C_{0}^{(2,r)}}\chi(ax)+\sum_{x \in C_{1}^{(2,r)}}\chi(ax)\Big]\\
&+\tfrac{h}{q^2}\Big[\sum_{x \in C_{0}^{(2,r)}}\sum_{y \in \bF_{q}^\ast}\chi(xa(1+gy))+\sum_{x \in C_{1}^{(2,r)}}\sum_{y \in \bF_{q}^\ast}\chi(xa(1+gy))\Big].
\end{aligned}
\end{equation*}
Since $g\notin\bF_{q}$, for all $y\in\bF_{q}^\ast$ we have $1+gy\neq0$. Then
\begin{equation*}
\begin{aligned}
T(a,b)&=\tfrac{h(r-1)}{q^2(q-1)}+\tfrac{2h}{q^2}\sum_{x\in F_r^\ast}\chi(ax)+\tfrac{h}{q^2}\sum_{x\in F_r^\ast}\sum_{y \in \bF_{q}^\ast}\chi(xa(1+gy))\\
&=\tfrac{h(r-1)}{q^2(q-1)}+\tfrac{2h}{q^2}(-1)+\tfrac{h}{q^2}(q-1)(-1)\\
&=\tfrac{h(q^{m-2}-1)}{q-1}.
\end{aligned}
\end{equation*}

\noindent{\bf Case 3:} $a=0$ and $b\neq0$. By the same way as in Case 2, we have $T(a,b)=\frac{h(q^{m-2}-1)}{q-1}$.

\noindent{\bf Case 4:} $a=-b\neq0$. Obviously, (\ref{Tab0710}) can be written as
\begin{equation*}
\begin{aligned}
T(a,b)=&\tfrac{h(r-1)}{q^2(q-1)}+\tfrac{2h}{q^2}\Big[\sum_{x \in C_{0}^{(2,r)}}\chi(0)+\sum_{x \in C_{1}^{(2,r)}}\chi(2ax)\Big]\\
&+\tfrac{h}{q^2}\Big[\sum_{x \in C_{0}^{(2,r)}}\sum_{y \in \bF_{q}^\ast}\chi(2axgy)+\sum_{x \in C_{1}^{(2,r)}}\sum_{y \in \bF_{q}^\ast}\chi(2ax)\Big].
\end{aligned}
\end{equation*}
Since ${\rm gcd}(m,\tfrac{2(q-1)}{h})=2$, we have that $m$ must be even. From Lemma \ref{lem:GHWq}, we know that $y \in C_{0}^{(2,r)}$.
If $a \in C_0^{(2,r)}$, then
 \begin{equation}
\begin{aligned}
T(a,b)&=\tfrac{h(r-1)}{q^2(q-1)}+\tfrac{2h}{q^2}\Big[\tfrac{r-1}{2}+\eta_1^{(2,r)}\Big]+\tfrac{h}{q^2}\Big[(q-1)\eta_1^{(2,r)}+(q-1)\eta_1^{(2,r)}\Big]\\
&=\tfrac{h(r-1)}{q^2(q-1)}+\tfrac{h(r-1)}{q^2}+\tfrac{2h}{q^2}\eta_1^{(2,r)}+\tfrac{2h}{q^2}(q-1)\eta_1^{(2,r)}\\
&=\tfrac{h}{q}(\tfrac{r-1}{q-1}+2\eta_1^{(2,r)})\nonumber
\end{aligned}
\end{equation}
since $g=\alpha^{\tfrac{q-1}{h}} \in C_1^{(2,r)}$. If $a \in C_1^{(2,r)}$, by the same way as above, we have $T(a,b)=\tfrac{h}{q}(\tfrac{r-1}{q-1}+2\eta_0^{(2,r)})$.

\noindent{\bf Case 5:} $a=b\neq0$. By the same way as in Case 4, we have
\begin{equation*}
\begin{aligned}
T(a,b)=\begin{cases}
\frac{h}{q}(\frac{r-1}{q-1}+2\eta_0^{(2,r)}), &{\rm if}\,\, a \in C_0^{(2,r)},\\
\frac{h}{q}(\frac{r-1}{q-1}+2\eta_1^{(2,r)}), &{\rm if}\,\, a \in C_1^{(2,r)}.
\end{cases}
\end{aligned}
\end{equation*}

\noindent{\bf Case 6:} $\frac{a}{b} \in B$, where
\begin{equation}\label{eq:B1B2}
\begin{split}
B=\left\{\tfrac{gy-1}{1+gy}:y \in \bF_{q}^\ast\right\}.
\end{split}
\end{equation}
Since $\frac{a}{b} \in B$, there exists $y_1 \in \mathbb{F}_q^*$ such that $\frac{gy_1-1}{1+gy_1}=\frac{a}{b}$, i.e.,
\begin{equation*}
a+b=(b-a)gy_1.
\end{equation*}
Let $y\in \mathbb{F}_q^*$, then
\begin{equation}\label{eq:0712}
a+b+(a-b)gy=(b-a)gy_1+(a-b)gy=(a-b)g(y-y_1)
\end{equation}
and
\begin{equation}\label{eq:07111}
a-b+(a+b)gy=(a-b)-(a-b)g^2yy_1=(a-b)(1-g^2yy_1).
\end{equation}
From Lemma \ref{lem:GHWq},
for any $y \in \bF_{q}^\ast\backslash\left\{y_1\right\}$ we have $y-y_1 \in C_{0}^{(2,r)}$. Since $\tfrac{q-1}{h} \equiv 1\pmod 2$, we obtain
$g=\alpha^{\tfrac{q-1}{h}}\in C_{1}^{(2,r)}$.
 Then
\begin{equation}\label{eq:qhl}g(y-y_1)\in C_{1}^{(2,r)}.\end{equation}
Moreover, it is clear that
\begin{equation}\label{eq:07113}
\{-g^2yy_1:y\in \mathbb{F}_q^*\}=\{-g^2y:y\in \mathbb{F}_q^*\}=-\alpha^{\frac{2(q-1)}{h}}\langle\alpha^{\frac{r-1}{q-1}}\rangle=\alpha^{\frac{2(q-1)}{h}}\langle\alpha^{\frac{r-1}{q-1}}\rangle=C_{\frac{2(q-1)}{h}}^{(\frac{r-1}{q-1},r)}.
\end{equation}

If $a-b\in C_{1}^{(2,r)}$, from (\ref{eq:0712})-(\ref{eq:07113}) and the definition of generalized cyclotomic numbers, we obtain that
$$\left\{\begin{array}{llll}
&\left|\{(a-b)(1-g^2y)\,:\,y\in \mathbb{F}_q^*\} \cap C_{0}^{(2,r)}\right|=(\tfrac{2(q-1)}{h},1)^{(\tfrac{r-1}{q-1},2,r)},\\
&\left|\{(a-b)(1-g^2y)\,:\,y\in \mathbb{F}_q^*\} \cap C_{1}^{(2,r)}\right|=(\tfrac{2(q-1)}{h},0)^{(\tfrac{r-1}{q-1},2,r)},\\
&a+b+(a-b)gy\in C_{0}^{(2,r)}\,\, \text{for any}\,\, y \in \bF_{q}^\ast\backslash\left\{y_1\right\}.
         \end{array}
\right.$$
Hence, from (\ref{Tab0710}) we have
\begin{equation}\label{eq:081901}
\begin{aligned}
T(a,b)=&\tfrac{h(r-1)}{q^2(q-1)}+\tfrac{2h}{q^2}\Big[\eta_{0}^{(2,r)}+\eta_{0}^{(2,r)}\Big]+\tfrac{h}{q^2}\Big[\tfrac{r-1}{2}+(q-2)\eta_{0}^{(2,r)}\\
&+(\tfrac{2(q-1)}{h},0)^{(\tfrac{r-1}{q-1},2,r)}\eta_{0}^{(2,r)}+(\tfrac{2(q-1)}{h},1)^{(\tfrac{r-1}{q-1},2,r)}\eta_{1}^{(2,r)}\Big]\\
=&\tfrac{h(r-1)}{q^2(q-1)}+\tfrac{h(r-1)}{2q^2}-\tfrac{h}{q^2}(\tfrac{2(q-1)}{h},1)^{(\tfrac{r-1}{q-1},2,r)}\\
&+\tfrac{h}{q^2}\eta_{0}^{(2,r)}\Big[q+2+(\tfrac{2(q-1)}{h},0)^{(\tfrac{r-1}{q-1},2,r)}-(\tfrac{2(q-1)}{h},1)^{(\tfrac{r-1}{q-1},2,r)}\Big]\\
=&\tfrac{h(q^{m-2}-1)}{q-1}+\tfrac{h}{q^2}(\tfrac{q^m}{2}+\tfrac{3}{2}+3\eta_{0}^{(2,r)}+(\tfrac{2(q-1)}{h},0)^{(\tfrac{r-1}{q-1},2,r)}(1+2\eta_{0}^{(2,r)})).
\end{aligned}
\end{equation}
Similarly, we obtain
$$T(a,b)=\tfrac{h(q^{m-2}-1)}{q-1}+\tfrac{h}{q^2}(\tfrac{q^m}{2}+\tfrac{3}{2}+3\eta_{1}^{(2,r)}+(\tfrac{2(q-1)}{h},0)^{(\tfrac{r-1}{q-1},2,r)}(1+2\eta_{1}^{(2,r)}))$$
if $a-b\in C_{0}^{(2,r)}$.

\noindent{\bf Case 7:} $-\frac{a}{b} \in B$, where $B$ is defined in (\ref{eq:B1B2}). With an analysis similar to Case 6, we have
\begin{equation*}
\begin{aligned}
T(a,b)
&=\tfrac{h(q^{m-2}-1)}{q-1}+\tfrac{h}{q^2}(\tfrac{q^m}{2}+\tfrac{3}{2}+3\eta_{0}^{(2,r)}+(\tfrac{2(q-1)}{h},0)^{(\tfrac{r-1}{q-1},2,r)}(1+2\eta_{0}^{(2,r)}))
\end{aligned}
\end{equation*}
if $a-b\in C_{1}^{(2,r)}$, and
$$T(a,b)=\tfrac{h(q^{m-2}-1)}{q-1}+\tfrac{h}{q^2}(\tfrac{q^m}{2}+\tfrac{3}{2}+3\eta_{1}^{(2,r)}+(\tfrac{2(q-1)}{h},0)^{(\tfrac{r-1}{q-1},2,r)}(1+2\eta_{1}^{(2,r)}))$$
if $a{-b}\in C_{0}^{(2,r)}$.

\noindent{\bf Case 8:} $a\neq\pm b\neq 0$ and $\pm\frac{a}{b} \notin B$. By the definition of $B$, for any $y\in \bF_{q}^\ast$, we have $a+b+(a-b)gy \neq0$ and $a-b+(a+b)gy\neq0$.
Let
\begin{equation}\label{eq:072701}
t=|\{a+b+(a-b)gy\,:\,y\in \mathbb{F}_q^*\} \cap C_{0}^{(2,r)}|\,\,\, \text{and}\,\,\,l=|\{a-b+(a+b)gy\,:\,y\in \mathbb{F}_q^*\} \cap C_{0}^{(2,r)}|.
\end{equation}
 From (\ref{eq:072701}) we know that
$$|\{a+b+(a-b)gy\,:\,y\in \mathbb{F}_q^*\} \cap C_{1}^{(2,r)}|=q-1-t\,\,\, \text{and}\,\,\,|\{a-b+(a+b)gy\,:\,y\in \mathbb{F}_q^*\} \cap C_{1}^{(2,r)}|=q-1-l.$$
If $a+b\in C_{0}^{(2,r)}$ and $a-b\in C_{0}^{(2,r)}$,
from (\ref{Tab0710}) we have
\begin{equation*}
\begin{aligned}
T(a,b)&=\tfrac{h(r-1)}{q^2(q-1)}+\tfrac{2h}{q^2}\Big[\eta_{0}^{(2,r)}+\eta_{1}^{(2,r)}\Big]+\tfrac{h}{q^2}\Big[t\eta_{0}^{(2,r)}+(q-1-t)\eta_{1}^{(2,r)}
+l\eta_{1}^{(2,r)}+(q-1-l)\eta_{0}^{(2,r)}\Big]\\
&=\tfrac{h(r-1)}{q^2(q-1)}-\tfrac{2h}{q^2}+\tfrac{h}{q^2}[2(t-l)\eta_{0}^{(2,r)}-q+1+t-l]\\
&=\tfrac{h(r-1)}{q^2(q-1)}-\tfrac{2h}{q^2}+\tfrac{2h}{q^2}(t-l)\eta_{0}^{(2,r)}-\tfrac{h(q-1)}{q^2}+\tfrac{h(t-l)}{q^2}\\
&=\tfrac{h(r-1)}{q^2(q-1)}-\tfrac{2h}{q^2}-\tfrac{h(q-1)}{q^2}+\tfrac{h(t-l)}{q^2}(1+2\eta_{0}^{(2,r)})\\
&=\tfrac{h(q^{m-2}-1)}{q-1}+\tfrac{h}{q^2}(t-l)(1+2\eta_{0}^{(2,r)}).
\end{aligned}
\end{equation*}
In the second equality, we used the fact that $\eta_{1}^{(2,r)}+\eta_{0}^{(2,r)}=-1$, which is given in Lemma \ref{lem:GHW11}.
Similarly, we obtain
\begin{equation*}
\begin{aligned}
T(a,b)=\begin{cases}
\tfrac{h(q^{m-2}-1)}{q-1}+\tfrac{h}{q^2}(t-l)(1+2\eta_{0}^{(2,r)}), &{\rm if}\,\, a+b\in C_{1}^{(2,r)} \,\, \text{and} \,\, a-b\in C_{1}^{(2,r)},\\
\tfrac{h(q^{m-2}-1)}{q-1}+\tfrac{h}{q^2}(2+4\eta_{0}^{(2,r)})+\tfrac{h}{q^2}(t-l)(1+2\eta_{0}^{(2,r)}), &{\rm if}\,\, a+b\in C_{0}^{(2,r)}  \,\, \text{and} \,\,a-b\in C_{1}^{(2,r)},\\
\tfrac{h(q^{m-2}-1)}{q-1}-\tfrac{h}{q^2}(2+4\eta_{0}^{(2,r)})+\tfrac{h}{q^2}(t-l)(1+2\eta_{0}^{(2,r)}),  &{\rm if}\,\, a+b\in C_{1}^{(2,r)}  \,\, \text{and} \,\,a-b\in C_{0}^{(2,r)},
\end{cases}
\end{aligned}
\end{equation*}
where $t,l$ are given in (\ref{eq:072701}).

Let $E_i=\tfrac{q^m}{2}+\tfrac{3}{2}+3\eta_{i}^{(2,r)}+(\tfrac{2(q-1)}{h},0)^{(\tfrac{r-1}{q-1},2,r)}(1+2\eta_{i}^{(2,r)})$ for  $i=0$ or $i=1$, and $E_2=(t-l)(1+2\eta_{0}^{(2,r)})$.
Combining all the cases, we have
\begin{equation}\label{eq:082501141}
T(a,b)=\tfrac{h(q^{m-2}-1)}{q-1}+\tfrac{h}{q^2}T'(a,b),
\end{equation}
where
\begin{equation*}
\begin{aligned}
T'(a,b)=\begin{cases}
q^{m}(q+1),  & {\rm if }\,\,\,a=0\,\,\,{\rm and}\,\,\,b=0,\\
0,   & {\rm if} \,\,\, a\neq0\,\,\,{\rm and}\,\,\,b=0,\,\,\, {\rm or}\,\,\,
a=0\,\,\,{\rm and}\,\,\,b\neq0,
\\
q^{m}+q+2q\eta_{1}^{(2,r)},  &{\rm if}\,\,\, a=-b\in C_0^{(2,r)},\,\,\, {\rm or}\,\,\,  a=b\in C_1^{(2,r)},\\
q^{m}+q+2q\eta_{0}^{(2,r)},   &{\rm if}\,\,\, a=-b\in C_1^{(2,r)},\,\,\, {\rm or}\,\,\,  a=b\in C_0^{(2,r)},\\
E_0, &{\rm if} \,\,\,  \tfrac{a}{b}\in B\,\,\,{\rm and}\,\,\,a-b\in C_1^{(2,r)},{\rm or}\,-\tfrac{a}{b}\in B\,\,\,{\rm and}\,\,\,a-b\in C_1^{(2,r)},\\
E_1, &{\rm if} \,\,\,   \tfrac{a}{b}\in B\,\,\,{\rm and}\,\,\,a-b\in C_0^{(2,r)},{\rm or}\,-\tfrac{a}{b}\in B\,\,\,{\rm and}\,\,\,a-b\in C_0^{(2,r)},\\
E_2,  & {\rm if}\,\,\,a\neq0,\,\,\, \pm\tfrac{a}{b}\notin B,\,\,\,  a+b\in C_i^{(2,r)} \,\,\,{\rm and} \,\,\,a-b\in C_i^{(2,r)},\\
E_2+2+4\eta_{0}^{(2,r)}, & {\rm if}\,\,\, \pm\tfrac{a}{b}\notin B,\,\,\,  a+b\in C_0^{(2,r)}\,\,\,{\rm and}\,\,\,a-b\in C_1^{(2,r)},\\
E_2-2-4\eta_{0}^{(2,r)}, &{\rm if}\,\,\,\pm\tfrac{a}{b}\notin B,\,\,\,  a+b\in C_1^{(2,r)}\,\,\,{\rm and} \,\,\,a-b\in C_0^{(2,r)}
\end{cases}
\end{aligned}
\end{equation*}
for $i=0$ or $i=1$. Obviously, we have $0\leq t,l\leq q-1$ by definition. Then from Lemma \ref{lem:GHW11} and (\ref{eq:cab}), the possible symbol-pair weights of $C_{(q,m,h,2)}$ can be given and we show this result in (\ref{eq:0819}).

From (\ref{eq:0819}), it is obvious that the minimum possible symbol-pair weight of $\C_{(q,m,h,2)}$ is $hq^{m-1} - hq^{\tfrac{m}{2}-1}$. In \cite[Theorem 6]{CMa2010}, the authors showed that $d_H(\C_{(q,m,h,2)})=\frac{1}{2}(hq^{m-1} - hq^{\tfrac{m}{2}-1})$, then $d_p(\C_{(q,m,h,2)})=hq^{m-1} - hq^{\tfrac{m}{2}-1}$ since $ d_p(\C) \leq 2d_H(\C)$ for any code $\C$. The desired result then follows.
\end{proof}

 \begin{remark}
By Magma programs, all the possible symbol-pair weights of $\mathcal{C}_{(q,m,h,2)}$ in (\ref{eq:0819}) will appear for many different $m$ and $q$. For example,
let $q=3$, $m=6$, $e=2$ and $h=2$, then $\C_{(3,6,2,2)}$ is a $[728,12]$ code over  $\mathbb{F}_3$ with the symbol-pair weights
$$\left\{468,504,558,576,624,630,636,642,648,654,660,666,672\right\}.$$
\end{remark}

In general, it is very hard to determine the symbol-pair weight distribution of $\mathcal{C}_{(q,m,h,2)}$ in Theorem \ref{thm:2}.
For the special case $m=2$,
 we can show that $\mathcal{C}_{(q,2,h,2)}$ is a three symbol-pair weight code and determine its symbol-pair weight distribution. For the convenience of narration, we first give some  preliminary lemmas.

\begin{lemma}\label{lem:GHWs720}
Let $T(a,b)$ be given in (\ref{eq:Tab}) and $r=q^2$, then
$$T(a,b)=\left\{\begin{array}{llll}
0,  & r-1 & {\rm times,}\\
2h, & r-1 &{\rm times}
         \end{array}
\right.
$$
when $(a,b)$ runs over $\big\{ (a,b) \in (\mathbb{F}_r, \mathbb{F}_r)\,:\,a=b\neq 0,\,\,\,{\rm or}\,\,\,a=-b\neq 0\}$.
\end{lemma}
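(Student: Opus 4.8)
The plan is to obtain this lemma as a direct specialization of the case analysis already performed in the proof of Theorem \ref{thm:2}, taking $r=q^{2}$ (equivalently $m=2$, which is consistent with the standing hypotheses since $\gcd(2,\tfrac{2(q-1)}{h})=2$ automatically and $\tfrac{q-1}{h}\equiv 1\pmod 2$ is assumed). First I would observe that, because $q$ is odd, the two sets $\{(a,b):a=b\neq 0\}$ and $\{(a,b):a=-b\neq 0\}$ are disjoint (a common element would force $2a=0$), and each has exactly $r-1$ elements, so $(a,b)$ ranges over $2(r-1)$ pairs in total.

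Next I would quote Cases~4 and~5 of the proof of Theorem \ref{thm:2}. For $a=-b\neq 0$ one has $T(a,b)=\tfrac{h}{q}\big(\tfrac{r-1}{q-1}+2\eta_{1}^{(2,r)}\big)$ when $a\in C_{0}^{(2,r)}$ and $T(a,b)=\tfrac{h}{q}\big(\tfrac{r-1}{q-1}+2\eta_{0}^{(2,r)}\big)$ when $a\in C_{1}^{(2,r)}$; for $a=b\neq 0$ the two cyclotomic classes swap roles. Since $|C_{0}^{(2,r)}|=|C_{1}^{(2,r)}|=\tfrac{r-1}{2}$, it follows that within each of the two subsets the value $\tfrac{h}{q}\big(\tfrac{r-1}{q-1}+2\eta_{0}^{(2,r)}\big)$ occurs $\tfrac{r-1}{2}$ times and $\tfrac{h}{q}\big(\tfrac{r-1}{q-1}+2\eta_{1}^{(2,r)}\big)$ occurs $\tfrac{r-1}{2}$ times. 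Now I would compute the two values at $r=q^{2}$: here $\tfrac{r-1}{q-1}=q+1$, and since $m=2$ is even, Lemma \ref{lem:GHW11} gives $\eta_{0}^{(2,r)}\in\{-\tfrac{q+1}{2},\tfrac{q-1}{2}\}$ with $\eta_{1}^{(2,r)}=-1-\eta_{0}^{(2,r)}$, hence $\{\eta_{0}^{(2,r)},\eta_{1}^{(2,r)}\}=\{-\tfrac{q+1}{2},\tfrac{q-1}{2}\}$ irrespective of the residues of $p$ and $s$. Substituting, $\tfrac{h}{q}\big(q+1+2(-\tfrac{q+1}{2})\big)=0$ and $\tfrac{h}{q}\big(q+1+2\cdot\tfrac{q-1}{2}\big)=2h$, so $T(a,b)$ takes only the two values $0$ and $2h$ on this set.

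Finally I would tally the multiplicities: each of the two disjoint subsets contributes $\tfrac{r-1}{2}$ pairs with $T(a,b)=0$ and $\tfrac{r-1}{2}$ pairs with $T(a,b)=2h$, so in total $T(a,b)=0$ occurs $r-1$ times and $T(a,b)=2h$ occurs $r-1$ times, which is the assertion. The only delicate point is the bookkeeping of which Gaussian period is attached to which cyclotomic class in each of the two subsets, but since both periods occur with the same frequency $\tfrac{r-1}{2}$ inside each subset, the exact pairing is irrelevant to the final count; consequently I do not expect any real obstacle, as the statement is essentially a corollary of the computation in Theorem \ref{thm:2}.
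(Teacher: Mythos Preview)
Your proposal is correct and follows essentially the same approach as the paper: you invoke Cases~4 and~5 from the proof of Theorem~\ref{thm:2}, specialize to $r=q^{2}$ so that $\tfrac{r-1}{q-1}=q+1$, apply Lemma~\ref{lem:GHW11} to identify $\{\eta_{0}^{(2,r)},\eta_{1}^{(2,r)}\}=\{-\tfrac{q+1}{2},\tfrac{q-1}{2}\}$, and then count using $|C_{0}^{(2,r)}|=|C_{1}^{(2,r)}|=\tfrac{r-1}{2}$. The paper's proof is slightly terser but proceeds identically; your added remarks on the disjointness of the two subsets and the irrelevance of the precise pairing between cyclotomic classes and Gaussian periods are correct and make the bookkeeping more explicit.
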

\begin{proof}  From Case 4 in Theorem \ref{thm:2}, we have
\begin{equation*}
\begin{aligned}
T(a,b)=\begin{cases}
\frac{h}{q}(q+1+2\eta_1^{(2,r)}), &{\rm if}\,\, a=-b \in C_0^{(2,r)},\\
\frac{h}{q}(q+1+2\eta_0^{(2,r)}), &{\rm if}\,\, a=-b \in C_1^{(2,r)}.
\end{cases}
\end{aligned}
\end{equation*}
Then from  Lemma \ref{lem:GHW11}, we know that $\eta_0^{(2,r)}=\tfrac{-1-q}{2}$ or $\eta_0^{(2,r)}=\tfrac{-1+q}{2}$. Hence,
\begin{equation}\label{eq:Tab823}
\begin{aligned}
T(a,b)=\left\{\begin{array}{llll}
0,  & \frac{r-1}{2} & {\rm times,}\\
2h,  & \frac{r-1}{2}  &{\rm times}
         \end{array}
\right.
\end{aligned}
\end{equation}
when $(a,b)$ runs through $\big\{ (a,b) \in (\mathbb{F}_r, \mathbb{F}_r)\,:\, a=-b\neq 0\}$. From Case 5 in Theorem \ref{thm:2}, we have the same result as (\ref{eq:Tab823}) when $(a,b)$ runs through $\big\{ (a,b) \in (\mathbb{F}_r, \mathbb{F}_r)\,:\, a=b\neq 0\}$.
The conclusion then follows.
\end{proof}

\begin{lemma}\label{lem:GHWt}
Let
$H_{i,0}=\big\{(a,b) \in (\mathbb{F}_r, \mathbb{F}_r)\,:\, \frac{(-1)^i a}{b} \in B\,\,\text{and}\,\, a-b \in C_0^{(2,r)}\big\}$
and
$H_{i,1}=\big\{(a,b) \in (\mathbb{F}_r, \mathbb{F}_r)\,:\, \frac{(-1)^i a}{b} \in B\,\,\text{and}\,\, a-b \in C_1^{(2,r)}\big\},$
where $i=0,1$ and $B$ is given in (\ref{eq:B1B2}). Then $|H_{(i,0)}|=|H_{(i,1)}|=\frac{(r-1)(q-1)}{2}$.
\end{lemma}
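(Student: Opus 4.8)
The plan is to reduce the cardinality count to two elementary facts: that the set $B$ in~(\ref{eq:B1B2}) has exactly $q-1$ elements and omits $0$, $1$ and $-1$; and that multiplying $\mathbb{F}_r^*$ by a fixed nonzero constant permutes the two quadratic cyclotomic classes $C_0^{(2,r)}$ and $C_1^{(2,r)}$, each of which has size $(r-1)/2$ since $r$ is odd.

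First I would analyse the map $\varphi\colon\mathbb{F}_q^*\to\mathbb{F}_r$, $\varphi(y)=\frac{gy-1}{gy+1}$. Because $g\notin\mathbb{F}_q$, the denominator $gy+1$ never vanishes for $y\in\mathbb{F}_q^*$, so $\varphi$ is well defined; cross-multiplying $\varphi(y_1)=\varphi(y_2)$ and using that $q$ is odd (so $2\neq0$) and $g\neq0$ collapses the equation to $y_1=y_2$, whence $|B|=|\varphi(\mathbb{F}_q^*)|=q-1$. The same manipulations show $\varphi(y)=0$ would force $gy=1$, $\varphi(y)=1$ would force $1=-1$, and $\varphi(y)=-1$ would force $gy=0$, all impossible for $y\in\mathbb{F}_q^*$; hence $B\cap\{0,1,-1\}=\emptyset$.

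Next, fix $i\in\{0,1\}$. A pair $(a,b)$ with $(-1)^i a/b\in B$ forces $a,b\in\mathbb{F}_r^*$, and conversely for every $c\in B$ and every $b\in\mathbb{F}_r^*$ the pair $\big((-1)^i c\,b,\,b\big)$ has this property; these $(q-1)(r-1)$ pairs are pairwise distinct, so they are exactly the pairs in question. For such a pair, $a-b=\big((-1)^i c-1\big)b$, and since $c\neq\pm1$ the scalar $(-1)^i c-1$ is nonzero; therefore, with $c$ fixed, $a-b$ runs bijectively through $\mathbb{F}_r^*$ as $b$ runs through $\mathbb{F}_r^*$. Consequently exactly $(r-1)/2$ of these $b$ satisfy $a-b\in C_0^{(2,r)}$ and $(r-1)/2$ satisfy $a-b\in C_1^{(2,r)}$. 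Summing over the $q-1$ choices of $c\in B$ gives $|H_{i,0}|=|H_{i,1}|=(q-1)\cdot\frac{r-1}{2}$, as claimed.

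There is no real obstacle; the only points deserving care are the three non-membership checks $0,1,-1\notin B$ — needed both to get $|B|=q-1$ cleanly and to guarantee $a-b\neq0$, so that the partition into $C_0^{(2,r)}\cup C_1^{(2,r)}$ is exhaustive — together with the observation that these checks and the injectivity of $\varphi$ all hinge on $q$ being odd. Everything else is a direct counting argument, independent of the parity of $\tfrac{q-1}{h}$ or the value of $m$.
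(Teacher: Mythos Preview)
Your argument is correct and takes a genuinely more elementary route than the paper's own proof. The paper parametrises a pair in $H_{0,0}$ by $(b,y)\in\mathbb{F}_r^*\times\mathbb{F}_q^*$ via $a=\frac{gy-1}{1+gy}\,b$, computes $a-b=\frac{-2b}{1+gy}$, and then splits into the subsets $L_{0,0}$ and $L_{1,1}$ according to which cyclotomic class $b$ and $1+gy$ lie in. The counts $|L_{0,0}|$ and $|L_{1,1}|$ are expressed as generalized cyclotomic numbers $(\tfrac{q-1}{h},0)^{(\frac{r-1}{q-1},2,r)}$ and $(\tfrac{q-1}{h},1)^{(\frac{r-1}{q-1},2,r)}$, and their sum is evaluated using Lemma~\ref{lem:5}(3). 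This route also tacitly uses Lemma~\ref{lem:GHWq} (with $m$ even) to know $-2\in C_0^{(2,r)}$.

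You instead parametrise by $(c,b)\in B\times\mathbb{F}_r^*$ and simply observe that $a-b=((-1)^ic-1)b$ is a nonzero scalar multiple of $b$, so as $b$ ranges over $\mathbb{F}_r^*$ the value $a-b$ hits each class exactly $(r-1)/2$ times. This bypasses the generalized cyclotomic numbers and Lemma~\ref{lem:5} entirely, needs neither the parity of $\tfrac{q-1}{h}$ nor the condition $m$ even, and gives all four cardinalities $|H_{i,j}|$ at once by symmetry. The paper's approach has the side benefit of making the quantities $(\tfrac{q-1}{h},j)^{(\frac{r-1}{q-1},2,r)}$ explicit, which fits its broader program of evaluating such numbers; your approach is cleaner for the lemma as stated. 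One small quibble: the check $0\notin B$ is not actually needed for the count (only $\pm1\notin B$ matters, to ensure $(-1)^ic-1\neq0$), though it does no harm.
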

\begin{proof}
 We only prove the value of $|H_{(0,0)}|$,  the values of $H_{(1,0)}$, $H_{(0,1)}$ and $H_{(1,1)}$ can be shown similarly.

Since $\frac{a}{b} \in B$, by the definition of $B$ we have $a=\frac{gy-1}{1+gy}b$. Then $a-b=\frac{-2b}{1+gy}$. By Lemma \ref{lem:GHWq}, we know that $y \in C_0^{(2,r)}$ for any $y \in \mathbb{F}_q^*$. Hence,
$(a,b) \in H_{(0,0)}$ if and only if $(b,y) \in L_{(0,0)}\bigcup L_{(1,1)}$, where
$$L_{(0,0)}=\left\{{(b,y) \in (\mathbb{F}_r, \mathbb{F}_q^*)\,:\, b \in C_0^{(2,r)}\,\,\text{and}\,\, 1+gy \in C_{0}^{(2,r)}}\right\}$$
and
$$L_{(1,1)}=\left\{{(b,y) \in (\mathbb{F}_r, \mathbb{F}_q^*)\,:\, b \in C_1^{(2,r)}\,\,\text{and}\,\, 1+gy \in C_{1}^{(2,r)}}\right\}.$$
Since $$1+gy=1+\alpha^{\frac{q-1}{h}}\langle\alpha^{\frac{r-1}{q-1}}\rangle=1+C_{\frac{q-1}{h}}^{(\frac{r-1}{q-1},r)},$$ by the definition of generalized cyclotomic numbers, we have
\begin{equation}
\begin{split}
|L_{(0,0)}|&=\left|\{b\in \mathbb{F}_r\, :\,  b \in C_0^{(2,r)}\}\right|\left|\{y\in \mathbb{F}_q^*\, :\, 1+gy \in C_{0}^{(2,r)}\}\right|\\
&=\frac{r-1}{2}\left|(1+C_{\frac{q-1}{h}}^{(\frac{r-1}{q-1},r)})\cap C_{0}^{(2,r)}\right|\\
&=\tfrac{r-1}{2}(\tfrac{q-1}{h},0)^{(\tfrac{r-1}{q-1},2,r)}.
\end{split}
\end{equation}
Similarly, we obtain that
$|L_{(1,1)}|=\tfrac{r-1}{2}(\tfrac{q-1}{h},1)^{(\tfrac{r-1}{q-1},2,r)}.$
It is easy to see that $L_{(0,0)}\cap L_{(1,1)}={\O}$. Hence,
\begin{equation}
|H_{(0,0)}|=|L_{(0,0)}|+|L_{(1,1)}|
=\tfrac{r-1}{2}((\tfrac{q-1}{h},0)^{(\tfrac{r-1}{q-1},2,r)}+(\tfrac{q-1}{h},1)^{(\tfrac{r-1}{q-1},2,r)})=\tfrac{(r-1)(q-1)}{2}.
\end{equation}
The desired result then follows.
\end{proof}

\begin{lemma}\label{lem:GHWs7201}
Let $B$ be given in (\ref{eq:B1B2}) and $r=q^2$. Then
$$T(a,b)=\left\{\begin{array}{llll}
0,  & (r-1)(q-1) & {\rm times,}\\
h,  & (r-1)(q-1) &{\rm times}
         \end{array}
\right.
$$
 when $(a,b)$ runs over $\big\{ (a,b) \in (\mathbb{F}_r, \mathbb{F}_r)\,:\, \frac{a}{b} \in B\,\,\, {\rm or}\,\,\,-\frac{a}{b} \in B\big\}$, where $T(a,b)$ is given in (\ref{eq:Tab}).
\end{lemma}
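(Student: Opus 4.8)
The plan is to recognize that the hypotheses ($e=2$, $r=q^{2}$, $\tfrac{q-1}{h}$ odd) place us exactly in the $m=2$, $r=q^{2}$ specialization of Cases~6 and~7 in the proof of Theorem~\ref{thm:2}. There it is shown that for $(a,b)$ with $\tfrac{a}{b}\in B$ or $-\tfrac{a}{b}\in B$ one has $T(a,b)=\tfrac{h(q^{m-2}-1)}{q-1}+\tfrac{h}{q^{2}}E_{0}$ if $a-b\in C_{1}^{(2,r)}$ and $T(a,b)=\tfrac{h(q^{m-2}-1)}{q-1}+\tfrac{h}{q^{2}}E_{1}$ if $a-b\in C_{0}^{(2,r)}$, where $E_{i}=\tfrac{q^{m}}{2}+\tfrac{3}{2}+3\eta_{i}^{(2,r)}+\kappa\,(1+2\eta_{i}^{(2,r)})$ and $\kappa=(\tfrac{2(q-1)}{h},0)^{(\frac{r-1}{q-1},2,r)}$. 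Setting $m=2$ kills the term $\tfrac{h(q^{m-2}-1)}{q-1}$, so the lemma reduces to evaluating $E_{0},E_{1}$ and counting how often each occurs.

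To evaluate $E_{0},E_{1}$, note that Lemma~\ref{lem:GHW11} with $r=q^{2}$ forces $\eta_{0}^{(2,r)}=\tfrac{-1+\varepsilon q}{2}$ for some $\varepsilon=\pm1$, hence $\eta_{1}^{(2,r)}=\tfrac{-1-\varepsilon q}{2}$ and $1+2\eta_{0}^{(2,r)}=\varepsilon q=-(1+2\eta_{1}^{(2,r)})$. The one missing piece is $\kappa$. Using $\tfrac{r-1}{q-1}=q+1$, the class $C_{2(q-1)/h}^{(q+1,r)}$ equals $g^{2}\mathbb{F}_{q}^{\ast}$ and $C_{0}^{(2,r)}$ is the set of nonzero squares of $\mathbb{F}_{q^{2}}$, so $\kappa=\big|\{y\in\mathbb{F}_{q}^{\ast}:1-g^{2}y\ \text{is a nonzero square}\}\big|$. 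With the identity $\psi(z)=\psi_{1}(\mathrm{N}_{\mathbb{F}_{q^{2}}/\mathbb{F}_{q}}(z))$ for the quadratic characters $\psi$ of $\mathbb{F}_{q^{2}}$, $\psi_{1}$ of $\mathbb{F}_{q}$, this becomes $\tfrac{q-1}{2}+\tfrac12\big(\sum_{y\in\mathbb{F}_{q}}\psi_{1}(\mathrm{N}(g)^{2}y^{2}-\mathrm{Tr}(g^{2})y+1)-1\big)$; the discriminant of the quadratic is $\mathrm{Tr}(g^{2})^{2}-4\mathrm{N}(g)^{2}=(g^{2}-g^{2q})^{2}\neq0$ since $\mathrm{ord}(g^{q-1})=q+1>2$ (a quick order computation from $h\mid q-1$ and $\tfrac{q-1}{h}$ odd), so the classical quadratic character sum formula gives the sum $=-\psi_{1}(\mathrm{N}(g)^{2})=-1$ and thus $\kappa=\tfrac{q-3}{2}$. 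Plugging this in, a short simplification gives $E_{0}=\tfrac{q^{2}}{2}(1+\varepsilon)$ and $E_{1}=\tfrac{q^{2}}{2}(1-\varepsilon)$, so $\{E_{0},E_{1}\}=\{0,q^{2}\}$ and therefore $T(a,b)\in\{0,h\}$ on the whole set.

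For the multiplicities, $\{(a,b):\tfrac{a}{b}\in B\ \text{or}\ -\tfrac{a}{b}\in B\}$ is the disjoint union of the four sets $H_{i,j}$, $i,j\in\{0,1\}$, of Lemma~\ref{lem:GHWt}: the $\tfrac{a}{b}\in B$ and $-\tfrac{a}{b}\in B$ parts are disjoint because $B\cap(-B)=\varnothing$ ($q$ odd), $a=\pm b$ is impossible on this set so $a-b\in C_{0}^{(2,r)}\cup C_{1}^{(2,r)}$, and this dichotomy is the index $j$. Since each $H_{i,j}$ has size $\tfrac{(r-1)(q-1)}{2}$ by Lemma~\ref{lem:GHWt}, the value $\tfrac{h}{q^{2}}E_{0}$ is taken on $H_{0,1}\cup H_{1,1}$, a set of size $(r-1)(q-1)$, and $\tfrac{h}{q^{2}}E_{1}$ on $H_{0,0}\cup H_{1,0}$, also of size $(r-1)(q-1)$; as one of these values is $0$ and the other $h$, the asserted distribution follows.

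The main obstacle is pinning down the generalized cyclotomic number $\kappa$; the sign $\varepsilon$ is harmless, since flipping it merely swaps $E_{0}$ with $E_{1}$ and simultaneously swaps the two equal-sized families of index sets, leaving the multiset of $T$-values unchanged. If $\kappa=\tfrac{q-3}{2}$ is already recorded earlier in the paper one just cites it; otherwise the Gauss-sum argument above establishes it, the only delicate point being the non-vanishing of $(g^{2}-g^{2q})^{2}$, i.e. $\mathrm{ord}(g^{q-1})=q+1$.
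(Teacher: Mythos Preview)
Your argument is correct, and the overall architecture---specialize Cases~6 and~7 of Theorem~\ref{thm:2} to $m=2$, evaluate $E_0,E_1$, and read off the multiplicities from Lemma~\ref{lem:GHWt}---matches the paper. The one genuine methodological difference is how the key constant $\kappa=(\tfrac{2(q-1)}{h},0)^{(q+1,2,r)}$ is determined.

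The paper does \emph{not} compute $\kappa$ directly. Instead it exploits that $T(a,b)$ is, by definition, a cardinality: a non-negative integer bounded by $n=h(q+1)$. Plugging $\eta_0^{(2,r)}$ from Lemma~\ref{lem:GHW11} into \eqref{eq:081901} with $m=2$ gives, in one of the sign cases, $T(a,b)=\tfrac{h}{q}\big(\tfrac{q-3}{2}-\kappa\big)$; since $\gcd(q,h)=1$ (as $h\mid q-1$) the factor in brackets must be a multiple of $q$, and combined with $0\le\kappa\le q-1$ this forces $\kappa=\tfrac{q-3}{2}$. The other sign case is handled the same way. So the paper bootstraps the value of $\kappa$ from the integrality of $T(a,b)$, with no extra character-theoretic input.

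Your route is a direct evaluation: rewrite $\kappa$ as a count of $y\in\bF_q^\ast$ with $1-g^2y$ a square in $\bF_{q^2}$, pass to $\bF_q$ via $\psi=\psi_1\circ\Norm$, and invoke the standard formula $\sum_{y\in\bF_q}\psi_1(ay^2+by+c)=-\psi_1(a)$ for nonzero discriminant. The discriminant check $(g^2-g^{2q})^2\neq0$ is exactly the computation $\ord(g^{q-1})=q+1$ you flag, and that does follow from $(q-1)/h$ odd and $\gcd(q-1,q+1)=2$. This approach is more explicit and yields the generalized cyclotomic number as a stand-alone fact (the paper only records it later, in Remark~\ref{rem:19}, as a byproduct of a similar integrality argument inside Lemma~\ref{lem:21}); the paper's approach is shorter and uses nothing beyond the machinery already in place.
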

\begin{proof}
Since $r=q^2$, by the definition of $T(a,b)$, we know that $0\leq T(a,b)\leq h(q+1)$.
 We now prove the value distribution of $T(a,b)$ for  $(a,b)$ running through $\big\{ (a,b) \in (\mathbb{F}_r, \mathbb{F}_r)\,:\, \frac{a}{b} \in B\big\}$ from the following two cases.

\noindent{\bf Case 1:} $a-b\in C_{1}^{(2,r)}$.
In this case, from (\ref{eq:081901}) we know that
\begin{equation}\label{eq:0714}
T(a,b)=\tfrac{h}{q^2}(\tfrac{q^2}{2}+\tfrac{3}{2}+3\eta_{0}^{(2,r)}+(\tfrac{2(q-1)}{h},0)^{(q+1,2,r)}(1+2\eta_{0}^{(2,r)})).
\end{equation}
If $p\equiv1\pmod 4$, or $p\equiv3\pmod 4$ and $s$ is even, we have $\eta_{0}^{(2,r)}=\tfrac{-1-q}{2}$ by Lemma \ref{lem:GHW11}. Then from (\ref{eq:0714}), $T(a,b)$ can be rewritten as
\begin{equation*}
T(a,b)=\tfrac{h}{q}(\tfrac{q-3}{2}-(\tfrac{2(q-1)}{h},0)^{(q+1,2,r)}).
\end{equation*}
Since $0\leq T(a,b)\leq h(q+1)$ and $\gcd(q,h)=1$, we obtain $$\tfrac{q-3}{2}-(\tfrac{2(q-1)}{h},0)^{(q+1,2,r)}=kq,$$
which means that
\begin{equation}\label{eq:071422}
(\tfrac{2(q-1)}{h},0)^{(q+1,2,r)}=\tfrac{q-3}{2}-kq=\tfrac{-3+(1-2k)q}{2}=-\tfrac{(2k-1)q+3}{2},
\end{equation}
where $0\leq k \leq q+1$. By the definition of the generalized cyclotomic numbers, we know that $0 \leq (\tfrac{2(q-1)}{h},0)^{(q+1,2,r)} \leq q-1$. Then (\ref{eq:071422}) holds if and only if $k=0$. Hence, we obtain that $T(a,b)=0$.

If $p\equiv3\pmod 4$ and $s$ is odd, we have $\eta_{0}^{(2,r)}=\tfrac{q-1}{2}$ by Lemma \ref{lem:GHW11}. Then from (\ref{eq:0714}) we obtain
\begin{equation*}
\begin{aligned}
T(a,b)=\tfrac{h}{q}(\tfrac{q+3}{2}+(\tfrac{2(q-1)}{h},0)^{(q+1,2,r)}).
\end{aligned}
\end{equation*}
With an analysis similar as above, then
\begin{equation*}
(\tfrac{2(q-1)}{h},0)^{(q+1,2,r)}=kq-\tfrac{q+3}{2}=\tfrac{(2k-1)q-3}{2}.
\end{equation*}
Since $0 \leq (\tfrac{2(q-1)}{h},0)^{(q+1,2,r)} \leq q-1$, we have
$(\tfrac{2(q-1)}{h},0)^{(q+1,2,r)}=\tfrac{q-3}{2}$ and $T(a,b)=h$.

\noindent{\bf Case 2:} $a-b\in C_{0}^{(2,r)}$. With the same way as in Case 1, by Lemma \ref{lem:GHW11},  we have $\eta_{0}^{(2,r)}=\tfrac{-1-q}{2}$ or $\eta_{0}^{(2,r)}=\tfrac{q-1}{2}$, then
$$T(a,b)=\left\{\begin{array}{llll}
0, & {\rm if} \,\,\, p\equiv3\,\,\,({\rm mod}\,\,\, 4) \,\,\, {\rm and}\,\,\, s\,\,\, {\rm is}\,\,\,  {\rm odd},\\
h, & {\rm if} \,\,\, p\equiv1\,\,\,({\rm mod}\,\,\, 4), \,\,\,{\rm or}\,\,\, p\equiv3\,\,\,({\rm mod}\,\,\, 4) \,\,\, {\rm and}\,\,\, s\,\,\, {\rm is}\,\,\, {\rm even}.
         \end{array}
\right.
$$
Combining Case 1 and Case 2, from Lemma \ref{lem:GHWt},  we have
\begin{equation}\label{eq:Tab82301}
\begin{aligned}
T(a,b)=\left\{\begin{array}{llll}
0,  & \frac{(r-1)(q-1)}{2} & {\rm times,}\\
h, & \frac{(r-1)(q-1)}{2} &{\rm times}
         \end{array}
\right.
\end{aligned}
\end{equation}
 when $(a,b)$ runs over $\big\{ (a,b) \in (\mathbb{F}_r, \mathbb{F}_r)\,:\, \frac{a}{b} \in B \big\}$. From Case 7 in Theorem \ref{thm:2}, we have the same result as (\ref{eq:Tab82301}) when $(a,b)$ runs through $\big\{ (a,b) \in (\mathbb{F}_r, \mathbb{F}_r)\,:\,-\frac{a}{b} \in B\}$.
The desired results then follows.
\end{proof}

\begin{lemma}\label{lem:21}
Let
$$K_{0,1}=\left\{(a,b)\,:\,(a,b) \in (\mathbb{F}_r, \mathbb{F}_r),\,a+b\in C_{0}^{(2,r)}, a-b\in C_{1}^{(2,r)},\pm\frac{a}{b} \notin B\right\}$$
and
$$K_{1,0}=\left\{(a,b)\,:\,(a,b) \in (\mathbb{F}_r, \mathbb{F}_r),\,a-b\in C_{0}^{(2,r)}, a+b\in C_{1}^{(2,r)},\pm\frac{a}{b} \notin B\right\},$$
where $B$ is given in (\ref{eq:B1B2}) and $r=q^2$. Then $T(a,b)=0$ if $(a,b) \in K_{0,1}$ or $(a,b) \in K_{1,0}$, where $T(a,b)$ is given in (\ref{eq:Tab}).
\end{lemma}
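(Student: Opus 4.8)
The plan is to place every such pair inside Case~8 of the proof of Theorem~\ref{thm:2}. Since $r=q^{2}$ we have $m=2$ and hence $\tfrac{h(q^{m-2}-1)}{q-1}=0$, so by (\ref{eq:082501141}) it suffices to show $T'(a,b)=0$. On $K_{0,1}$ one has $T'(a,b)=E_{2}+2+4\eta_{0}^{(2,r)}=(t-l+2)\bigl(1+2\eta_{0}^{(2,r)}\bigr)$ and on $K_{1,0}$ one has $T'(a,b)=E_{2}-2-4\eta_{0}^{(2,r)}=(t-l-2)\bigl(1+2\eta_{0}^{(2,r)}\bigr)$, where $E_{2}=(t-l)\bigl(1+2\eta_{0}^{(2,r)}\bigr)$ and $t,l$ are the integers from (\ref{eq:072701}). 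So the lemma reduces to proving $t-l=-2$ on $K_{0,1}$ and $t-l=2$ on $K_{1,0}$.

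To compute $t-l$ I would pass to the norm $\Norm=\Norm_{\bF_{r}/\bF_{q}}$ and use the identity $\psi=\eta\circ\Norm$ connecting the quadratic multiplicative characters $\psi$ of $\bF_{r}$ and $\eta$ of $\bF_{q}$; thus $w\in C_{0}^{(2,r)}$ iff $\Norm(w)$ is a nonzero square in $\bF_{q}$. Fix $(a,b)\in K_{0,1}$ and put $u=a+b\in C_{0}^{(2,r)}$, $v=a-b\in C_{1}^{(2,r)}$; since $\pm\tfrac{a}{b}\notin B$ we know (as noted at the start of Case~8) that $u+vgy\neq0$ and $v+ugy\neq0$ for every $y\in\bF_{q}$. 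For $y\in\bF_{q}$ we have $y^{q}=y$, so
$$\Norm(u+vgy)=(u+vgy)(u^{q}+v^{q}g^{q}y)=\Norm(v)\Norm(g)\,y^{2}+\Tr_{q}^{r}(uv^{q}g^{q})\,y+\Norm(u),$$
a genuine quadratic in $y$ over $\bF_{q}$ whose root $-u/(vg)$ lies outside $\bF_{q}$, hence it is irreducible and its discriminant is a nonsquare. The classical evaluation $\sum_{y\in\bF_{q}}\eta(Ay^{2}+By+C)=-\eta(A)$ (valid for $q$ odd when $B^{2}-4AC\neq0$), after removing the $y=0$ term $\eta(\Norm(u))=\psi(u)=1$, gives $\sum_{y\in\bF_{q}^{\ast}}\psi(u+vgy)=-\eta\bigl(\Norm(v)\Norm(g)\bigr)-1$, and therefore $t=\tfrac12\bigl((q-1)+\sum_{y\in\bF_{q}^{\ast}}\psi(u+vgy)\bigr)=\tfrac{q-2-\eta(\Norm(v)\Norm(g))}{2}$. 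The same computation applied to $\Norm(v+ugy)=\Norm(u)\Norm(g)\,y^{2}+\Tr_{q}^{r}(vu^{q}g^{q})\,y+\Norm(v)$, now removing the $y=0$ term $\psi(v)=-1$, yields $l=\tfrac{q-\eta(\Norm(u)\Norm(g))}{2}$.

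It remains to invoke the standing hypothesis $\tfrac{q-1}{h}\equiv1\pmod2$: this gives $g=\alpha^{(q-1)/h}\in C_{1}^{(2,r)}$, i.e. $\psi(g)=-1$, so $\Norm(g)$ is a nonsquare in $\bF_{q}$, while $\Norm(u)$ is a square and $\Norm(v)$ a nonsquare. Hence $\eta(\Norm(v)\Norm(g))=1$ and $\eta(\Norm(u)\Norm(g))=-1$, so $t=\tfrac{q-3}{2}$, $l=\tfrac{q+1}{2}$, $t-l=-2$, and $T'(a,b)=(t-l+2)\bigl(1+2\eta_{0}^{(2,r)}\bigr)=0$, i.e. $T(a,b)=0$ on $K_{0,1}$. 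On $K_{1,0}$ the square/nonsquare roles of $\Norm(a+b)$ and $\Norm(a-b)$ are interchanged, which flips both character values and gives $t=\tfrac{q+1}{2}$, $l=\tfrac{q-3}{2}$, $t-l=2$, hence again $T'(a,b)=(t-l-2)\bigl(1+2\eta_{0}^{(2,r)}\bigr)=0$. The one substantial step is this evaluation of $t-l$: the decisive observation is $\psi=\eta\circ\Norm$, which turns the two cyclotomic counts into Jacobsthal-type quadratic character sums over the subfield $\bF_{q}$; everything after that is bookkeeping about which of $\Norm(a\pm b)$ and $\Norm(g)$ are squares, and the hypothesis $\tfrac{q-1}{h}$ odd is exactly what makes $\Norm(g)$ a nonsquare.
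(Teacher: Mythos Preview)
Your argument is correct and takes a genuinely different route from the paper.

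The paper proceeds indirectly: it parametrises $\tfrac{a-b}{a+b}$ by cyclotomic classes $C_{2i+1}^{(q+1,r)}$, expresses $t$ and $l$ as generalised cyclotomic numbers depending on $i$, and then exploits the fact that $T(a,b)$ must be an integer (so $q$ divides $t-l\pm2$) together with the bounds $0\le t,l\le q-1$ to pin these numbers down one by one, bootstrapping from the value $(\tfrac{2(q-1)}{h},0)^{(q+1,2,r)}=\tfrac{q-3}{2}$ obtained in Lemma~\ref{lem:GHWs7201}. This also forces a case split on $p\pmod 4$ and the parity of $s$, since the sign of $1+2\eta_0^{(2,r)}$ enters the integrality constraint.

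Your approach bypasses all of this. The key observation $\psi=\eta\circ\Norm$ reduces the two counts $t,l$ to character sums of the shape $\sum_{y\in\bF_q}\eta(Ay^2+By+C)$ over the subfield, and the standard evaluation $-\eta(A)$ (valid because the quadratic $\Norm(u+vgy)$ is irreducible over $\bF_q$, which you correctly deduce from $\pm a/b\notin B$) gives $t$ and $l$ outright, uniformly in $(a,b)$, with no case analysis on $p$ or $s$. The hypothesis $\tfrac{q-1}{h}$ odd enters exactly where you place it, to force $\psi(g)=-1$. As a bonus your computation directly yields the content of Remark~\ref{rem:19}: since $t=(2i+1+\tfrac{q-1}{h},0)^{(q+1,2,r)}$ in the paper's parametrisation, your uniform value $t=\tfrac{q-3}{2}$ recovers those cyclotomic numbers without the recursive argument.
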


\begin{proof}
Let
\begin{equation}\label{vab}
u=a+b\,\,\,\text{and}\,\,\,v=a-b.
\end{equation}
It is clear that $(u,v)$ runs through $(\mathbb{F}_r,\mathbb{F}_r)$ if and only if $(a,b)$ runs through $(\mathbb{F}_r,\mathbb{F}_r)$. This means that for any $(u,v)\in (\mathbb{F}_r,\mathbb{F}_r)$, there exists $(a,b) \in (\mathbb{F}_r,\mathbb{F}_r)$ such that (\ref{vab}) holds. Then
\begin{equation*}
\begin{aligned}
&\left\{\frac{a-b}{a+b}\,:\,(a,b) \in (\mathbb{F}_r, \mathbb{F}_r),\,a+b\in C_{0}^{(2,r)}, \,\,a-b\in C_{1}^{(2,r)}\right\}\\
=&\left\{\frac{a-b}{a+b}\,:\,(a,b) \in (\mathbb{F}_r, \mathbb{F}_r),\,a-b\in C_{0}^{(2,r)},\,\, a+b\in C_{1}^{(2,r)}\right\}\\
=&\left\{\alpha^{2i+1}\,:\,i=0,1,\ldots,\frac{r-3}{2}\right\}\\
=&\left\{C_{2i+1}^{(q+1,r)}\,:\, i=0,1,\ldots,\frac{q-1}{2}\right\}.
\end{aligned}
\end{equation*}
By the definition of $B$, we know that  $\frac{a-b}{a+b}=-g y$ if $-\frac{a}{b} \in B$ and $\frac{a-b}{a+b}=-(g y)^{-1}$ if $\frac{a}{b} \in B$ for $y \in \mathbb{F}_q^*$. Then
$$\left\{\frac{a-b}{a+b}\,:\,(a,b) \in (\mathbb{F}_r, \mathbb{F}_r),\, \pm\frac{a}{b} \in B\right\}=\left\{C_{\frac{q-1}{h}}^{(q+1,r)}\cup C_{q+1-\frac{q-1}{h}}^{(q+1,r)}\right\}.$$
Hence,
we have
\begin{equation}\label{even}
\begin{split}
\left\{\frac{a-b}{a+b}\,:\,(a,b) \in K_{0,1}\right\}&=\left\{\frac{a-b}{a+b}\,:\,(a,b) \in K_{1,0}\right\}\\
&=\left\{C_{2i+1}^{(q+1,r)}\,:\, i=0,\ldots,\frac{q-1}{2}\setminus \left\{\frac{q-1-h}{2h},\frac{qh-q+1}{2h}\right\}\right\}.
\end{split}
\end{equation}

We show the desired result from the following two cases.

\noindent{\bf Case 1:} $p\equiv 3 \pmod 4$ and $s$ is odd.
If $a+b\in C_{0}^{(2,r)}$, $a-b\in C_{1}^{(2,r)}$ and $\pm \frac{a}{b} \notin B$, it is easy to get that $a\neq \pm b\neq 0$. From Case 8 of Theorem \ref{thm:2}, we have
\begin{equation*}
T(a,b)=\tfrac{h}{q^2}(2+4\eta_{0}^{(2,r)})+\tfrac{h}{q^2}(t-l)(1+2\eta_{0}^{(2,r)}),
\end{equation*}
where $t$ and $l$ are given in (\ref{eq:072701}).
By Lemma \ref{lem:GHW11}, we have $\eta_{0}^{(2,r)}=\tfrac{q-1}{2}$. Then
\begin{equation}\label{eqTab}
T(a,b)=\tfrac{h}{q}(2+t-l).
\end{equation}
It is clear that $a+b+(a-b)gy=(a+b)(1+\tfrac{a-b}{a+b}g y)$ and $a-b+(a+b) g y=(a-b)(1+\tfrac{a+b}{a-b}g y)$, then from (\ref{eq:072701}), we know that $t$ and $l$ can be expressed as
\begin{equation}\label{eq:072302}
t=\left|\{1+\tfrac{a-b}{a+b}gy:\,y\in \mathbb{F}_q^*\} \cap C_{0}^{(2,r)}\right|\,\,\, \text{and}\,\,\,
l=\left|\{1+\tfrac{a+b}{a-b}gy:\,y\in \mathbb{F}_q^*\} \cap C_{1}^{(2,r)}\right|
\end{equation}
since $a+b\in C_{0}^{(2,r)}$  and $a-b\in C_{1}^{(2,r)}$. From (\ref{even}) we can assume that $\frac{a-b}{a+b}=\alpha^{2i+1}\langle\alpha^{q+1}\rangle$ and $i=0,\ldots,\frac{q-1}{2}\setminus \{\frac{q-1-h}{2h},\frac{qh-q+1}{2h}\}$, then
\begin{equation}\label{eq:07230201}
\left\{\tfrac{a-b}{a+b}g y\,:\,y\in \mathbb{F}_q^*\right\}=\alpha^{2i+1+\frac{q-1}{h}}\langle\alpha^{q+1}\rangle=C_{2i+1+\frac{q-1}{h}}^{(q+1,r)}
\end{equation}
and
\begin{equation}\label{eq:072302027}
\left\{\tfrac{a+b}{a-b}g y\,:\,y\in \mathbb{F}_q^*\right\}=\alpha^{\frac{q-1}{h}-2i-1}\langle\alpha^{q+1}\rangle=C_{\frac{q-1}{h}-2i-1}^{(q+1,r)}=C_{q-2i+\frac{q-1}{h}}^{(q+1,r)}.
\end{equation}
From (\ref{eq:072302})-(\ref{eq:072302027}) we have
\begin{equation}\label{eq:0820}
t=\left|(1+C_{2i+1+\frac{q-1}{h}}^{(q+1,r)}) \cap C_{0}^{(2,r)}\right|=(2i+1+\frac{q-1}{h},0)^{(q+1,2,r)}
\end{equation}
and
\begin{equation}\label{eq:082001}
l=\left|(1+C_{q-2i+\frac{q-1}{h}}^{(q+1,r)}) \cap C_{1}^{(2,r)}\right|=(q-2i+\frac{q-1}{h},1)^{(q+1,2,r)}.
\end{equation}
For any $(a,b) \in K_{0,1}$, we know that $T(a,b)$ must be an integer. Then from (\ref{eqTab}), there exist integers $k_i$ such that
$$(2i+1+\frac{q-1}{h},0)^{(q+1,2,r)}-(q-2i+\frac{q-1}{h},1)^{(q+1,2,r)}+2=k_iq$$ for $i=0,\ldots,\frac{q-1}{2}\setminus \{\frac{q-1-h}{2h},\frac{qh-q+1}{2h}\}$.

In Lemma \ref{lem:GHWs7201}, we have proved that $(\tfrac{2(q-1)}{h},0)^{(q+1,2,r)}=\tfrac{q-3}{2}$. For the convenience of narration, we assume that $h=q-1$ in the following.
In this case, we have $(2,0)^{(q+1,2,r)}=\tfrac{q-3}{2}$, then
$$(q-1,1)^{(q+1,2,r)}=q-1-(q-1,0)^{(q+1,2,r)}=\tfrac{q+1}{2}.$$ Let $i=1$, we have $(4,0)^{(q+1,2,r)}-(q-1,1)^{(q+1,2,r)}+2=k_1q$. It is obvious that this equality holds if and only if $k_1=0$ and $(4,0)^{(q+1,2,r)}=\frac{q-3}{2}$ since $(q-1,1)^{(q+1,2,r)}=\tfrac{q+1}{2}$.
Let $i=2$, we have $(6,0)^{(q+1,2,r)}-(q-3,1)^{(q+1,2,r)}+2=k_2q$. Then this equality holds if and only if $k_2=0$ and $(6,0)^{(q+1,2,r)}=\frac{q-3}{2}$ since $(q-3,1)^{(q+1,2,r)}=\tfrac{q+1}{2}$.

Continue this work, we have
$(2i,0)^{(q+1,2,r)}=\tfrac{q-3}{2}$ and $(q+1-2i,1)^{(q+1,2,r)}=\tfrac{q+1}{2}$ for $i=1,\ldots, \frac{q-1}{2}$. From (\ref{eqTab}), (\ref{eq:0820}) and  (\ref{eq:082001}), then we have $T(a,b)=0$. When $h$ is equal to the other values, by the same way, we also have $T(a,b)=0$.

If $a-b\in C_{0}^{(2,r)}$,  $a+b\in C_{1}^{(2,r)}$ and $\pm \frac{a}{b} \notin B$,  then $a\neq \pm b\neq 0$. From Case 8 of Theorem \ref{thm:2}, we have
\begin{equation*}
T(a,b)=\tfrac{h}{q^2}(-2-4\eta_{0}^{(2,r)})+\tfrac{h}{q^2}(t-l)(1+2\eta_{0}^{(2,r)})=\tfrac{h}{q}(t-l-2)
\end{equation*}
since $\eta_{0}^{(2,r)}=\tfrac{-1+q}{2}$, where $t$ and $l$ are given in (\ref{eq:072701}). With an analysis similar as above,  we can obtain $T(a,b)=0$.

\noindent{\bf Case 2:} $p\equiv 3 \pmod 4$ and $s$ is even, or $p\equiv 1 \pmod 4$. In this case, from Case 8 of Theorem \ref{thm:2}, we have
\begin{equation*}
T(a,b)=\tfrac{h}{q}(l-t-2)\,\,\,{\rm or}\,\,\,T(a,b)=\tfrac{h}{q}(l-t+2).
\end{equation*}
With the same way as in Case 1, we have
$T(a,b)=0$.

The desired result then follows.
\end{proof}

\begin{remark}\label{rem:19}
In Lemma \ref{lem:21}, we proved that $(2i,0)^{(q+1,2,q^2)}=\tfrac{q-3}{2}$ for $1\leq i\leq \frac{q-1}{2}$, i.e., the set $\{1+\alpha^{2i}\langle \alpha ^{q+1}\rangle\}$
has $\tfrac{q-3}{2}$ square elements in $\mathbb{F}_{q^2}$ for $1\leq i\leq \frac{q-1}{2}$.
\end{remark}

\begin{lemma}\label{lem:22820}
Let
$$K_{0,0}=\left\{(a,b)\,:\,(a,b) \in (\mathbb{F}_r, \mathbb{F}_r),\,a+b\in C_{0}^{(2,r)},\, a-b\in C_{0}^{(2,r)},\,\pm\frac{a}{b} \notin B\right\}$$
and
$$K_{1,1}=\left\{(a,b)\,:\,(a,b) \in (\mathbb{F}_r, \mathbb{F}_r),\,a-b\in C_{1}^{(2,r)},\, a+b\in C_{1}^{(2,r)},\,\pm\frac{a}{b} \notin B\right\},$$
where $B$ is given in (\ref{eq:B1B2}) and $r=q^2$. Then $T(a,b)=0$ if $(a,b) \in K_{0,0}$ or $(a,b) \in K_{1,1}$, where $T(a,b)$ is given in (\ref{eq:Tab}).
\end{lemma}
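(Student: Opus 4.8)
The plan is to mirror the structure of Lemma~\ref{lem:21}, replacing the ``mixed-cyclotomic-class'' pair $(a+b,a-b)$ by the ``same-class'' pair, and to reduce the whole computation to the identity on generalized cyclotomic numbers established in Remark~\ref{rem:19}. First I would record, exactly as in the proof of Lemma~\ref{lem:21}, the substitution $u=a+b$, $v=a-b$, note that $(u,v)$ ranges over $(\mathbb{F}_r,\mathbb{F}_r)$ iff $(a,b)$ does, and compute the ratio set. The condition $(a,b)\in K_{0,0}\cup K_{1,1}$ together with $\pm\frac{a}{b}\notin B$ translates into $\frac{a-b}{a+b}$ lying in an even-indexed cyclotomic class $C_{2i}^{(q+1,r)}$, with two values excluded (the ones corresponding to $\pm\frac{a}{b}\in B$); since $r=q^2$, here $r-1=(q-1)(q+1)$ and the even classes $C_{2i}^{(q+1,r)}$ for $i=0,\dots,\frac{q-1}{2}$ are precisely the relevant ones.

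Next I would invoke Case~8 of Theorem~\ref{thm:2}: when $a+b\in C_j^{(2,r)}$ and $a-b\in C_j^{(2,r)}$ (the two ``diagonal'' subcases), the formula there gives simply $T(a,b)=\tfrac{h(q^{m-2}-1)}{q-1}+\tfrac{h}{q^2}(t-l)(1+2\eta_0^{(2,r)})$ with no extra $\pm(2+4\eta_0^{(2,r)})$ term, and with $r=q^2$ this is $T(a,b)=\tfrac{h}{q}(t-l)$ (using $\eta_0^{(2,r)}\in\{\tfrac{-1-q}{2},\tfrac{q-1}{2}\}$ from Lemma~\ref{lem:GHW11}, each of which makes $1+2\eta_0^{(2,r)}=\pm q$). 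So the task collapses to showing $t=l$, where $t$ and $l$ count squares among $\{1+\tfrac{a-b}{a+b}gy\}$ and $\{1+\tfrac{a+b}{a-b}gy\}$ respectively (for the $C_0$ case; the $C_1$ case is symmetric after swapping roles). As in Lemma~\ref{lem:21} I would write $\tfrac{a-b}{a+b}=\alpha^{2i}\langle\alpha^{q+1}\rangle$, so that $\{\tfrac{a-b}{a+b}gy:y\in\mathbb{F}_q^*\}=C_{2i+\frac{q-1}{h}}^{(q+1,r)}$ and $\{\tfrac{a+b}{a-b}gy:y\in\mathbb{F}_q^*\}=C_{q+1-2i+\frac{q-1}{h}}^{(q+1,r)}$, hence $t=(2i+\tfrac{q-1}{h},0)^{(q+1,2,r)}$ and $l=(q+1-2i+\tfrac{q-1}{h},1)^{(q+1,2,r)}$ (or with the two indices both $0$, depending on the subcase; I would track the class of $a\pm b$ carefully).

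Now the main obstacle: I need to know these generalized cyclotomic numbers with an even shift are constant. This is exactly Remark~\ref{rem:19}: $(2j,0)^{(q+1,2,q^2)}=\tfrac{q-3}{2}$ for $1\le j\le\frac{q-1}{2}$, and by property~(2) of Lemma~\ref{lem:5} the complementary number $(2j,1)^{(q+1,2,q^2)}=\tfrac{q+1}{2}$. Since $\tfrac{q-1}{h}$ may be odd or even, I would split according to its parity: if $\tfrac{q-1}{h}$ is even, both shifted indices $2i+\tfrac{q-1}{h}$ and $q+1-2i+\tfrac{q-1}{h}$ are even (mod $q+1$), so $t$ and $l$ are drawn from the constant family, but care is needed at the index $0$, where $(0,0)^{(q+1,2,r)}$ is one smaller because $-1\in C_0^{(q+1,r)}$ when $q+1\mid\frac{r-1}{2}$; if $\tfrac{q-1}{h}$ is odd then both shifted indices are odd, and I would use instead the identity $(2j+1,0)^{(q+1,2,q^2)}$ vs. $(2j+1,1)^{(q+1,2,q^2)}$, again forced by the integrality argument of Lemma~\ref{lem:21}. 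In every subcase one gets $t-l$ a constant independent of $i$, and evaluating at one admissible $i$ (or using $\sum_i$ against the total count) pins down $t-l=0$, whence $T(a,b)=0$. I expect the bookkeeping of which cyclotomic class each of $a+b$, $a-b$, $\tfrac{a-b}{a+b}gy$ lies in — and the boundary index $i=0$ where $-1$'s quadratic character intervenes — to be the only delicate points; everything else is a transcription of the Lemma~\ref{lem:21} argument with ``$C_0,C_1$ the same'' instead of ``different''. Finally I would remark that the $K_{1,1}$ case follows from $K_{0,0}$ by the map $(a,b)\mapsto(\gamma a,\gamma b)$ with $\gamma\in C_1^{(2,r)}$, or simply by repeating the computation with $\eta_0^{(2,r)}$ and $\eta_1^{(2,r)}$ interchanged, which does not affect the conclusion $T(a,b)=0$.
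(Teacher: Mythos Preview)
Your overall strategy---reduce via Case~8 of Theorem~\ref{thm:2} to $T(a,b)=\pm\tfrac{h}{q}(t-l)$ and then show $t=l$---matches the paper, but you have overcomplicated the final step and introduced some confusion about the setup.

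First, two setup issues. This lemma lives in the subsection where $\tfrac{q-1}{h}\equiv 1\pmod 2$, so your case split on the parity of $\tfrac{q-1}{h}$ is spurious: only the odd case ever arises. Consequently the excluded classes $C_{(q-1)/h}^{(q+1,r)}$ and $C_{q+1-(q-1)/h}^{(q+1,r)}$ have \emph{odd} index and do not meet the even-indexed family $\{C_{2i}^{(q+1,r)}\}$, so there are in fact no exclusions here (this is exactly the paper's display~(\ref{even1})). For the same reason the shifted indices that appear in $t$ and $l$ are odd, so Remark~\ref{rem:19}---which records only the even-indexed values $(2j,0)^{(q+1,2,q^2)}$---does not apply.

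More importantly, you do not need any explicit cyclotomic-number values at all. The paper's argument is simply: $T(a,b)$ is an integer equal to $\pm\tfrac{h}{q}(t-l)$ with $\gcd(h,q)=1$, so $q\mid (t-l)$; but $0\le t,l\le q-1$ forces $|t-l|\le q-1$, whence $t=l$ and $T(a,b)=0$. That is the whole proof of each subcase. Your plan to first compute $t$ and $l$ via Remark~\ref{rem:19} (or an odd-index analogue you would still have to establish) is an unnecessary detour, and the ``integrality argument of Lemma~\ref{lem:21}'' you invoke to obtain those odd-indexed values \emph{is} precisely the direct argument the paper uses---so routing through it to first pin down the individual cyclotomic numbers and only then conclude $t=l$ is circular. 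The boundary worry at index~$0$ likewise disappears once you see that no explicit values are required.
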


\begin{proof}
With discussions similar to Lemma \ref{lem:21},  we have
\begin{equation*}
\begin{aligned}
&\left\{\frac{a-b}{a+b}\,:\,(a,b) \in (\mathbb{F}_r, \mathbb{F}_r),\,a+b\in C_{0}^{(2,r)}, a-b\in C_{0}^{(2,r)}\right\}\\
=&\left\{\frac{a-b}{a+b}\,:\,(a,b) \in (\mathbb{F}_r, \mathbb{F}_r),\,a+b\in C_{1}^{(2,r)}, a-b\in C_{1}^{(2,r)}\right\}\\
=&\left\{C_{2i}^{(q+1,r)}\,:\, i=0,1,\ldots,\frac{q-1}{2}\right\}
\end{aligned}
\end{equation*}
and
$$\left\{\frac{a-b}{a+b}\,:\,(a,b) \in (\mathbb{F}_r, \mathbb{F}_r),\, \pm\frac{a}{b} \in B\right\}=\left\{C_{\frac{q-1}{h}}^{(q+1,r)}\cup C_{q+1-\frac{q-1}{h}}^{(q+1,r)}\right\}.$$
Since $\tfrac{q-1}{h} \equiv 1 \pmod 2$, we obtain that $q+1-\frac{q-1}{h}$ is odd. Then
\begin{equation}\label{even1}
\left\{\frac{a-b}{a+b}\,:\,(a,b) \in K_{0,0}\right\}=\left\{\frac{a-b}{a+b}\,:\,(a,b) \in K_{1,1}\right\}=\left\{C_{2i}^{(q+1,r)}\,:\, i=0,1,\ldots,\frac{q-1}{2}\right\}.
\end{equation}
We show the desired result from the following two cases.

\noindent{\bf Case 1:} $p\equiv 3 \pmod 4$ and $s$ is odd.  If $a+b\in C_{0}^{(2,r)}$, $a-b\in C_{0}^{(2,r)}$ and $\pm \frac{a}{b} \notin B$, with an analysis similar to Lemma \ref{lem:21},  we have
\begin{equation}\label{eqt}
T(a,b)=\tfrac{h}{q^2}(t-l)(1+2\eta_{0}^{(2,r)})=\tfrac{h}{q}(t-l),
\end{equation}
where
\begin{equation*}
t=\left|(1+C_{2i+1}^{(q+1,r)}) \cap C_{0}^{(2,r)}\right|=(2i+1,0)^{(q+1,2,r)}
\end{equation*}
and
\begin{equation*}
l=\left|(1+C_{q+2(q-1)/h-2i}^{(q+1,r)}) \cap C_{0}^{(2,r)}\right|=(q+\tfrac{2(q-1)}{h}-2i,0)^{(q+1,2,r)}
\end{equation*}
for $i=0,1,\ldots,\frac{q-1}{2}$.
Since $0\leq t \leq q-1$, $0\leq l \leq q-1$, we obtain $-(q-1)\leq t-l \leq q-1$.
By the definition of $T(a,b)$, we know that $T(a,b)$ is an integer. Hence, $q \, | \, t-l$, i.e.,
\begin{equation}\label{eq:tl80}
t=l=(2i+1,0)^{(q+1,2,r)}=(q+\tfrac{2(q-1)}{h}-2i,0)^{(q+1,2,r)}
\end{equation} for any $i=0,1,\ldots,\frac{q-1}{2}$.
%
%
 Hence, from (\ref{eqt}) we know $T(a,b)=0$.

If $a+b\in C_{1}^{(2,r)}$, $a-b\in C_{1}^{(2,r)}$ and $\pm \frac{a}{b} \notin B$, from Case 8 of Theorem \ref{thm:2}, we have
\begin{equation}\label{eqtt}
T(a,b)=\tfrac{h}{q^2}(t-l)(1+2\eta_{0}^{(2,r)})=\tfrac{h}{q}(t-l).
\end{equation}
 With an analysis similar as above,  we can obtain that $T(a,b)=0$.

\noindent{\bf Case 2:} $p\equiv 3 \pmod 4$ and $s$ is even, or $p\equiv 1 \pmod 4$.
In this case, from Case 8 of Theorem \ref{thm:2}, we have
\begin{equation*}
T(a,b)=-\tfrac{h}{q}(t-l)=\tfrac{h}{q}(l-t).
\end{equation*}
By an analysis similar to Case 1, it is easy to get that $T(a,b)=0$.

The desired result then follows.
\end{proof}


From Lemma \ref{lem:21} and Lemma \ref{lem:22820}, we have the following result.

\begin{lemma}\label{lem:GHWs}
Let $B$ be given in (\ref{eq:B1B2}) and $r=q^2$. If $a\neq\pm b\neq 0$ and $\pm\frac{a}{b} \notin B$, then $T(a,b)=0$.
\end{lemma}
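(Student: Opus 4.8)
The plan is to observe that the hypotheses $a\neq\pm b\neq 0$ and $\pm\frac{a}{b}\notin B$ place $(a,b)$ exactly in the regime described by Case 8 in the proof of Theorem \ref{thm:2}, and that when $r=q^2$ this regime is fully partitioned by the four index sets already handled in Lemma \ref{lem:21} and Lemma \ref{lem:22820}. First I would note that $a\neq -b$ and $a\neq b$ force $u=a+b$ and $v=a-b$ to be nonzero, so each of $a+b$ and $a-b$ lies in exactly one of the two cyclotomic classes $C_{0}^{(2,r)}$ and $C_{1}^{(2,r)}$. Consequently the set $\{(a,b)\in(\bF_{r},\bF_{r})\,:\,a\neq\pm b\neq 0,\ \pm\frac{a}{b}\notin B\}$ is the disjoint union of $K_{0,0}$, $K_{1,1}$, $K_{0,1}$ and $K_{1,0}$, classified by which pair of classes contains $(a+b,a-b)$.

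Next I would simply invoke the two preceding lemmas: Lemma \ref{lem:22820} gives $T(a,b)=0$ for $(a,b)\in K_{0,0}\cup K_{1,1}$, and Lemma \ref{lem:21} gives $T(a,b)=0$ for $(a,b)\in K_{0,1}\cup K_{1,0}$. Since these four sets exhaust the hypothesis set, the conclusion $T(a,b)=0$ follows for every admissible $(a,b)$.

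There is essentially no obstacle here, since the substantive work was already carried out in Lemma \ref{lem:21} and Lemma \ref{lem:22820} (which in turn relied on the integrality of $T(a,b)$, the Gaussian period evaluations of Lemma \ref{lem:GHW11}, and the generalized cyclotomic number identities of Lemma \ref{lem:5}). The only point worth an explicit sanity check is that the four sub-cases $E_2$, $E_2+2+4\eta_{0}^{(2,r)}$, $E_2-2-4\eta_{0}^{(2,r)}$ appearing in the Case 8 analysis of Theorem \ref{thm:2} are indeed all covered — but this is immediate, because $a+b,a-b\in\bF_{r}^\ast$ forces each of them into $C_{0}^{(2,r)}$ or $C_{1}^{(2,r)}$, leaving no uncovered possibility.
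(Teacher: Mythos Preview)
Your proposal is correct and matches the paper's approach exactly: the paper's own proof of this lemma is a single line invoking Lemma~\ref{lem:21} and Lemma~\ref{lem:22820}, and your argument makes explicit the (immediate) partition of the hypothesis set into $K_{0,0}\cup K_{1,1}\cup K_{0,1}\cup K_{1,0}$ that underlies this invocation.
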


With above preparations, we now give the symbol-pair weight distribution of $\C_{(q,2,h,2)}$.

\begin{theorem}\label{thm:4}
Let $\C_{(q,m,h,e)}$ be the cyclic code defined in (\ref{cqmhe}).
Let $\tfrac{q-1}{h} \equiv 1 \pmod 2$, then $\C_{(q,2,h,2)}$ is an $[h(q+1),4]$ code with the symbol-pair weight enumerator
\begin{equation}
\begin{aligned}
1+(r-1)z^{h(q-1)}+(r-1)(q-1)z^{hq}+(r-1)(r+1-q)z^{h(q+1)}.\nonumber
\end{aligned}
\end{equation}
\end{theorem}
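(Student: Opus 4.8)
The plan is to push everything through the identity $\mathrm{wt}_p(\mathbf{c}(a,b)) = n - T(a,b)$ of (\ref{eq:cab}), in which $n = \tfrac{h(r-1)}{q-1} = h(q+1)$ since $r = q^2$. So it suffices to determine the complete value distribution of $T(a,b)$ as $(a,b)$ runs over $(\mathbb{F}_r,\mathbb{F}_r)$ and then translate it into symbol-pair weights. Essentially all of the analytic work is already available: Cases~1--3 in the proof of Theorem~\ref{thm:2} dispose of the degenerate pairs, and Lemmas~\ref{lem:GHWs720}, \ref{lem:GHWs7201}, \ref{lem:GHWs} evaluate $T(a,b)$ on the remaining pieces.

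Concretely, I would partition $(\mathbb{F}_r,\mathbb{F}_r)$ into five blocks and record $T$ on each: (i) $a=b=0$ (one pair), where $T(a,b)=\tfrac{h(r-1)}{q-1}=n$; (ii) exactly one of $a,b$ equal to $0$ ($2(r-1)$ pairs), where $T(a,b)=\tfrac{h(q^{m-2}-1)}{q-1}=0$ by Cases~2--3 of Theorem~\ref{thm:2} with $m=2$; (iii) $a=\pm b\neq 0$ ($2(r-1)$ pairs), where by Lemma~\ref{lem:GHWs720} $T(a,b)=0$ on $r-1$ of them and $T(a,b)=2h$ on the other $r-1$; (iv) $\pm\tfrac{a}{b}\in B$, with $B$ as in (\ref{eq:B1B2}) ($2(r-1)(q-1)$ pairs), where by Lemma~\ref{lem:GHWs7201} $T(a,b)=0$ on $(r-1)(q-1)$ of them and $T(a,b)=h$ on the other $(r-1)(q-1)$; (v) everything else, i.e.\ $a\neq\pm b\neq 0$ and $\pm\tfrac{a}{b}\notin B$, where $T(a,b)=0$ by Lemma~\ref{lem:GHWs}. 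Via $\mathrm{wt}_p=n-T$ these produce exactly the symbol-pair weights $0$ (block (i)), $h(q+1)=n$, $h(q-1)=n-2h$ and $hq=n-h$.

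Before adding up multiplicities I would check that blocks (i)--(v) genuinely partition $(\mathbb{F}_r,\mathbb{F}_r)$; the only substantive point is that $\{1\},\{-1\},B,-B$ are pairwise disjoint subsets of $\mathbb{F}_r^{\ast}$ and that $|B|=q-1$. Indeed, $0,\pm 1\notin B$ because $\tfrac{gy-1}{1+gy}\in\{0,1,-1\}$ would force $g\in\mathbb{F}_q$; $B\cap(-B)=\varnothing$ because $\tfrac{gy_1-1}{1+gy_1}=-\tfrac{gy_2-1}{1+gy_2}$ gives $g^2y_1y_2=1$, so $g^2\in\mathbb{F}_q$ and then $\mathrm{ord}(g)=h(q+1)$ would divide $2(q-1)$, which is impossible for $r=q^2$; and $y\mapsto\tfrac{gy-1}{1+gy}$ is injective on $\mathbb{F}_q^{\ast}$ by cross-multiplication. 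Hence block (v) contains $(r-1)^2-2(r-1)-2(r-1)(q-1)=(r-1)(r-2q-1)$ pairs, and collecting contributions gives multiplicity $r-1$ for weight $h(q-1)$ (block (iii)), multiplicity $(r-1)(q-1)$ for weight $hq$ (block (iv)), and multiplicity $2(r-1)+(r-1)+(r-1)(q-1)+(r-1)(r-2q-1)=(r-1)(r+1-q)$ for weight $h(q+1)$ (blocks (ii)--(v)), besides the zero codeword; the total is $r^2=q^4$. This is the claimed enumerator. Finally, since $T(a,b)\le 2h<h(q+1)=n$ for every $(a,b)\neq(0,0)$, the $\mathbb{F}_q$-linear map $(a,b)\mapsto\mathbf{c}(a,b)$ has trivial kernel, so $|\C_{(q,2,h,2)}|=r^2$, i.e.\ the dimension is $4$, and the length is $h(q+1)$.

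The genuine difficulty sits entirely in the supporting lemmas (Lemmas~\ref{lem:GHWs720}, \ref{lem:21}, \ref{lem:22820}, \ref{lem:GHWs7201}, \ref{lem:GHWs}), which already evaluate $T(a,b)$ block by block; once those are in hand, Theorem~\ref{thm:4} is a partition-and-count and needs no new number-theoretic input. The main thing to handle with care is the bookkeeping: keeping $\{1\},\{-1\},B,-B$ disjoint so the block sizes sum to $r^2$, and tracking which block feeds which of the three nonzero symbol-pair weights.
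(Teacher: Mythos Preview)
Your proposal is correct and follows the same approach as the paper's proof: both combine the case analysis of Theorem~\ref{thm:2} with Lemmas~\ref{lem:GHWs720}, \ref{lem:GHWs7201}, and \ref{lem:GHWs} to obtain the value distribution of $T(a,b)$, then invoke (\ref{eq:cab}) and read off the dimension from $T(a,b)<n$ for $(a,b)\neq(0,0)$. Your version is more explicit than the paper's (which simply asserts the $T$-distribution (\ref{eq:0715015})), in that you carefully verify the disjointness of $\{1\},\{-1\},B,-B$ so that blocks (i)--(v) genuinely partition $(\mathbb{F}_r,\mathbb{F}_r)$ and the multiplicities sum to $r^2$; this bookkeeping is implicit in the paper but adds no new idea.
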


\begin{proof}
Combining Theorem \ref{thm:2}, Lemma \ref{lem:GHWs720}, Lemma \ref{lem:GHWs7201} and Lemma \ref{lem:GHWs}, it is easy to get that
\begin{equation}\label{eq:0715015}
\begin{aligned}
T(a,b)=\left\{\begin{array}{llll}
h(q+1),    &  1 & {\rm time,}\\
2h ,    &(r-1)& {\rm times,}\\
h ,     &(r-1)(q-1)& {\rm times,}\\
0 ,     &(r-1)(r+1-q)&  {\rm times.}
         \end{array}
\right.
\end{aligned}
\end{equation}
For any $(a,b) \neq (0,0)$, we have that $T(a,b)$ is less than the length of the code. Then the dimension of the code $\C_{(q,2,h,2)}$ in (\ref{cqmhe}) is $4$.
From  (\ref{eq:cab}) and (\ref{eq:0715015}), the symbol-pair weight enumerator of $\C_{(q,2,h,2)}$ can be calculated. The desired result then follows.
\end{proof}

\begin{example}
Let $q=3$, $m=2$, $e=2$ and $h=2$. Then $\C_{(q,m,h,e)}$ is a $[8,4]$ code over $\bF_{3}$ with the symbol-pair weight enumerator $1+8z^4+16z^6+56z^8$. The result is verified by Magma programs.
\end{example}

\begin{example}
Let $q=13$, $m=2$, $e=2$ and $h=4$. Then $\C_{(q,m,h,e)}$ is a $[56,4]$ code over $\bF_{13}$ with the symbol-pair weight enumerator $1+168z^{48}+2016z^{52}+26376z^{56}$. The result is verified by Magma programs.
\end{example}

\subsection{The case  $\tfrac{q-1}{h} \equiv 0 \pmod 2$}

In this subsection, we always assume that $\tfrac{q-1}{h} \equiv 0 \pmod 2$. The possible symbol-pair weights of
$\mathcal{C}_{(q,m,h,2)}$ are given as follows.

\begin{theorem}\label{thm:3}
Let $gcd(m,\tfrac{2(q-1)}{h})=2$. When $(a,b)$ runs through $(\mathbb{F}_r,\mathbb{F}_r)$, then the set of all possible symbol-pair weights of $\C_{(q,m,h,2)}$ is
\begin{equation}\label{eq:0822}
\begin{split}
&\Big\{ 0, \, hq^{m-2}(q+1) \pm hq^{\tfrac{m}{2}-2}(3-q+2(\tfrac{q-1}{h},0)^{(\tfrac{r-1}{q-1},2,r)}),\, hq^{m-1} \pm hq^{\tfrac{m}{2}-1},\\
&hq^{m-2}(q+\tfrac{1}{2}) \pm hq^{\tfrac{m}{2}-2}(\tfrac{3}{2}+(\tfrac{2(q-1)}{h},0)^{(\tfrac{r-1}{q-1},2,r)}),\,
hq^{m-2}(q+1)+hq^{\tfrac{m}{2}-2}(t+l+\xi-q)
\Big\},
\end{split}
\end{equation}
where $0\leq t,l\leq q-1$ and $\xi=3, \pm 1$. Moreover, the code achieves the largest minimum symbol-pair distance compared with its minimum Hamming distance, i.e., $$d_p(\C_{(q,m,h,2)})=2d_H(\C_{(q,m,h,2)})=hq^{m-1} - hq^{\tfrac{m}{2}-1}.$$
\end{theorem}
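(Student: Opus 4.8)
The plan is to run the same eight-case argument used to prove Theorem~\ref{thm:2}, the only structural change being that now $g=\alpha^{(q-1)/h}$ is a \emph{square} in $\bF_r^\ast$. By (\ref{eq:cab}) it suffices to list the possible values of $T(a,b)$ from (\ref{eq:Tab}). Since $e=2$ we again have $\beta=-1$; the hypothesis $\gcd(m,\tfrac{2(q-1)}{h})=2$ forces $m$ (hence $sm$) even, while $\tfrac{q-1}{h}\equiv 0\pmod 2$ gives $C_{(q-1)/h}^{(2,r)}=C_0^{(2,r)}$, so \emph{all four} inner character sums in $T(a,b)$ now run over $C_0^{(2,r)}$, and (\ref{eq:Tab}) becomes $T(a,b)=\tfrac{h(r-1)}{q^2(q-1)}+\tfrac{2h}{q^2}\sum_{x\in C_0^{(2,r)}}\big(\chi(x(a+b))+\chi(x(a-b))\big)+\tfrac{h}{q^2}\sum_{x\in C_0^{(2,r)}}\sum_{y\in\bF_q^\ast}\big(\chi(x((a+b)+(a-b)gy))+\chi(x((a-b)+(a+b)gy))\big)$.

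Next I would split $(\bF_r,\bF_r)$ into the eight cases used for Theorem~\ref{thm:2}: (1) $a=b=0$; (2) $a\neq 0,\,b=0$; (3) $a=0,\,b\neq 0$; (4) $a=-b\neq 0$; (5) $a=b\neq 0$; (6) $\tfrac ab\in B$; (7) $-\tfrac ab\in B$, with $B$ as in (\ref{eq:B1B2}); (8) $a\neq\pm b\neq 0$ and $\pm\tfrac ab\notin B$. In each case the inner sums are evaluated by the same machinery: Lemma~\ref{lem:GHWq} (so $\bF_q^\ast\subset C_0^{(2,r)}$, using $q$ odd and $m$ even), Lemma~\ref{lem:GHW11} to substitute the Gaussian periods $\eta_0^{(2,r)},\eta_1^{(2,r)}$ (for $sm$ even these are $\tfrac{-1\pm q^{m/2}}{2}$), and the generalized cyclotomic numbers to count membership in $C_0^{(2,r)},C_1^{(2,r)}$. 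The feature distinguishing this from Theorem~\ref{thm:2} is that because $g$ is a square, products no longer flip cyclotomic classes, and moreover $\{gy:y\in\bF_q^\ast\}=C_{(q-1)/h}^{((r-1)/(q-1),r)}$ and $\{g^2y:y\in\bF_q^\ast\}=C_{2(q-1)/h}^{((r-1)/(q-1),r)}$ (here one uses $-1\in\langle\alpha^{(r-1)/(q-1)}\rangle$, as $q-1$ is even and $2\mid h$); consequently in Cases~2 and 3 the argument $1+gy$ sweeps $1+C_{(q-1)/h}^{((r-1)/(q-1),r)}$ and produces the number $\big(\tfrac{q-1}{h},0\big)^{(\frac{r-1}{q-1},2,r)}$, while in Cases~6 and 7 the argument $1-g^2y_1y$ sweeps $1+C_{2(q-1)/h}^{((r-1)/(q-1),r)}$ and produces $\big(\tfrac{2(q-1)}{h},0\big)^{(\frac{r-1}{q-1},2,r)}$ — this is exactly why both numbers appear in (\ref{eq:0822}). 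As in Theorem~\ref{thm:2} I would set $t=\big|\{(a+b)+(a-b)gy:y\in\bF_q^\ast\}\cap C_0^{(2,r)}\big|$ and $l=\big|\{(a-b)+(a+b)gy:y\in\bF_q^\ast\}\cap C_0^{(2,r)}\big|$, note $0\le t,l\le q-1$, and collect the outputs of the eight cases, together with the sub-cases coming from the cyclotomic class of $a$ and of $a\pm b$, into a table of values of $T(a,b)$; substituting into $w_p(\mathbf{c}(a,b))=n-T(a,b)$ with $n=\tfrac{h(q^m-1)}{q-1}$ yields the set (\ref{eq:0822}), where the constant $\xi\in\{3,\pm1\}$ absorbs the Case~8 parity sub-cases.

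For the minimum-distance claim I would argue as follows. Scanning the table, the largest value of $T(a,b)$ over $(a,b)\neq(0,0)$ occurs in Case~4 or Case~5, where $T(a,b)=\tfrac hq\big(\tfrac{r-1}{q-1}+2\eta_i^{(2,r)}\big)$ with $i$ the index for which $\eta_i^{(2,r)}=\tfrac{q^{m/2}-1}{2}$, giving $T(a,b)=h\tfrac{q^{m-1}-1}{q-1}+hq^{m/2-1}$; this value is attained because the relevant cyclotomic class of $a$ is non-empty. Using $0\le t,l\le q-1$ together with $0\le\big(\tfrac{q-1}{h},0\big)^{(\frac{r-1}{q-1},2,r)},\big(\tfrac{2(q-1)}{h},0\big)^{(\frac{r-1}{q-1},2,r)}\le q-1$ (from Lemma~\ref{lem:5}(3)), one checks that no other case yields a larger $T(a,b)$. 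Hence the minimum symbol-pair weight of $\C_{(q,m,h,2)}$ equals $n-\big(h\tfrac{q^{m-1}-1}{q-1}+hq^{m/2-1}\big)=hq^{m-1}-hq^{m/2-1}$. Since $d_H(\C_{(q,m,h,2)})=\tfrac12(hq^{m-1}-hq^{m/2-1})$ by \cite[Theorem~6]{CMa2010} and $d_p(\C)\le 2d_H(\C)$ for every code, we conclude $d_p(\C_{(q,m,h,2)})=2d_H(\C_{(q,m,h,2)})=hq^{m-1}-hq^{m/2-1}$.

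The main obstacle is the bookkeeping in Cases~6--8: for each placement of $a+b$ and $a-b$ in $C_0^{(2,r)}$ or $C_1^{(2,r)}$ one must identify which of the generalized cyclotomic numbers $\big(\tfrac{q-1}{h},j\big)^{(\frac{r-1}{q-1},2,r)}$ and $\big(\tfrac{2(q-1)}{h},j\big)^{(\frac{r-1}{q-1},2,r)}$ enters, then use Lemma~\ref{lem:5}(1)--(3) and $\eta_0^{(2,r)}+\eta_1^{(2,r)}=-1$ to simplify, and verify that after clearing denominators the resulting $T(a,b)$ lands in the stated list. A secondary point worth care is that Theorem~\ref{thm:3} asserts only the \emph{set} of possible weights, so the equality $d_p=2d_H$ additionally needs the extremal weight $hq^{m-1}-hq^{m/2-1}$ to be realized by an actual codeword — which, as noted, is immediate from the non-emptiness of the Case~4 (or Case~5) sub-case.
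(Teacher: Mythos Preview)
Your proposal is correct and follows essentially the same route as the paper: both start from the specialized form of $T(a,b)$ with all four sums running over $C_0^{(2,r)}$ (since $\tfrac{q-1}{h}$ is even), perform the same case analysis on $(a,b)$, and invoke \cite[Theorem~6]{CMa2010} together with $d_p\le 2d_H$ for the minimum-distance claim. The only cosmetic difference is that the paper merges your Cases~4--5 into one and your Cases~6--7 into one (so it lists six cases rather than eight), and it simply declares the extremal value ``obvious'' from the list~(\ref{eq:0822}) whereas you spell out the comparison of cases more explicitly.
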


\begin{proof}
In order to obtain the desired results, from (\ref{eq:cab}) we only need to consider the possible value of $T(a,b)$ for $(a,b)$ running through $(\mathbb{F}_r,\mathbb{F}_r)$.

By the definition of cyclotomic class, it is clear that $C_{(q-1)/h}^{(2,r)}=C_{0}^{(2,r)}$ if $\tfrac{q-1}{h} \equiv 0\pmod 2$. Since $e=2$, by the definition of $\beta$, we have $\beta=-1$. Then from (\ref{eq:Tab}) we have
\begin{equation}\label{Tab0711}
\begin{aligned}
T(a,b)&=\tfrac{h(r-1)}{q^2(q-1)}+\tfrac{2h}{q^2}\Big[\sum_{x \in C_{0}^{(2,r)}}\chi(x(a+b))+\sum_{x \in C_{0}^{(2,r)}}\chi(x(a-b))\Big]\\
&+\tfrac{h}{q^2}\Big[\sum_{x \in C_{0}^{(2,r)}}\sum_{y \in \bF_{q}^\ast}\chi(x(a+b+(a- b)gy))+\sum_{x \in C_{0}^{(2,r)}}\sum_{y \in \bF_{q}^\ast}\chi(x(a-b+(a+b
)gy))\Big]\\
\end{aligned}
\end{equation}
since $d=\gcd(m,\tfrac{2(q-1)}{h})=2$.
We now prove this theorem from the following six cases.

\noindent{\bf Case 1:} $a=0$ and $b=0$. From the definition of $T(a,b)$,  we have $T(a,b)=\tfrac{h(r-1)}{q-1}.$

\noindent{\bf Case 2:} $a\neq0$ and $b=0$. In this case, (\ref{Tab0711}) can be written as
\begin{equation*}
\begin{aligned}
T(a,b)&=\tfrac{h(r-1)}{q^2(q-1)}+\tfrac{4h}{q^2}\sum_{x \in C_{0}^{(2,r)}}\chi(ax)+\tfrac{2h}{q^2}\sum_{x \in C_{0}^{(2,r)}}\sum_{y \in \bF_{q}^\ast}\chi(xa(1+gy)).\nonumber
\end{aligned}
\end{equation*}
It is clear that $1+gy\neq0$ for $y\in\bF_{q}^\ast$ since $g\notin\bF_{q}$. Then
\begin{equation*}
\{gy:y\in\mathbb{F}_q^*\}=\alpha^{\frac{q-1}{h}}\langle\alpha^{\frac{r-1}{q-1}}\rangle=C_{\frac{q-1}{h}}^{(\frac{r-1}{q-1},r)}.
\end{equation*}
By the definition of generalized cyclotomic numbers, we have
\begin{equation*}
\begin{aligned}
&|\left\{1+gy\,:\,y \in \bF_{q}^{\ast}  \right\} \cap C_{i}^{(2,r)}|=(\tfrac{q-1}{h},i)^{(\frac{r-1}{q-1},2,r)},
\end{aligned}
\end{equation*}
where $i=0$ or $i=1$.
If $a \in C_0^{(2,r)}$, we have
\begin{equation*}
\begin{aligned}
T(a,b)&=\tfrac{h(r-1)}{q^2(q-1)}+\tfrac{4h}{q^2}\eta_{0}^{(2,r)}+
\tfrac{2h}{q^2}\Big[(\tfrac{q-1}{h},0)^{(\tfrac{r-1}{q-1},2,r)}\eta_{0}^{(2,r)}+(\tfrac{q-1}{h},1)^{(\tfrac{r-1}{q-1},2,r)}\eta_{1}^{(2,r)}\Big]\\
&=\tfrac{h(r-1)}{q^2(q-1)}+\tfrac{4h}{q^2}\eta_{0}^{(2,r)}+\tfrac{2h}{q^2}\Big[(\tfrac{q-1}{h},0)^{(\tfrac{r-1}{q-1},2,r)}\eta_{0}^{(2,r)}+
(q-1-(\tfrac{q-1}{h},0)^{(\tfrac{r-1}{q-1},2,r)})(-1-\eta_{0}^{(2,r)})\Big]\\
&=\tfrac{h(q^{m-2}-1)}{q-1}+\tfrac{h}{q^2}(1+2\eta_{0}^{(2,r)})(3-q+2(\tfrac{q-1}{h},0)^{(\tfrac{r-1}{q-1},2,r)}).
\end{aligned}
\end{equation*}
In the second equality, we used the fact that  $\eta_{0}^{(2,r)}+\eta_{1}^{(2,r)}=-1$, which is given in Lemma \ref{lem:GHW11}.
If $a \in C_1^{(2,r)}$, by the same way as above, we have $$T(a,b)=\tfrac{h(q^{m-2}-1)}{q-1}+\tfrac{h}{q^2}(1+2\eta_{1}^{(2,r)})(3-q+2(\tfrac{q-1}{h},0)^{(\tfrac{r-1}{q-1},2,r)}).$$

\noindent{\bf Case 3:} $a=0$ and $b\neq0$. With
discussions similar to Case 2, we have
$$T(a,b)=\tfrac{h(q^{m-2}-1)}{q-1}+\tfrac{h}{q^2}(1+2\eta_{i}^{(2,r)})(3-q+2(\tfrac{q-1}{h},0)^{(\tfrac{r-1}{q-1},2,r)})$$
if $b \in C_i^{(2,r)}$, where $i=0$ or $i=1$.

\noindent{\bf Case 4:} $a= b\neq0$ or  $a= -b\neq0$.
With an analysis similar to Case 4 of Theorem \ref{thm:2}, we have
$$T(a,b)=\tfrac{h}{q}(\tfrac{r-1}{q-1}+2\eta_i^{(2,r)})$$
if $a \in C_i^{(2,r)}$, where $i=0$ or $i=1$.

\noindent{\bf Case 5:} $\frac{a}{b} \in U$ or $-\frac{a}{b} \in U$, where
\begin{equation}\label{eq:B1B2b}
\begin{split}
U=\left\{\tfrac{gy-1}{1+gy}:y \in \bF_{q}^\ast\right\}.
\end{split}
\end{equation}
With discussions similar to Case 6 of Theorem \ref{thm:2}, we have
$$T(a,b)=\tfrac{h(q^{m-2}-1)}{q-1}+\tfrac{h}{q^2}(\tfrac{q^m}{2}+\tfrac{3}{2}+3\eta_{i}^{(2,r)}+(\tfrac{2(q-1)}{h},0)^{(\tfrac{r-1}{q-1},2,r)}(1+2\eta_{i}^{(2,r)}))$$
if $a-b\in C_{i}^{(2,r)}$, where $i=0$ or $i=1$.

\noindent{\bf Case 6:} $a\neq\pm b\neq 0$ and $\pm\frac{a}{b} \notin U$.
With an analysis similar to the Case 8 of Theorem \ref{thm:2}, we have
\begin{equation*}
\begin{aligned}
T(a,b)=\begin{cases}
\tfrac{h(q^{m-2}-1)}{q-1}+\tfrac{h}{q^2}(1+2\eta_{0}^{(2,r)})(t+l+3-q),&   {\rm if} \,\,\, a+b\in C_0^{(2,r)}\,\,\,{\rm and}\,\,\,a-b\in C_0^{(2,r)},\\
\tfrac{h(q^{m-2}-1)}{q-1}+\tfrac{h}{q^2}(1+2\eta_{0}^{(2,r)})(t+l-q-1), & {\rm if}  \,\,\, a+b\in C_1^{(2,r)}\,\,\,{\rm and}\,\,\,a-b\in C_1^{(2,r)},\\
\tfrac{h(q^{m-2}-1)}{q-1}+\tfrac{h}{q^2}(1+2\eta_{0}^{(2,r)})(t+l+1-q), &  {\rm if}  \,\,\, a+b\in C_i^{(2,r)}\,\,\,{\rm and} \,\,\,a-b\in C_{i+1}^{(2,r)},
\end{cases}
\end{aligned}
\end{equation*}
where
$$t=|\{a+b+(a-b)gy\,:\,y\in \mathbb{F}_q^*\} \cap C_{0}^{(2,r)}|\,\, \text{and}\,\,l=|\{a-b+(a+b)gy\,:\,y\in \mathbb{F}_q^*\} \cap C_{0}^{(2,r)}|.$$
Obviously, we have $0\leq t,l\leq q-1$.

Combining all the cases, let $M_i=\tfrac{q^m}{2}+\tfrac{3}{2}+3\eta_{i}^{(2,r)}+(\tfrac{2(q-1)}{h},0)^{(\tfrac{r-1}{q-1},2,r)}(1+2\eta_{i}^{(2,r)})$ for $i=0$ or $i=1$, $M_2=2(\tfrac{q-1}{h},0)^{(\tfrac{r-1}{q-1},2,r)}$ and $U$ be defined in (\ref{eq:B1B2b}), we have
\begin{equation}\label{eq:0825014}
T(a,b)=\tfrac{h(q^{m-2}-1)}{q-1}+\tfrac{h}{q^2}T'(a,b),
\end{equation}
where
\begin{equation}\label{eq:0715014}
\begin{aligned}
T'(a,b)=\begin{cases}
q^{m}(q+1),  & {\rm if }\,\,\,a=0\,\,\, {\rm and}\,\,\,b=0,\\
(1+2\eta_{i}^{(2,r)})(3-q+M_2), &{\rm if }\,\,\,a \in C_i^{(2,r)}\,\,\, {\rm and}\,\,\, b=0,\,\,\, {\rm or} \,\,\, b \in C_i^{(2,r)}\,\,\, and\,\,\, a=0,\\
q^m+q+2q\eta_{i}^{(2,r)},  &{\rm if} \,\,\, a= b \in C_i^{(2,r)}, \,\,\, {\rm or} \,\,\,a= -b \in C_i^{(2,r)},\\
M_i,&{\rm if} \,\,\,  \tfrac{a}{b}\in U \,\,\, {\rm and}\,\,\, a-b\in C_i^{(2,r)},\,\, {\rm or} \,
-\tfrac{a}{b}\in U \,\,\, {\rm and}\,\,\, a-b\in C_i^{(2,r)},\\
(1+2\eta_{0}^{(2,r)})(t+l+3-q),  & {\rm if} \,\,\,\, a\neq0, \,\,\pm\tfrac{a}{b}\notin U, \,\,\,  a+b\in C_0^{(2,r)}\,\,\,{\rm and} \,\,\,a-b\in C_0^{(2,r)},\\
(1+2\eta_{0}^{(2,r)})(t+l-q-1),  & {\rm if} \,\,\,\, a\neq0,\,\,\,\pm\tfrac{a}{b}\notin U, \,\,\, a+b\in C_1^{(2,r)}\,\,\,{\rm and} \,\,\,a-b\in C_1^{(2,r)},\\
(1+2\eta_{0}^{(2,r)})(t+l+1-q),  & {\rm if} \,\,\,\, a\neq0,\,\,\,\pm\tfrac{a}{b}\notin U,  \,\,\, a+b\in C_i^{(2,r)}\,\,\,{\rm and} \,\,\,a-b\in C_{i+1}^{(2,r)}\\
\end{cases}
\end{aligned}
\end{equation}
for $i=0$ or $i=1$, and the indices are taken modulo $2$.
 Then from Lemma \ref{lem:GHW11} and (\ref{eq:cab}), the possible symbol-pair weights of $\C_{(q,m,h,2)}$ can be given and we show this result in (\ref{eq:0822}).

From (\ref{eq:0822}), it is obvious that the minimum possible symbol-pair weight of $\C_{(q,m,h,2)}$ is $hq^{m-1} - hq^{\tfrac{m}{2}-1}$. In \cite[Theorem 6]{CMa2010}, the authors showed that $d_H(\C_{(q,m,h,2)})=\frac{1}{2}(hq^{m-1} - hq^{\tfrac{m}{2}-1})$, then $d_p(\C_{(q,m,h,2)})=hq^{m-1} - hq^{\tfrac{m}{2}-1}$ since $ d_p(\C) \leq 2d_H(\C)$ for any code $\C$. The desired result then follows.
\end{proof}

Similar to the proof of the case $\tfrac{q-1}{h} \equiv 1 \pmod 2$, for the special case $m=2$, we can show that $\mathcal{C}_{(q,2,h,2)}$ is a three symbol-pair weight code and determine its symbol-pair weight distribution. Details of the proof are omitted here.

\begin{theorem}\label{thm:5}
Let $\C_{(q,m,h,e)}$ be the cyclic code defined in (\ref{cqmhe}).
Let $e=m=2$ and $\tfrac{q-1}{h} \equiv 0 \pmod 2$, then $\C_{(q,2,h,2)}$ is an $[h(q+1),4]$ code with the symbol-pair weight enumerator
\[
1+(r-1)z^{h(q-1)}+(r-1)(q-1)z^{hq}+(r-1)(r+1-q)z^{h(q+1)}.
\]
\end{theorem}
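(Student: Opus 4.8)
The plan is to mirror the proof of Theorem~\ref{thm:4} step by step, with Theorem~\ref{thm:3} in place of Theorem~\ref{thm:2} and with the analogues of Lemmas~\ref{lem:GHWs720}, \ref{lem:GHWt}, \ref{lem:GHWs7201}, \ref{lem:21}, \ref{lem:22820} and \ref{lem:GHWs} re-proved under the hypothesis $\tfrac{q-1}{h}\equiv 0\pmod 2$. The only structural change is that now $g=\alpha^{(q-1)/h}\in C_0^{(2,r)}$ and $C_{(q-1)/h}^{(2,r)}=C_0^{(2,r)}$, so multiplication by $g$ preserves the two quadratic classes rather than swapping them, and even- and odd-index classes of order $q+1$ trade roles relative to the proof of Theorem~\ref{thm:4}. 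Put $m=2$, so $r=q^2$ and $n=h(q+1)$; then the term $\tfrac{h(q^{m-2}-1)}{q-1}$ in (\ref{eq:0825014}) vanishes, by Lemma~\ref{lem:GHW11} one has $\eta_0^{(2,r)},\eta_1^{(2,r)}\in\{\tfrac{-1-q}{2},\tfrac{q-1}{2}\}$ so $1+2\eta_i^{(2,r)}=\pm q$, and $M_2=2(\tfrac{q-1}{h},0)^{(q+1,2,r)}$. The goal is the value distribution of $T(a,b)$: the value $h(q+1)$ once, $2h$ exactly $r-1$ times, $h$ exactly $(r-1)(q-1)$ times, and $0$ exactly $(r-1)(r+1-q)$ times (which sum to $r^2$).

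I would first dispose of the degenerate families by substituting the relevant branches of (\ref{eq:0715014}) into (\ref{eq:0825014}) and using that $T(a,b)$ is a non-negative integer, $T(a,b)\le n=h(q+1)$, and $\gcd(h,q)=1$. For $a=\pm b\neq 0$ (Case~4 of Theorem~\ref{thm:3}), $T(a,b)=\tfrac{h}{q}(q+1+2\eta_i^{(2,r)})\in\{0,2h\}$, and since $|C_0^{(2,r)}|=|C_1^{(2,r)}|=\tfrac{r-1}{2}$ and the two Gaussian periods differ, this family of $2(r-1)$ pairs gives $T=0$ and $T=2h$ each $r-1$ times (analogue of Lemma~\ref{lem:GHWs720}). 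For $a\neq 0=b$ or $a=0\neq b$ (Cases~2--3), the formula reduces to $T(a,b)=\pm\tfrac{h}{q}\big(3-q+2(\tfrac{q-1}{h},0)^{(q+1,2,r)}\big)$, and since $0\le(\tfrac{q-1}{h},0)^{(q+1,2,r)}\le q-1$ the only non-negative integer value in $[0,n]$ occurs for $(\tfrac{q-1}{h},0)^{(q+1,2,r)}=\tfrac{q-3}{2}$, forcing $T(a,b)=0$ on all such pairs. For $\pm\tfrac{a}{b}\in U$ (Case~5) the expression is literally (\ref{eq:0714}), so the computation of Lemma~\ref{lem:GHWs7201} transfers verbatim, giving $T(a,b)\in\{0,h\}$ and pinning $(\tfrac{2(q-1)}{h},0)^{(q+1,2,r)}=\tfrac{q-3}{2}$; together with $\big|\{(a,b):\tfrac{a}{b}\in U\}\big|=\big|\{(a,b):-\tfrac{a}{b}\in U\}\big|=(r-1)(q-1)$ (analogue of Lemma~\ref{lem:GHWt}) and with $U\cap(-U)=\emptyset$ --- which holds since $e=2\mid h$ makes $h$ even, so ${\rm ord}(g)=h(q+1)\nmid 2(q-1)$, i.e.\ $g^2\notin\bF_q$ --- this family of $2(r-1)(q-1)$ pairs gives $T=h$ and $T=0$ each $(r-1)(q-1)$ times.

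The substantive step is Case~6, $a\neq\pm b\neq 0$ and $\pm\tfrac{a}{b}\notin U$, where I must prove $T(a,b)=0$ (the analogue of Lemmas~\ref{lem:21} and~\ref{lem:22820}). Setting $u=a+b$, $v=a-b$, the ratio $\tfrac{a-b}{a+b}$ runs exactly over the even-index classes $C_{2i}^{(q+1,r)}$ with $C_{(q-1)/h}^{(q+1,r)}$ and $C_{q+1-(q-1)/h}^{(q+1,r)}$ deleted when $a+b,a-b$ lie in the same quadratic class, and over all odd-index classes when they lie in different classes; this is where $\tfrac{q-1}{h}\equiv 0\pmod 2$ is used, since it makes both deleted classes even-indexed --- precisely the ones whose removal keeps the first arguments of $t$ and $l$ off the index-$0$ class, on which the cyclotomic identities degenerate because $-1\in C_0^{(q+1,r)}$. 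Writing $a+b+(a-b)gy=(a+b)\big(1+\tfrac{a-b}{a+b}gy\big)$ and $a-b+(a+b)gy=(a-b)\big(1+\tfrac{a+b}{a-b}gy\big)$ and using $\{gy:y\in\bF_q^\ast\}=C_{(q-1)/h}^{(q+1,r)}$ with $\bF_q^\ast\subset C_0^{(2,r)}$ (Lemma~\ref{lem:GHWq}), the quantities $t,l$ of (\ref{eq:072701}) become generalized cyclotomic numbers $(\,\cdot\,,\epsilon)^{(q+1,2,r)}$ whose first arguments sweep the classes just listed. Then with $c\in\{3-q,-q-1,1-q\}$ taken from (\ref{eq:0715014}), integrality of $T(a,b)=\pm\tfrac{h}{q}(t+l+c)\in[0,n]$ already forces $t+l=q-1$, hence $T(a,b)=0$, in the different-class subcase; in the same-class subcases it only narrows $T(a,b)$ to $\{0,h\}$, and one must run the bootstrap of Lemmas~\ref{lem:21}--\ref{lem:22820} --- imposing integrality across the whole cyclotomic orbit and using the symmetry and summation identities of Lemma~\ref{lem:5} together with $(\tfrac{q-1}{h},0)^{(q+1,2,r)}=\tfrac{q-3}{2}$ --- to collapse every relevant $t+l$ to $-c$, i.e.\ to force $(\cdot,0)^{(q+1,2,r)}=\tfrac{q-3}{2}$ and $(\cdot,1)^{(q+1,2,r)}=\tfrac{q+1}{2}$ for all even indices. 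I expect this last bootstrap to be the main obstacle, exactly as in Lemmas~\ref{lem:21} and~\ref{lem:22820}.

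Finally I would assemble $T(a,b)\in\{h(q+1)\ (1\times),\ 2h\ ((r-1)\times),\ h\ ((r-1)(q-1)\times),\ 0\ ((r-1)(r+1-q)\times)\}$, note that $T(a,b)\le 2h<n$ for every $(a,b)\neq(0,0)$ so the linear map $(a,b)\mapsto{\bf c}(a,b)$ is injective and $\dim\C_{(q,2,h,2)}=2m=4$, and read off the symbol-pair weight enumerator from (\ref{eq:cab}) via ${\rm wt_p}({\bf c}(a,b))=h(q+1)-T(a,b)$, obtaining $1+(r-1)z^{h(q-1)}+(r-1)(q-1)z^{hq}+(r-1)(r+1-q)z^{h(q+1)}$.
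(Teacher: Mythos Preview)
Your proposal is correct and follows exactly the approach the paper itself indicates (the paper gives no proof, stating only that the argument is ``similar to the proof of the case $\tfrac{q-1}{h}\equiv 1\pmod 2$'' and that ``details of the proof are omitted here''). Your anticipated obstacle in the bootstrap does resolve cleanly: since $(q-1)/h$ even together with $2\mid h$ forces $4\mid q-1$, the integer $(q+1)/2$ is odd, whence $\gcd(2\sigma,q+1)=2$ for $\sigma=(q-1)/h$, so the shifts by $2\sigma$ act transitively on the nonzero even residues modulo $q+1$ and the propagation from the seeds $(\sigma,0)^{(q+1,2,r)}=(2\sigma,0)^{(q+1,2,r)}=\tfrac{q-3}{2}$ reaches every even nonzero index.
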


\begin{example}
Let $q=9$, $m=2$, $e=2$ and $h=4$. Then $\C_{(q,m,h,e)}$ is a $[40,4]$ code over $\bF_{9}$ with the symbol-pair weight enumerator $1+80z^{32}+640z^{36}+5840z^{40}$. The result is verified by Magma programs.
\end{example}

\begin{example}
Let $q=17$, $m=2$, $e=2$ and $h=4$. Then $\C_{(q,m,h,e)}$ is a $[72,4]$ code over $\bF_{17}$ with the symbol-pair weight enumerator $1+288z^{64}+4608 z^{68}+78624z^{72}$. The result is verified by Magma programs.
\end{example}

\section{The pair weight distribution of the three class of cyclic codes}\label{sec3}

In this section, we investigate the punctured code of $\C_{(q,m,h,e)}$, which is derived from this code by deleting some coordinates of
codewords. Some new linear codes are obtained.

Let $e=2$, from (\ref{cqmhe}) the codeword of $\C_{(q,m,h,e)}$ can be expressed as
$${\bf c}(a,b)=({\rm Tr}_q^r(ag^0+b(- g)^0),{\rm Tr}_q^r(ag+b(- g)),\ldots,{\rm Tr}_q^r(ag^{n-1}+b(- g)^{n-1}))$$
since $\beta=\alpha^{\frac{r-1}{2}}=-1$. If $n\equiv 0 \pmod 4$, then $ (-g)^{\frac{n}{2}}=\alpha^{\frac{r-1}{2}}=-1$, which implies that
$${\rm Tr}_q^r(ag^i+b(- g)^i)=-{\rm Tr}_q^r(ag^{i+\frac{n}{2}}+b(- g)^{i+\frac{n}{2}}),$$
where $0\leq i\leq \frac{n-2}{2}$. Let $${\bf c'}(a,b)=({\rm Tr}_q^r(ag^0+b(- g)^0),{\rm Tr}_q^r(ag+b(- g)),\ldots,{\rm Tr}_q^r(ag^{\frac{n-2}{2}}+b(- g)^{\frac{n-2}{2}})).$$
Then $ \C_{(q,m,h,e)}$ can be expressed as
$ \C_{(q,m,h,e)}=\left\{({\bf c}'(a,b),-{\bf c}'(a,b))\,:\,a,b \in \bF_{r}\right\}.$
Let \begin{equation}\label{eq:0821}
\C'_{(q,m,h,e)}=\left\{{\bf c}'(a,b)\,:\,a,b \in \bF_{r}\right\}.
\end{equation}
It is clear that the symbol-pair weights of the codewords in $\C'_{(q,m,h,e)}$ are the half of $\C_{(q,m,h,e)}$, and the dimension of $\C'_{(q,m,h,e)}$ is equal to  the dimension of $\C_{(q,m,h,e)}$. From Section III and Section IV, we have the following results, directly.

\begin{theorem}\label{thm:6}
Let $q$ be an odd prime power, $\C'_{(q,m,h,e)}$ be the cyclic code defined in (\ref{eq:0821}) and $\tfrac{h(r-1)}{q-1}\equiv 0 \pmod 4$. The following statements hold.
\begin{enumerate}
\item[(1)]
If $m>2$ and $\gcd\left(m,\tfrac{2(q-1)}{h}\right)=1$, then $\C'_{(q,m,h,e)}$ is an $\left[\tfrac{h(r-1)}{2(q-1)},2m\right]$ code with the symbol-pair weight enumerator
\[
 1+e(r-1)z^{\frac{hq^{m-2}(eq+e-2)}{2e}}+e(r-1)(q-1)z^{\frac{hq^{m-2}(eq+e-1)}{2e}}+(r-1)(r+1-eq)z^{\frac{hq^{m-2}(q+1)}{2}}.
\]
\item[(2)] If $m\geq2$ and $\gcd\left(m,\tfrac{2(q-1)}{h}\right)=2$, then $\C'_{(q,m,h,2)}$ is an $\left[\tfrac{h(r-1)}{2(q-1)},2m\right]$ code with the minimum symbol-pair distance $d_p(\C'_{(q,m,h,2)})=\tfrac{(hq^{m-1}-hq^{\frac{m}{2}-1})}{2}$. Moreover, $\C'_{(q,2,h,2)}$ is an $\left[\tfrac{h(q+1)}{2},4\right]$ code with the symbol-pair weight enumerator
\[
1+(r-1)z^{\frac{h(q-1)}{2}}+(r-1)(q-1)z^{\frac{hq}{2}}+(r-1)(r+1-q)z^{\frac{h(q+1)}{2}}.
\]
In addition, $\C'_{(q,2,2,2)}$ is an MDS symbol-pair code.
\end{enumerate}
\end{theorem}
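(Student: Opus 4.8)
The plan is to obtain everything by ``folding'' the code $\C_{(q,m,h,2)}$ studied in Sections III--IV onto $\C'_{(q,m,h,e)}$ and transporting the corresponding results. Since $q$ is odd, $\beta=\alpha^{(r-1)/2}=-1$, and the hypothesis $n:=\tfrac{h(r-1)}{q-1}\equiv 0\pmod 4$ gives $g^{n/2}=(-g)^{n/2}=\alpha^{(r-1)/2}=-1$; hence, exactly as in the display preceding the theorem, ${\bf c}(a,b)=\big({\bf c}'(a,b),\,-{\bf c}'(a,b)\big)$ for every $(a,b)\in(\bF_r,\bF_r)$. First I would read off two immediate consequences: the $\bF_q$-linear maps $(a,b)\mapsto{\bf c}'(a,b)$ and $(a,b)\mapsto{\bf c}(a,b)$ have the same kernel, so $\dim_{\bF_q}\C'_{(q,m,h,2)}=\dim_{\bF_q}\C_{(q,m,h,2)}=2m$ in all the cases at hand (this dimension being established by Theorems \ref{thm:1}, \ref{thm:4}, \ref{thm:5} and, for general $m$ with $\gcd(m,\tfrac{2(q-1)}{h})=2$, by the value lists of Theorems \ref{thm:2}, \ref{thm:3} together with \cite[Theorem~6]{CMa2010}); and every codeword of $\C'_{(q,m,h,2)}$ has length $n/2=\tfrac{h(r-1)}{2(q-1)}$.

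The one point needing genuine care is the weight identity $w_p\big({\bf c}'(a,b)\big)=\tfrac12\,w_p\big({\bf c}(a,b)\big)$ asserted in the text, which I would prove by a direct symbol-pair count. Writing ${\bf c}'={\bf c}'(a,b)=(c'_0,\dots,c'_{n/2-1})$, the $n$ cyclic symbol pairs of ${\bf c}(a,b)$ split into the $n/2-1$ ``interior'' pairs $(c'_i,c'_{i+1})$ of the first block, the $n/2-1$ interior pairs $(-c'_i,-c'_{i+1})$ of the second block, and the two ``seam'' pairs $(c'_{n/2-1},-c'_0)$ and $(-c'_{n/2-1},c'_0)$. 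Now $(c'_i,c'_{i+1})=(0,0)$ iff $(-c'_i,-c'_{i+1})=(0,0)$, and each of the two seam pairs equals $(0,0)$ iff the single cyclic wrap pair $(c'_{n/2-1},c'_0)$ of ${\bf c}'$ does. Hence the number of zero pairs of ${\bf c}(a,b)$ is exactly twice the number of zero pairs of ${\bf c}'$, i.e. $w_p({\bf c}'(a,b))=n/2-\tfrac12 T(a,b)=\tfrac12 w_p({\bf c}(a,b))$; in particular all symbol-pair weights of $\C_{(q,m,h,2)}$ are even under this hypothesis, so halving exponents stays integral.

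With the folding identity in hand, part (1) is routine: for $e=2$ the hypothesis $\gcd(m,\tfrac{2(q-1)}{h})=1$ is precisely that of Theorem \ref{thm:1}, so I take the symbol-pair weight enumerator of $\C_{(q,m,h,2)}$ given there and halve every exponent, which reproduces the displayed enumerator of $\C'_{(q,m,h,2)}$. For part (2), $\gcd(m,\tfrac{2(q-1)}{h})=2$ places us in Section IV: whichever the parity of $\tfrac{q-1}{h}$, Theorem \ref{thm:2} or Theorem \ref{thm:3} gives $d_p(\C_{(q,m,h,2)})=hq^{m-1}-hq^{m/2-1}$, and since $\C'$ is linear its minimum symbol-pair distance equals its minimum symbol-pair weight, which by the identity above is $\tfrac12(hq^{m-1}-hq^{m/2-1})$; for $m=2$ I halve the exponents in the enumerator of Theorem \ref{thm:4} (or Theorem \ref{thm:5}, if $\tfrac{q-1}{h}$ is even) to obtain the displayed enumerator of $\C'_{(q,2,h,2)}$, which then has length $\tfrac{h(q+1)}{2}$ and dimension $4$.

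Finally, for the MDS assertion I specialize $h=2$. Then $e=2\mid h$, $\gcd(m,\tfrac{2(q-1)}{h})=\gcd(2,q-1)=2$ (as $q$ is odd), and $\tfrac{h(r-1)}{q-1}=2(q+1)\equiv 0\pmod 4$, so part (2) with $m=2$ applies: $\C'_{(q,2,2,2)}$ is a $[q+1,4]$ code with symbol-pair weight enumerator $1+(r-1)z^{q-1}+(r-1)(q-1)z^{q}+(r-1)(r+1-q)z^{q+1}$, whence, all three coefficients being nonzero, $d_p=q-1$. Its size is $M=|\bF_r|^2=r^2=q^4$, and the Singleton-type bound $M\le q^{\,n-d_p+2}$ reads $q^4\le q^{(q+1)-(q-1)+2}=q^4$; equality holds, so $\C'_{(q,2,2,2)}$ is an MDS symbol-pair code. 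The only real obstacle in the whole argument is the seam-pair bookkeeping of the second paragraph; everything else is substitution into the results of Sections III--IV and into the Singleton-type bound.
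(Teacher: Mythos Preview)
Your proof is correct and follows the same approach as the paper: fold $\C_{(q,m,h,2)}$ onto $\C'_{(q,m,h,2)}$ via the identity ${\bf c}(a,b)=({\bf c}'(a,b),-{\bf c}'(a,b))$, deduce that dimensions agree and symbol-pair weights halve, and then read off the enumerators and minimum distances from Theorems \ref{thm:1}--\ref{thm:5}. The paper in fact asserts the halving of symbol-pair weights as ``clear'' and gives no proof of Theorem \ref{thm:6} beyond that sentence; your seam-pair bookkeeping and your explicit verification of the Singleton-type equality for $\C'_{(q,2,2,2)}$ fill in details the paper omits.
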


\begin{example}
Let $q=9$, $m=2$, $e=2$ and $h=4$. Then $\C'_{(q,m,h,e)}$ is a $[20,4]$ code over $\bF_{9}$ with the symbol-pair weight enumerator $1+80z^{16}+640z^{18}+5840z^{20}$. The result is verified by Magma programs.
\end{example}

\begin{example}
Let $q=17$, $m=2$, $e=2$ and $h=2$. Then $\C'_{(q,m,h,e)}$ is a $[18,4]$ symbol-pair MDS code over $\bF_{17}$ with the symbol-pair weight enumerator $1+288z^{16}+ 4608 z^{17}+78624z^{18}$. The result is verified by Magma programs.
\end{example}

\section{Conclusion}\label{sec-finals}
In this paper, we considered the cyclic codes defined in \cite{CMa2010} under the symbol-pair metric. The possible symbol-pair weights of these codes are obtained. The considered codes has three symbol-pair weights in some cases. In addition, when $e=2$, the codes achieves the largest minimum symbol-pair distances compared with their minimum Hamming distances.
The main contributions of this paper are the following:
\begin{itemize}
\item The symbol-pair weight distribution of $\C_{(q,m,h,e)}$ was given for the case that $m>2$ is a positive integer and ${\rm gcd}(m,\tfrac{e(q-1)}{h})=1$ (see Theorem \ref{thm:1}).
\item The possible symbol-pair weights and minimum symbol-pair distance of $\C_{(q,m,h,2)}$ were derived for the case that ${\rm gcd}(m,\tfrac{2(q-1)}{h})=2$ (see Theorem \ref{thm:2} and Theorem \ref{thm:3}).
\item  The symbol-pair weight distribution of $\C_{(q,2,h,2)}$ was given (see Theorem \ref{thm:4} and Theorem \ref{thm:5}).
\item The punctured code of $\C_{(q,m,h,e)}$ was considered and a class of MDS symbol-pair codes was obtained (see Theorem \ref{thm:6}).
\item The exact values of  a class of  generalized cyclotomic numbers were determined (see Remark \ref{rem:19}).
\end{itemize}


\begin{thebibliography}{99}


\bibitem {CB10} Y. Cassuto and M. Blaum, ``Codes for symbol-pair read channels," {\it  Proc. IEEE Int. Symp. Inf. Theory}, Austin, TX, USA, pp. 988-992, Jun. 2010.

\bibitem{CB11} Y. Cassuto and M. Blaum,  ``Codes for symbol-pair read channels,"  {\it IEEE Trans. Inf. Theory}, vol. 57, no. 12, pp. 8011-8020, Dec. 2011.


\bibitem{Calderbank84} A. R. Calderbank and J. M. Goethals, ``Three-weight codes and association schemes,"  {\it Philips J. Res}., vol. 39, pp. 143-152, 1984.

\bibitem {CJKWY13} Y. M. Chee, L. Ji, H. M. Kiah, C. Wang, and J. Yin, ``Maximum distance separable codes for symbol-pair read channels,"  {\it IEEE Trans. Inf. Theory}, vol. 59, no. 11, pp. 7259-7267, Nov. 2013.

\bibitem{Chen23} B. Chen and H. Liu, ``New Bounds on the Code Size
of Symbol-Pair Codes,"  {\it IEEE Trans. Inf. Theory}, vol. 69, no. 2, pp. 941-950, Feb. 2023.

\bibitem{Chen17} B. Chen, L. Lin, and H. Liu, ``Constacyclic symbol-pair codes: Lower bounds and optimal constructions," {\it IEEE Trans. Inf. Theory}, vol. 63, no. 12, pp. 7661-7666, Dec. 2017.

\bibitem{DingGe18} B. Ding, G. Ge, J. Zhang, T. Zhang, and Y. Zhang, ``New constructions of MDS symbol-pair codes," {\it Des. Codes Cryptogr}., vol. 86, no. 4, pp. 841-859, 2018.

\bibitem{CDing2010} C. Ding, Y. Liu, C. Ma, and L. Zeng, ``The weight distributions of the duals of cyclic codes with two zeros,"  {\it IEEE Trans. Inf. Theory}, vol. 57, no. 12, pp. 8000-8006, Dec. 2011.

\bibitem{Dinh22} H. Q. Dinh, X. Wang, H. Liu, and S. Sriboonchitta, ``On the symbol-pair distances of repeated-root constacyclic codes of length $2p^s$," {\it Discrete Math}., vol. 342, no. 11, pp. 3062-3078, Nov. 2019.

\bibitem{Dinh20} H. Q. Dinh, X. Wang, H. Liu, and S. Sriboonchitta, ``On the $b$-distances
of repeated-root constacyclic codes of prime power lengths", {\it Discrete Math}., vol. 343, no. 4, pp. 11780, Apr. 2020.

\bibitem{Elishco20} O. Elishco, R. Gabrys, and E. Yaakobi, ``Bounds and constructions of codes over symbol-pair read channels," {\it IEEE Trans. Inf. Theory}, vol. 66, no.3, pp. 1385-1395, Mar. 2020.

\bibitem{Horii16} S. Horii, T. Matsushima, and S. Hirasawa, ``Linear programming decoding of binary linear codes for symbol-pair read channel," {\it IEICE Trans. Fund. Electron., Commun. Comput. Sci}, vol. E99-A, no. 12, pp. 2170-2178, 2016.

\bibitem{Hirotomo14} M. Hirotomo, M. Takita, and M. Morii, ``Syndrome decoding of symbol-pair codes," {\it Proc. IEEE Inf. Theory Workshop}, pp. 162-166, 2014.

\bibitem{KZL15} X. Kai, S. Zhu, and P. Li, ``A construction of new MDS symbol-pair codes," {\it IEEE Trans.
Inf. Theory}, vol, 61, no. 11, pp. 5828-5834, Nov. 2015.

\bibitem{LiGe17} S. Li and G. Ge, ``Constructions of maximum distance separable symbol-pair codes using cyclic and constacyclic codes," {\it Des. Codes Cryptogr}., vol. 84, pp. 359-372, 2017.


\bibitem{Liu19} S. Liu, C. Xing, and C. Yuan, ``List decodability of symbol-pair codes," {\it IEEE Trans. Inf Theory}, vol. 65, no. 8, pp. 4815-4821, Aug. 2019.

\bibitem{aouche2023} Y. Maouche, ``Symbol-pair weight distribution of some irreducible cyclic codes." {\it Cryptogr. Commun}., vol. 15, pp. 209-220, Jan 2023.

\bibitem{Hirotomo16} M. Morii, M. Hirotomo, and M. Takita, ``Error-trapping decoding for cyclic codes over symbol-pair read channels," {\it Proc. IEEE Int. Symp. Inf. Theory}, pp. 681-685, 2016.

\bibitem{Ma21} J. Ma and J. Luo, ``On symbol-pair weight distribution for MDS codes and simplex codes over
finite fields,"  {\it Cryptogr. Commun.,} vol. 13, pp. 101-115, 2021.

\bibitem{MaLuo22}  J. Ma and J. Luo, ``MDS symbol-pair codes from repeated-root cyclic
codes," {\it Des. Codes Cryptogr}., vol. 90, pp. 121-137, 2022.

\bibitem{CMa2010} C. Ma, L. Zeng, Y. Liu, D. Feng, and C. Ding, ``The weight enumerator of a class of cyclic codes,"  {\it IEEE Trans. Inf. Theory}, vol. 57, no. 1, pp. 397-402, Jan. 2011.

\bibitem{CE48} C. E. Shannon, ``A mathematical theory of communication," {\it The Bell System Technical Journal},  vol. 27, no. 3, pp. 379-423, Jul. 1948.

\bibitem{Shi2021} M. Shi, F. ${\rm \ddot{O}}$zbudak, and P. Sol${\rm \acute{e}}$, ``Geometric approach to b-symbol Hamming weights of cyclic codes,"  {\it IEEE Trans. Inf. Theory}, vol. 67, no. 6, pp. 3735-3751, Jun. 2021.

\bibitem{Takita17} M. Takita, M. Hirotomo, and M. Morii, ``Error-trapping decoding forcyclic codes over symbol-pair read channels," {\it IEICE Trans. Fundam. Electron., Commun. Comput. Sci}, vol. E100-A, no. 12, pp. 2578-2584, 2017.

\bibitem{Vega23} G. Vega, ``The $b$-symbol weight distributions of all semiprimitive
irreducible cyclic codes,"  {\it Des.{\red,} Codes Cryptogr}., vol. 91, pp. 2213-2221, 2023.

\bibitem{Wang20} C. Wang, ``Bounds of codes over symbol-pair read channels," {\it Proc.
Int. Conf. Comput. Sci. Eng. Technol.}, pp. 1-5, 2020.

\bibitem{Yaakobi12} E. Yaakobi, J. Bruck, and P. H. Siegel, ``Decoding of cyclic codes over symbol-pair read channels," {\it  Proc. IEEE Int. Symp. Inf. Theory}, pp. 2891-2895, Jul. 2012.

\bibitem{Yaakobi16} E. Yaakobi, J. Bruck, and P. H. Siegel, ``Constructions and decoding of cyclic codes over $b$-symbol read channels," {\it IEEE  Trans. Inf. Theory}, vol. 62, no. 4, pp. 1541-1551, Apr. 2016.

\bibitem{Yuan06} J. Yuan, C. Carlet, and C. Ding, ``The weight distribution of a class
of linear codes from perfect nonlinear functions," {\it IEEE Trans. Inf.
Theory}, vol. 52, pp. 712-717, Feb. 2006.

\bibitem{Yan20} J. Yan, X. Xu, and C. Wang, ``The Elias type upper bound of codes over symbol-pair read channels,"  {\it Proc. Int. Symp. Comput. Eng. Intell. Commun.}, pp. 214-217, 2020.

\bibitem{Zhu2021} H. Zhu and M. Shi, ``The $b$-symbol weight hierarchy of the Kasami codes," 2021, {\it arXiv:2112.04019v1}.

\bibitem{Zhu2022} H. Zhu, M. Shi, and F. ${\rm \ddot{O}}$zbudak, ``Complete $b$-symbol weight distribution of some irreducible cyclic codes,"  {\it Des. Codes Cryptogr}., vol. 90 no. 5, pp. 1113-1125, 2022.

\end{thebibliography}
\end{document}